\title{Robustness against Power is \pspace-complete}
\author{Egor Derevenetc\inst{1,2}\and Roland Meyer\inst{2}}
\institute{$^1$Fraunhofer ITWM\quad $^2$University of Kaiserslautern}
\begin{document}

\maketitle
\begin{abstract}

Power is a RISC architecture developed by IBM, Freescale, and several other companies and implemented in a series of POWER processors.
The architecture features a relaxed memory model providing very weak guarantees with respect to the ordering and atomicity of memory accesses.

Due to these weaknesses, some programs that are correct under sequential consistency (SC) show undesirable effects when run under Power.
We call these programs not robust against the Power memory model.
Formally, a program is robust if every computation under Power has the same data and control dependencies as some SC computation.

Our contribution is a decision procedure for robustness of concurrent programs against the Power memory model.
It is based on three ideas.
First, we reformulate robustness in terms of the acyclicity of a happens-before relation.
Second, we prove that among the computations with cyclic happens-before relation there is one in a certain normal form.
Finally, we reduce the existence of such a normal-form computation to a language emptiness problem.
Altogether, this yields a $\pspace$ algorithm for checking robustness against Power.
We complement it by a matching lower bound to show $\pspace$-completeness.
\end{abstract}

\section{Introduction}

To execute code as fast as possible, modern processors reorder operations.
For example, Intel~x86/x86-64 and SPARC processors implement the Total Store Ordering (TSO) memory model~\cite{Owens2009} which allows write buffering: store operations in each thread can be queued and get executed on memory later.
Processors can also execute independent instructions out of program order as soon as the input data and computational units are available for them.
This is an inherent feature of the POWER and ARM microprocessors~\cite{marangetTutorialDraft}.
Moreover, Power and ARM memory models, unlike TSO, do not guarantee store atomicity: one write can become visible to different threads at different times.
They only ensure that all threads see stores to the same memory location in the same order; stores to different memory locations can be seen in different order by different threads.

All these optimizations are usually designed so that a single-threaded program has the illusion that its instructions are executed in program order.
The picture changes in the presence of concurrency.
Concurrent programs are often assumed to have sequentially consistent (SC) semantics~\cite{Lamport79}: each thread executes its operations in program order, stores become visible immediately to all threads.
Concurrent programs may observe a difference from SC when run on a modern processor with a weak memory model.
To see this, consider the MP program in Figure~\ref{Figure:MP}.
SC and TSO forbid the situation where $\areg_1>\areg_2$ upon termination of both threads.
However, this is possible on Power: instruction $c$ can read the value written by $b$, whereas $d$ reads the initial value.

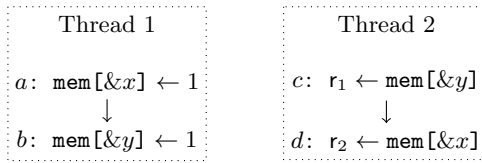
\begin{figure}[t]
\centering
\begin{tikzpicture}[nodes={rectangle,draw=none,fill=none}]
  \matrix[row sep=0.3cm,column sep=1cm,nodes={rectangle,draw=none,fill=none}] {
    \node (thread1) {Thread 1}; & \node (thread2) {Thread 2}; \\
    \node (storex) {$a\colon$ \thestore{\&x}{1}}; & \node (loady) {$c\colon$ \theload{\areg_1}{\&y}}; \\
    \node (storey) {$b\colon$ \thestore{\&y}{1}}; & \node (loadx) {$d\colon$ \theload{\areg_2}{\&x}}; \\
  };
  \draw[->] (storex) edge (storey);
  \draw[->] (loady) edge (loadx);
  \draw[dotted] (thread1.north -| storey.west) rectangle (storey.south -| storey.east);
  \draw[dotted] (thread2.north -| loady.west) rectangle (loadx.south -| loadx.east);
\end{tikzpicture}

\caption{Message Passing (MP) program~\cite{SarkarPLDI2011}.
By $\&x$ and $\&y$ we denote the addresses of the variables $x$ and $y$.
Initially, $x=y=0$.
The first thread writes a message into $x$ and sets flag variable $y$, signifying that the message is written.
The second thread reads the flag and, if it is set, expects to see the message written to $x$ by the first thread.}
\label{Figure:MP}
\end{figure}

We call a program \emph{not robust} against Power~\cite{ShashaSnir88,burckhardt-musuvathi-CAV08,Sen2011,AlglaveM11,BMM11,calin2013,bouajjani2013checking} if 
it exhibits non-SC behaviors when executed under the Power memory model. 
More formally, a program is robust if all its Power computations have the same data and control dependencies as the computations under SC.
That is, for every Power computation there is a sequentially consistent computation which executes the same instructions, all loads read from the same stores in both computations, and stores to the same address happen in the same order.
Robust programs produce the same results on Power and SC architectures, which means verification results for SC remain valid for the weak memory model.

We present an algorithm for deciding robustness against Power.
This is the first decidability result for this architecture and, more generally, the first decidability result for a non-store atomic memory model.
We obtain the algorithm in the following steps.
First, we reformulate robustness in terms of acyclicity of a happens-before relation, using the result by Shasha and Snir~\cite{ShashaSnir88}.
Second, we show that among the computations with cyclic happens-before there is always one in a certain normal form.
Next, we prove that the set of all normal-form computations can be generated by a multiheaded automaton --- an automaton model developed recently in the context of robustness~\cite{calin2013}.
Finally, to check cyclicity of the happens-before relation we intersect this automaton with regular languages.
The program is robust iff the intersection is empty.
This reduces robustness to language emptiness for multiheaded automata.
The algorithm works in space polynomial in the size of the program.
We obtain a matching lower bound by a reduction of SC-reachability to robustness, similar to~\cite{BMM11}.

\paragraph{Related work}
The happens-before relation was formulated by Lamport~\cite{lamport1978time}.
Shasha and Snir~\cite{ShashaSnir88} have shown that a computation violates sequential consistency iff it has a cyclic happens-before relation.
Burckhardt and Musuvathi~\cite{burckhardt-musuvathi-CAV08} proposed the first algorithm for detecting non-robustness against TSO based on monitoring SC computations.
Burnim et al.~\cite{Sen2011} pointed out a mistake in the definition of TSO used in \cite{burckhardt-musuvathi-CAV08} and described monitoring algorithms for the TSO and PSO memory models.
Alglave and Maranget~\cite{AlglaveM11} presented a tool to statically over-approximate happens-before cycles in programs written in x86 and Power assembly, and to insert synchronization primitives (memory fences and syncs) as required for robustness (called stability in their work).
Bouajjani et al.~\cite{BMM11} obtained the first decidability result for robustness: robustness against TSO is $\pspace$-complete for finite-state programs.
In~\cite{bouajjani2013checking} they presented a reduction of robustness against TSO to SC reachability for general programs and an algorithm for optimal fence insertion.

The Power architecture has attracted considerable recent attention.
Alglave et al.~\cite{alglave2013herding}  give an overview of the numerous publications devoted to defining its semantics.
We highlight two Power models: the operational model by Sarkar et al.~\cite{SarkarPLDI2011} and the axiomatic one by Mador-Haim et al.~\cite{mador2012axiomatic}.
These models were extensively tested against the architecture and were proven to be equivalent~\cite{mador2012axiomatic}.
Nevertheless, the operational model is known to forbid certain behaviors that are possible on real hardware\footnote{\url{http://diy.inria.fr/cats/pldi-power/\#lessvs}} and in the axiomatic model\footnote{\url{http://diy.inria.fr/cats/cav-power/}}~\cite{alglave2013herding}.
Fortunately, there is a suggested fix: in Section~4.5 of~\cite{SarkarPLDI2011} one should read \emph{from a coherence-order-earlier write} instead of \emph{from a different write} (two occurrences).
Then, the operational model is believed to strictly and tightly over-approximate Power~\cite{alglavePrivateCommunication}.
In the present paper we stick to the corrected operational model from~\cite{SarkarPLDI2011}.

Finally, we would like to note that ARM has a memory model very similar to that of Power.
The differences and similarities are highlighted by Maranget et al. in~\cite{marangetTutorialDraft,alglave2013herding}.
This fact promises a relatively easy transfer of the proof techniques used in the present paper to the ARM memory model.

\section{Programming Model}
\label{Section:Programs}
We define programs and their semantics in terms of automata.
An \emph{automaton} is a tuple
$\automaton=(\states,\alphabet,\transitions,\initialstate,\finalstates)$,
where $\states$ is a set of states,
$\alphabet$ is an alphabet,
$\transitions\subseteq\states\times(\alphabet\cup\set{\emptysequence})\times \states$ is a set of transitions,
$\initialstate\in\states$ is an initial state,
and $\finalstates\subseteq\states$ is a set of final states.
We call the automaton \emph{finite} if $\states$ and $\alphabet$ are finite.
We write $\state_1\transitionto{\letter}\state_2$ if
$\transition=(\state_1,a,\state_2)\in\transitions$ and denote
$\sourcestateOf{\transition}:=\state_1$, 
$\destinationstateOf{\transition}:=\state_2$, $\labelOf{\transition}=a$.
The \emph{language} of the automaton is 
$\langOf{\automaton}:=\setcond{\sigma\in\alphabet^*}{\initialstate\transitionto{
\sigma}\state\text{ for some }\state\in\finalstates}$.
For a sequence $\sigma=\letter_1\dots\letter_n\in\Sigma^*$ we define
$\lengthOf{\sigma}:=n$, $\sigma[i]:=\letter_i$, $\firstOf{\sigma}:=\letter_1$,
and $\lastOf{\sigma}:=\letter_n$.
We use $\cdot$ for concatenation, $\projectionOf{}{}$ for projection, and $\varepsilon$ for the empty sequence.
Given $\alpha\in\alphabet^*$ and $a,b\in\alpha$, we write $a\before{\alpha}b$ if $\alpha=\alpha_1\cdot a\cdot\alpha_2\cdot b\cdot\alpha_3$.
Given a function $f\colon X\to Y$, $x'\in X$, and $y'\in Y$, we define $f'=f[x' \hookleftarrow y']$ by $f'(x):=f(x)$ for $x\in X\setminus\set{x'}$ and $f'(x'):=y'$.

A program is a finite sequence of threads: $\program = \thread_1\ldots\thread_n$.
A \emph{thread} is an automaton $\thread_{\tid} = (\controlstates_{\tid},\commands,\instructions_{\tid},\initialcontrolstate_{\tid},\controlstates_{\tid})$ with a finite set of control states $\controlstates_{\tid}$, all of them being final, initial state $\initialcontrolstate_{\tid}$, and a set of transitions $\instructions_{\tid}$ called \emph{instructions} and labeled with \emph{commands} $\commands$ defined below.
Each thread has an id from $\tiddomain:=[1..\lengthOf{\program}]$.

Let $\datadomain=\addrdomain$ be a finite domain of values and addresses containing the value $0$.
Let $\regdomain$ be a finite set of registers that take values from $\datadomain$.
The set of commands $\commands$ includes loads, stores, local assignments, and conditionals (\lit*{assume}):
\setlength{\grammarindent}{5em}
\setlength{\grammarparsep}{\parskip}
\begin{grammar}
  <cmd> ::= <reg> $\leftarrow$ "mem["<expr>"]" \syntax{|} "mem["<expr>"]" $\leftarrow$ <expr>
  \alt <reg> $\leftarrow$ <expr> \syntax{|} "assume("<expr>")"
  %\alt "sync" \syntax{|} "lwsync" \syntax{|} "isync"
\end{grammar}
The set of expressions $\exprdomain$ is defined over constants from $\datadomain$, registers from $\regdomain$, and (unspecified) functions $\functiondomain$ over $\datadomain\cup\set{\bot}$.
We assume that these functions return $\bot$ iff any of the arguments is $\bot$.

\subsection{Power Semantics}

We briefly recall the corrected model from~\cite{SarkarPLDI2011}.
The state of a running program consists of the runtime states of threads and the state of a storage subsystem.

The runtime state of a thread includes information about the instructions being executed by the thread.
In order to start executing an instruction, the thread must \emph{fetch} it.
The thread can fetch any instruction whose source control state is equal to the destination state of the last fetched instruction.
Then, the thread must perform any computation required by the semantics of this instruction.
For example, for a load the thread must compute the address being accessed, then read the value at this address, and place it into the target register.
The last step of executing an instruction is \emph{committing} it.
Committing an instruction requires committing all its \emph{dependencies}.
For example, before committing a load the thread must commit all its \emph{address dependencies} --- the instructions which define the values of registers used in the address expression --- and \emph{control dependencies} --- the program-order-earlier (fetched earlier than the load) conditional instructions.
Moreover, all loads and stores accessing the same address must be committed in the order in which they were fetched.

The storage subsystem keeps track, for each address, of the global ordering of stores to this address --- the \emph{coherence order} --- and the last store to this address \emph{propagated} to each thread.
When a thread commits a store, this store is assigned a position in the coherence order which we identify by a rational number --- the \emph{coherence key}.
We choose rational numbers (rather than naturals) to be able to insert a store between any two stores in the coherence order.
The key must be greater than the coherence key of the last store to the same address propagated to this thread.
The committed store is immediately propagated to its own thread.
At some point later this store can be propagated to any other thread, as long as it is coherence-order-later (has a greater coherence key) than the last store to the same address propagated to that thread.
When a thread loads a value from a certain address, it gets the value written by the last store to this address propagated to this thread.
A thread can also forward the value being written by a not yet committed store to a later load reading the same address.
This situation is called an \emph{early read}.

An important property of Power is that it maintains the illusion of sequential consistency for single-threaded programs.
This means that reorderings on the thread level must not lead to situations when, e.g., a program-order-later load reads a coherence-order-earlier store than the one read by a program-order-earlier load from the same address.
In~\cite{SarkarPLDI2011} these restrictions are enforced by the mechanism of restarting operations.
We put these conditions into the requirements on final states of the running program instead.

To keep the paper readable, we omit the descriptions of Power synchronization instructions: \lit*{sync}, \lit*{lwsync}, \lit*{isync}.
All constructions in the paper can be consistently extended to support them with the final result continuing to hold.

Formally, we define the semantics of program $\program$ on Power by a \emph{Power automaton} $\powerautomaton(\program):=(\powerstates,\events,\powertransitions,\initialpowerstate,\finalpowerstates)$. 
Here, $\events$ is a set of labels called \emph{events} that we define together with the transitions.

\subsubsection{State space}
A state of the Power automaton is a pair $\powerstate=(\rtstatesmap,\ststate)\in\powerstates$ with runtime thread states $\rtstatesmap\colon\tiddomain\to\rtstates$ and storage subsystem state $\ststate\in\ststates$.

A runtime thread state $\rtstate=(\fetched,\committed,\loaded)\in\rtstates$ includes
a finite sequence of fetched instructions $\fetched\in\instructions^*$,
a set of indices of committed instructions $\committed\subseteq[1..\lengthOf{\fetched}]$,
and a function giving the store read by a load $\loaded\colon[1..\lengthOf{\fetched}]\to\set{\bot}\cup\setcond{\initialstore_{\anaddr}}{\anaddr\in\addrdomain}\cup\tiddomain\times\naturalnumbers$. 
We use $\initialstore_{\anaddr}$ to denote the initial store of value $0$ to address $\anaddr$.
The initial state of a running thread is $\initialrtstate:=(\emptysequence,\emptyset,\lambda\anindex.\bot)$.

A state of the storage subsystem $\ststate=(\coherence,\propagated)\in\ststates$ includes
a mapping from a store instruction (its thread id and index in the list of fetched instructions) to its position in the coherence order $\coherence\colon(\tiddomain\times\naturalnumbers\cup\setcond{\initialstore_{\anaddr}}{\anaddr\in\addrdomain})\to\rationalnumbers$,
and a mapping from a thread id and an address to the last store to this address propagated to this thread $\propagated\colon\tiddomain\times\addrdomain\to\setcond{\initialstore_{\anaddr}}{\anaddr\in\addrdomain}\cup\tiddomain\times\naturalnumbers$.
The initial state of the storage subsystem is $\initialststate:=(\lambda\tid.\lambda\anindex.0,\lambda\tid.\lambda\anaddr.\initialstore_{\anaddr})$.

The initial state of automaton $\powerautomaton(\program)$ is $\initialpowerstate:=(\lambda\tid.\initialrtstate,\initialststate)$.

\subsubsection{Transition relation}
Fix a state $\powerstate=(\rtstatesmap,\ststate)$ with $\ststate=(\coherence,\propagated)$ and a thread id $\tid\in\tiddomain$ with runtime state $\rtstatesmap(\tid)=(\fetched,\committed,\loaded)$.

Let $\evaluate(\tid,\anindex,\anexpr)$ return the value in $\datadomain$ of expression $\anexpr$ in the $\anindex$'th fetched instruction of thread $\tid$, or $\bot$ when the value is undefined.
Formally $\evaluate(\tid,\anindex,\anexpr):=\aval$, where $\aval$ is computed as follows.
If $\anexpr\in\datadomain$, then $\aval:=\anexpr$.
If $\anexpr=\afun(\anexpr_1\ldots\anexpr_n)$, then $\aval:=\afun(\evaluate(\tid,\anindex,\anexpr_1)\ldots\evaluate(\tid,\anindex,\anexpr_n))$.
Otherwise, $\anexpr=\areg\in\regdomain$.
Let $\anindex'\in[1..\anindex-1]$ be the greatest index, such that $\fetched[\anindex']$ is a local assignment or a load to $\areg$.
If there is no such index, we define $\aval:=0$.
If $\labelOf{\fetched[\anindex']}=\theassign{\areg}{\anexpr_{\aval}}$, then $\aval:=\evaluate(\tid,\anindex',\anexpr_{\aval})$.
If $\labelOf{\fetched[\anindex']}=\theload{\areg}{\anexpr_{\anaddr}}$, then $\aval:=\bot$ if $\loaded[\anindex']=\bot$, $\aval:=0$ if $\loaded[\anindex']=\initialstore_{*}$, and $\aval:=\getvalue(\loaded[\anindex'])$ otherwise (see the definition of $\getvalue$ below).

The expression $\getaddr(\tid,\anindex)$ returns the value of the address argument of the $\anindex$'th fetched instruction of thread $\tid$ and is defined as follows.
We use the special value $\top$ if the instruction has no such argument.
If $\labelOf{\fetched[\anindex]}=\theload{\areg}{\anexpr_{\anaddr}}$ or $\labelOf{\fetched[\anindex]}=\theload{\anexpr_{\anaddr}}{\anexpr_{\aval}}$, then $\getaddr(\tid,\anindex):=\evaluate(\tid,\anindex,\anexpr_{\anaddr})$.
Otherwise, $\getaddr(\tid,\anindex):=\top$.

Similarly, the expression $\getvalue(\tid,\anindex)$ returns the value of the value argument of the $\anindex$'th fetched instruction of thread $\tid$ and is defined as follows.
If $\labelOf{\fetched[\anindex]}=\thestore{\anexpr_{\anaddr}}{\anexpr_{\aval}}$, $\labelOf{\fetched[\anindex]}=\theassign{\areg}{\anexpr_{\aval}}$, or $\labelOf{\fetched[\anindex]}=\theassume{\anexpr_{\aval}}$, then $\getvalue(\tid,\anindex)=\evaluate(\tid,\anindex,\anexpr_{\aval})$.
Otherwise, $\getvalue(\tid,\anindex):=\top$.

The expressions $\addrdeps(\tid,\anindex)$, $\datadeps(\tid,\anindex)$, $\ctrldeps(\tid,\anindex)$ denote the sets of indices of instructions in thread $\tid$ being respectively address, data, and control dependencies of the $\anindex$'th instruction.
The first two can be formally defined in a recursive manner, similar to $\evaluate$.
Also, $\ctrldeps(\tid,\anindex):=\setcond{\anindex'\in[1..\anindex-1]}{\labelOf{\fetched[\anindex']}=\theassume{\anexpr_{\aval}}}$.

Let $\thread_{\tid}=(\controlstates_{\tid},\commands,\instructions_{\tid},\initialcontrolstate_{\tid},\controlstates_{\tid})\in\program$.
The transition relation $\powertransitions$ is the smallest relation defined by the rules below:
\begin{description}
\descitem{POW-FETCH}
Consider $\instruction\in\instructions_{\tid}$ with $\sourcestateOf{\instruction}=\destinationstateOf{\lastOf{\fetched}}$ or $\sourcestateOf{\instruction}=\initialcontrolstate_{\tid}$ if $\fetched=\emptysequence$, then:
$$(\rtstatesmap,\ststate)\transitionto{(\fetchkind,\tid,\instruction)}(\rtstatesmap[\tid\hookleftarrow(\fetched\cdot\instruction,\committed,\loaded)],\ststate).$$

\descitem{POW-LOAD}
If $\fetched[\anindex]$ is a load, $\loaded[\anindex]=\bot$, $\anaddr=\getaddr(\tid,\anindex)\neq\bot$, then:
{\small $$(\rtstatesmap,\ststate)\transitionto{(\loadkind,\tid,\anindex,\anaddr)}(\rtstatesmap[\tid\hookleftarrow(\fetched,\committed,\loaded[\anindex\hookleftarrow\propagated(\tid,\anaddr)])],\ststate).$$}

\descitem{POW-EARLY}
Let $\fetched[\anindex]$ be a load, $\loaded[\anindex]=\bot$, and $\anaddr=\getaddr(\tid,\anindex)\neq\bot$.
Let $\anindex'\in[1..\anindex-1]$ be the greatest index such that $\fetched[\anindex']$ is a store with $\anaddr'=\getaddr(\tid,\anindex')\in\set{\anaddr,\bot}$.
If $\anaddr'\neq\bot$, $\getvalue(\tid,\anindex')\neq\bot$, $\anindex'\not\in\committed$, then:
$$(\rtstatesmap,\ststate)\transitionto{(\loadkind,\tid,\anindex,\anaddr)}(\rtstatesmap[\tid\hookleftarrow(\fetched,\committed,\loaded[\anindex\hookleftarrow(\tid,\anindex')])],\ststate).$$

\descitem{POW-COMMIT}
Consider $\anindex\in[1..\lengthOf{\fetched}]\setminus\committed$ where $\fetched[\anindex]$ is not a store.
Assume $\addrdeps(\tid,\anindex)\cup\datadeps(\tid,\anindex)\cup\ctrldeps(\tid,\anindex)\subseteq\committed$.
Assume $\anaddr=\getaddr(\tid,\anindex)\neq\bot$ and $\aval=\getvalue(\tid,\anindex)\neq\bot$.
If $\anaddr\neq\top$, assume $\setcond{\anindex'\in[1..\anindex-1]}{\getaddr(\tid,\anindex')\in\set{\anaddr,\bot}}\subseteq\committed$.
In case $\fetched[\anindex]$ is a load, assume $\loaded[\anindex]\neq\bot$.
In case $\fetched[\anindex]$ is an $\theassume{}$, assume $\aval\neq 0$.
Then:
$$(\rtstatesmap,\ststate)\transitionto{(\commitkind,\tid,\anindex)}(\rtstatesmap[\tid\hookleftarrow(\fetched,\committed\cup\set{\anindex},\loaded)],\ststate).$$

\descitem{POW-STORE}
Assume all the preconditions from the previous rule hold, but $\fetched[\anindex]$ is a store.
Choose a coherence key $\coherencekey\in\rationalnumbers$ such that there is no $\tid'\in\tiddomain$, $\anindex'\in\naturalnumbers$ for which $\coherence(\tid',\anindex')=\coherencekey$.
Then:
$$(\rtstatesmap,\ststate)\transitionto{(\commitkind,\tid,\anindex,\coherencekey,\anaddr)}(\rtstatesmap[\tid\hookleftarrow(\fetched,\committed\cup\set{\anindex},\loaded)],\ststate'),$$
where $\ststate':=(\coherence[(\tid,\anindex)\hookleftarrow\coherencekey],\propagated)$.

Additionally, this transition is immediately followed by a \descref{POW-PROP} transition propagating the store to the thread where it was committed.

\descitem{POW-PROP}
Consider $\tid'\in\tiddomain$, $\anindex'\in\naturalnumbers$ with $\coherence(\tid',\anindex')\neq\bot$.
Let $\anaddr=\getaddr(\tid',\anindex')$.
Assume $\coherence(\propagated(\tid,\anaddr))<\coherence(\tid',\anindex')$.
Then:

$$(\rtstatesmap,\ststate)\transitionto{(\propagatekind,\tid,\tid',\anindex',\anaddr)}(\rtstatesmap,(\coherence,\propagated[(\tid,\anaddr)\hookleftarrow(\tid',\anindex')])).$$
\end{description}

\subsubsection{Final states}
The set of final states $\finalpowerstates\subseteq\powerstates$ consists of all states $\powerstate=(\rtstatesmap,(\coherence,\propagated))\in\powerstates$, such that for each $\tid\in\tiddomain$, $\rtstatesmap[\tid]=(\fetched,\committed,\loaded)$ the following holds:
\begin{description}
\descitem{FIN-COMM} All instructions are committed: $\committed=[1..\lengthOf{\fetched}]$.
\descitem{FIN-LD} Loads agree with the coherence order. Let $\fetched[\anindex]$ be a load, and $\fetched[\anindex']$ be an earlier load to the same address: $\anindex'<\anindex$, $\getaddr(\tid,\anindex)=\getaddr(\tid,\anindex')$.
Then $\coherence(\loaded[i'])\leq\coherence(\loaded[i])$.
\descitem{FIN-LD-ST} Loads and stores in the same thread agree with the coherence order. Let $\fetched[\anindex]$ be a load, let $\fetched[\anindex']$ be an earlier store to the same address: $\anindex'<\anindex$, $\getaddr(\tid,\anindex)=\getaddr(\tid,\anindex')$.
Then $\coherence(\tid,\anindex')\leq\coherence(\loaded[\anindex])$.
\end{description}
The set of all \emph{Power computations of program $\program$} is $\computationsOf{\program}{\power}:=\langOf{\powerautomaton(\program)}$.
The set of all \emph{SC computations of the program} $\computationsOf{\program}{\seqcons}\subseteq\computationsOf{\program}{\power}$ includes only those computations where each instruction is executed atomically, and stores are immediately propagated to all threads.

\begin{example}\label{Example:Computation}
$\sigma_{\mathit{MP}}=\fetchkind(a)\cdot\commitkind(a)\cdot\propagatekind(a,1)\cdot\fetchkind(b)\cdot\commitkind(b)\cdot\propagatekind(b,1)\cdot\propagatekind(b,2)\cdot\fetchkind(c)\cdot\fetchkind(d)\cdot\loadkind(c)\cdot\loadkind(d)\cdot\commitkind(d)\cdot\commitkind(c)$ is a feasible Power computation of the program MP (Figure~\ref{Figure:MP}):
\begin{itemize}
\item $\fetchkind(a):=(\fetchkind,1,a)$ --- thread 1 fetches store instruction $a$.
\item $\commitkind(a):=(\commitkind,1,1,1,\&x)$ --- thread 1 commits $a$ with $\coherencekey=1$.
\item $\propagatekind(a,1):=(\propagatekind,1,1,1,\&x)$ --- $a$ is propagated to its own thread.
\item $\fetchkind(b):=(\fetchkind,1,b)$ --- thread 1 fetches store instruction $b$.
\item $\commitkind(b):=(\commitkind,1,2,2,\&y)$ --- thread 1 commits $b$ with $\coherencekey=2$.
\item $\propagatekind(b,1):=(\propagatekind,1,1,2,\&x)$ --- the store is propagated to its thread.
\item $\propagatekind(b,2):=(\propagatekind,2,1,2,\&x)$ --- the store is propagated to thread 2.
\item $\fetchkind(c):=(\fetchkind,2,c)$ --- thread 2 fetches load $c$.
\item $\fetchkind(d):=(\fetchkind,2,c)$ --- thread 2 fetches load $d$.
\item $\loadkind(c):=(\loadkind,2,1,\&y)$ --- thread 2 reads value $1$ written by $b$ to $y$, because $b$ was propagated to thread 2.
\item $\loadkind(d):=(\loadkind,2,2,\&x)$ --- thread 2 reads the initial value $0$ of $x$, because $a$ was not propagated to thread 2.
\item $\commitkind(d):=(\commitkind,2,2)$ --- thread 2 commits load $d$.
\item $\commitkind(c):=(\commitkind,2,1)$ --- thread 2 commits load $c$.
\end{itemize}

In the end, \descref{FIN-COMM} holds as all fetched instructions are indeed committed, and \descref{FIN-LD} and \descref{FIN-LD-ST} trivially hold, as none of the threads has two instructions accessing the same address.
\end{example}

\begin{lemma}\label{Lemma:FinalStateIsDeterminedByComputation}
Assume $\initialpowerstate\transitionto{\sigma}\powerstate\in\finalpowerstates$.
Then $\powerstate$ is uniquely determined.
\end{lemma}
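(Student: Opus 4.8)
The plan is to establish the stronger statement that the \emph{entire} run of $\powerautomaton(\program)$ labelled by $\sigma$ is unique, and then to take $\powerstate$ as its last state. So suppose $\initialpowerstate=\powerstate_0\transitionto{\sigma[1]}\powerstate_1\transitionto{\sigma[2]}\cdots\transitionto{\sigma[n]}\powerstate_n$ and $\initialpowerstate=\powerstate'_0\transitionto{\sigma[1]}\powerstate'_1\transitionto{\sigma[2]}\cdots\transitionto{\sigma[n]}\powerstate'_n$ are two runs on $\sigma$ with $\powerstate_n,\powerstate'_n\in\finalpowerstates$, and let $i$ be the least index with $\powerstate_i\neq\powerstate'_i$. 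Then $i\ge 1$ and $\powerstate_{i-1}=\powerstate'_{i-1}$, so both runs fire a $\sigma[i]$-labelled transition out of the common state $\powerstate_{i-1}$ and land in two different states; I want to show this cannot happen.

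The first step is a routine inspection of the transition rules. For every rule other than \descref{POW-LOAD} and \descref{POW-EARLY}, the label of a transition together with the source state already determine which rule was used and what the target state is: \descref{POW-FETCH} carries the fetched instruction, \descref{POW-COMMIT} and \descref{POW-STORE} carry the committed index (and the latter also the chosen coherence key and the address), and \descref{POW-PROP} carries the propagated store and the receiving thread; in each case the target is an explicit function of the source state and of these data, and the four label shapes are pairwise distinct and distinct from the shape $(\loadkind,\tid,\anindex,\anaddr)$ of load events. Moreover \descref{POW-LOAD} is itself deterministic given the source state (it reads $\propagated(\tid,\anaddr)$ off that state), and so is \descref{POW-EARLY} (the index $\anindex'$ it uses is \emph{the greatest} one satisfying the stated condition, hence unique). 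Consequently the only way the two runs can diverge at step $i$ is that $\sigma[i]=(\loadkind,\tid,\anindex,\anaddr)$, one run applies \descref{POW-LOAD} and the other applies \descref{POW-EARLY}; say the unprimed run uses \descref{POW-LOAD} and the primed run uses \descref{POW-EARLY}.

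It remains to derive a contradiction from the fact that both runs end in $\finalpowerstates$. Because \descref{POW-EARLY} is enabled in $\powerstate_{i-1}$, there is a store $\fetched[\anindex']$ in thread $\tid$ with $\anindex'<\anindex$, whose address equals $\anaddr$ (it is forced to, since \descref{POW-EARLY} only fires when that address is defined), and which is \emph{not} committed in $\powerstate_{i-1}$; it is still not committed in $\powerstate_i$, since \descref{POW-LOAD} commits nothing. Along the unprimed run, $\loaded[\anindex]$ is set at step $i$ to $s:=\propagated(\tid,\anaddr)$ evaluated in $\powerstate_{i-1}$. Now I would appeal to four invariants, each a short induction over the rules: (a) once a $\loaded$-entry is different from $\bot$ it never changes again, and once a store has been committed its coherence key never changes (and the initial store $\initialstore_{\anaddr}$ keeps key $0$); (b) $\getaddr(\tid,\anindex)$ and $\getaddr(\tid,\anindex')$ are already defined in $\powerstate_{i-1}$ and therefore depend only on $\loaded$-entries that are frozen from $\powerstate_{i-1}$ on, so both still evaluate to $\anaddr$ in $\powerstate_n$; (c) $\coherence(\propagated(\tid,\anaddr))$ is non-decreasing along any run, because \descref{POW-PROP} only advances $\propagated(\tid,\anaddr)$ to coherence-greater stores; and (d) when \descref{POW-STORE} commits a store to $\anaddr$ in thread $\tid$, the \descref{POW-PROP} step it immediately triggers forces the freshly chosen coherence key to be strictly greater than $\coherence(\propagated(\tid,\anaddr))$ at that moment. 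By \descref{FIN-COMM}, the store $\anindex'$ gets committed at some step after $i$, receiving key $k':=\coherence(\tid,\anindex')$; by (d), $k'$ exceeds $\coherence(\propagated(\tid,\anaddr))$ at that step, which by (c) is at least its value right after step $i$, namely $\coherence(s)$. Hence in $\powerstate_n$ we have $\coherence(\tid,\anindex')=k'>\coherence(s)=\coherence(\loaded[\anindex])$ (using (a) for the outer equalities, and (b) to guarantee that $\fetched[\anindex']$ really is an earlier store to the same address $\anaddr$ as the load $\fetched[\anindex]$), contradicting \descref{FIN-LD-ST}. Therefore no divergence point exists, the two runs coincide, and $\powerstate=\powerstate_n$ is unique.

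\emph{Main obstacle:} everything outside the last paragraph is bookkeeping, since \descref{POW-LOAD} and \descref{POW-EARLY} are the only rules sharing a label shape. The real content is showing that reaching a final state pins the choice between those two rules, for which one needs invariant (d) — the point that the "fresh" coherence key in \descref{POW-STORE} really does beat the key of the last propagated store, which is visible only through the mandatory trailing \descref{POW-PROP} and not through the freshness side condition alone — together with the monotonicity fact (c) and the stability facts (a) and (b).
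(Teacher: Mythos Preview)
Your proof is correct and follows the same core idea as the paper's: the only possible non-determinism on a given label is between \descref{POW-LOAD} and \descref{POW-EARLY}, and taking \descref{POW-LOAD} when \descref{POW-EARLY} is enabled produces a load that reads a coherence-earlier store than a program-order-earlier same-address store, violating \descref{FIN-LD-ST}. The paper's proof packages exactly this argument into forward references to Lemma~\ref{Lemma:EarlyReadLooksLikeThis} and Lemma~\ref{Lemma:LoadFromMemoryLooksLikeThis} (whose syntactic characterizations of the two rules on $\sigma$ are mutually exclusive via condition~(3) of the latter versus condition~(1) of the former), whereas you unfold it inline through your invariants (a)--(d); the content is the same.
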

\begin{proof}
Given a state and an event $\event$, there is at most one transition from this state labeled by $\event$.
This is clear for non-$\loadkind$ events.
For $\loadkind$ events, this follows from Lemma~\ref{Lemma:EarlyReadLooksLikeThis} and Lemma~\ref{Lemma:LoadFromMemoryLooksLikeThis}: if a $\loadkind$ event was produced by a load from memory transition, then condition (3) from Lemma~\ref{Lemma:LoadFromMemoryLooksLikeThis} holds, then condition (1) from Lemma~\ref{Lemma:EarlyReadLooksLikeThis} cannot hold for any store, therefore, the $\loadkind$ event cannot be produced by an early read transition.
\qed
\end{proof}

\begin{lemma}\label{Lemma:SourceDoesNotChange}
Let $\initialpowerstate\transitionto{\sigma}(\rtstatesmap,\ststate)\transitionto{\event}(\rtstatesmap',\ststate')$.
Let $(\fetched,\committed,\loaded)=\rtstatesmap(\tid)$, $(\fetched',\committed',\loaded')=\rtstatesmap'(\tid)$ for some $\tid\in\tiddomain$.
If $\loaded[\anindex]\neq\bot$, then $\loaded'[\anindex]=\loaded[\anindex]$.
\end{lemma}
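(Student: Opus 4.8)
The plan is to argue by a case analysis on which transition rule produces the step $(\rtstatesmap,\ststate)\transitionto{\event}(\rtstatesmap',\ststate')$, showing in every case that the $\loaded$ component of thread $\tid$ is unchanged at index $\anindex$. First I would dispose of the situation where the step acts on a thread $\tid''\neq\tid$: every rule modifies the runtime state of at most one thread, so $\rtstatesmap'(\tid)=\rtstatesmap(\tid)$, and in particular $\loaded'=\loaded$. Hence from now on assume the step acts on $\tid$ itself.

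Next I would handle the rules that do not touch $\loaded$ at all. \descref{POW-COMMIT} and \descref{POW-STORE} only enlarge $\committed$ (with \descref{POW-STORE} additionally updating $\coherence$), and \descref{POW-PROP} only updates $\propagated$; in each of these cases $\loaded'=\loaded$, so the claim holds verbatim. For \descref{POW-FETCH} the new runtime state is $(\fetched\cdot\instruction,\committed,\loaded)$, so again $\loaded'=\loaded$; the only mild point is that the index set $[1..\lengthOf{\fetched}]$ grows by one, but the hypothesis $\loaded[\anindex]\neq\bot$ already forces $\anindex\in[1..\lengthOf{\fetched}]$, so $\loaded'[\anindex]=\loaded[\anindex]$ is well defined and trivially equal.

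The only interesting rules are \descref{POW-LOAD} and \descref{POW-EARLY}, and here the key is their shared precondition: each fires for some load index $k$ with $\loaded[k]=\bot$ and changes the runtime state of $\tid$ to $(\fetched,\committed,\loaded[k\hookleftarrow v])$ for the appropriate value $v$ (namely $\propagated(\tid,\anaddr)$, respectively a pair $(\tid,k')$). Since by hypothesis $\loaded[\anindex]\neq\bot$ whereas $\loaded[k]=\bot$, we get $\anindex\neq k$, and therefore $\loaded'[\anindex]=\loaded[k\hookleftarrow v](\anindex)=\loaded[\anindex]$ directly from the definition of the update operator $f[x'\hookleftarrow y']$. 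This exhausts all rules.

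I do not expect any real obstacle here: the statement is essentially the bookkeeping fact that a load cell is written at most once, guarded by the $\loaded[\cdot]=\bot$ precondition in the two rules that ever write it. The only things to be careful about are notational — not conflating the index $\anindex$ of the lemma with the index ranged over inside the \descref{POW-LOAD}/\descref{POW-EARLY} rules — and invoking the definition of $f[x'\hookleftarrow y']$, which by construction agrees with $f$ everywhere except at $x'$.
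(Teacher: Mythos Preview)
Your proposal is correct and follows exactly the paper's approach: the paper's proof is the one-line observation that the claim ``follows from the $\loaded[\anindex]=\bot$ requirement in \descref{POW-LOAD} and \descref{POW-EARLY} transitions,'' and your case analysis is simply the fully spelled-out version of that remark. There is nothing to add.
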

\begin{proof}
Follows from the $\loaded[\anindex]=\bot$ requirement in \descref{POW-LOAD} and \descref{POW-EARLY} transitions.
\qed
\end{proof}

\begin{lemma}\label{Lemma:OnceComputedDoesNotChange}
Let $\initialpowerstate\transitionto{\sigma}\powerstate\transitionto{\event}\powerstate'$.
Assume $\evaluate(\tid,\anindex,\anexpr)=\aval\neq\bot$ in $\powerstate$.
Then $\evaluate(\tid,\anindex,\anexpr)=\aval$ in $\powerstate'$.
\end{lemma}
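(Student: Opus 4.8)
The plan is to exploit a simple structural feature of $\evaluate$: together with the auxiliary $\getvalue$ applied to a store identifier, it reads only the $\fetched$ and $\loaded$ components of the runtime thread states, and never inspects $\committed$, $\coherence$, or $\propagated$. So I would first record two monotonicity facts for the single step $\powerstate\transitionto{\event}\powerstate'$. \emph{(i)} For every thread, $\fetched$ is either unchanged or extended at the end --- only \descref{POW-FETCH} modifies it --- so every index present in $\powerstate$ denotes the same instruction in $\powerstate'$; in particular, for a register $\areg$ the greatest index $\anindex'\in[1..\anindex-1]$ defining $\areg$ is the same in both states. \emph{(ii)} For every thread and every $\anindex''$ with $\loaded[\anindex'']\neq\bot$ in $\powerstate$ we have $\loaded'[\anindex'']=\loaded[\anindex'']$ in $\powerstate'$: only \descref{POW-LOAD} and \descref{POW-EARLY} change $\loaded$, and by Lemma~\ref{Lemma:SourceDoesNotChange} they do so only at indices where it was $\bot$. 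Hence the only way an $\evaluate$ call could change its outcome would be by consulting a $\loaded$ entry that is $\bot$ in $\powerstate$ but defined in $\powerstate'$.

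Next I would argue that this cannot happen once the value is known to be defined. Consider the recursion tree produced by evaluating $\evaluate(\tid,\anindex,\anexpr)$ in $\powerstate$; it is finite because $\powerstate$ is reachable (the load-reads-from dependencies accumulated along $\sigma$ are acyclic, so the recursion terminates), and I induct on this tree. Constants are state-independent. For $\afun(\anexpr_1\ldots\anexpr_n)$: a defined result forces every $\evaluate(\tid,\anindex,\anexpr_k)$ to be defined, since the functions in $\functiondomain$ are $\bot$-strict, so the induction hypothesis gives the same argument values and hence the same result in $\powerstate'$. For a register $\areg$ at $(\tid,\anindex)$: by \emph{(i)} the defining index $\anindex'$ is the same in both states; if there is none, the value is $0$ in both; if $\fetched[\anindex']$ is a local assignment, I apply the hypothesis to $\evaluate(\tid,\anindex',\anexpr_{\aval})$ --- which is defined, for otherwise $\bot$ would propagate upward through $\bot$-strict functions and $\bot$-transparent register lookups to the root, contradicting $\aval\neq\bot$; if $\fetched[\anindex']$ is a load, then $\loaded[\anindex']\neq\bot$ (same propagation argument), so by \emph{(ii)} it equals the same $\initialstore_{*}$ or store identifier $(\tid'',\anindex'')$ in $\powerstate'$, and in the store case I recurse through $\getvalue(\tid'',\anindex'')=\evaluate(\tid'',\anindex'',\anexpr_{\aval}')$ on the unchanged value expression and use the hypothesis. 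In every case the recursion in $\powerstate'$ mirrors the one in $\powerstate$ and yields $\aval$.

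I expect the main obstacle to be making the induction well founded: the register/load case can jump to an index in another thread, so there is no purely syntactic decreasing measure, and one must appeal to reachability of $\powerstate$ to know the recursion terminates. I would either package this as a small preliminary lemma (``the $\evaluate$-recursion on a reachable state terminates'', by induction on $\lengthOf{\sigma}$, using that a load recorded at step $k$ of $\sigma$ either reads from a store committed before step $k$ or, via \descref{POW-EARLY}, from a store at a strictly smaller index in the same thread whose value was already defined), or restructure the whole argument as an induction on $\lengthOf{\sigma}$ in which the two monotonicity facts form the step case and the statement is maintained for all threads, indices, and expressions simultaneously. Everything else --- constants, function application, the two monotonicity facts, and the $\bot$-propagation observation --- is routine.
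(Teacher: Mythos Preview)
Your argument is correct and is precisely a detailed unfolding of the paper's one-line proof, which simply cites the definition of $\evaluate$, Lemma~\ref{Lemma:SourceDoesNotChange}, and determinism of the functions in $\functiondomain$. The well-foundedness issue you flag is real and is glossed over by the paper's terse statement; either of your proposed fixes (induction on $\lengthOf{\sigma}$, or a separate termination lemma for the $\evaluate$/$\getvalue$ recursion on reachable states) works.
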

\begin{proof}
By definition of $\evaluate$, Lemma~\ref{Lemma:SourceDoesNotChange}, and the fact that functions in $\functiondomain$ are deterministic.
\qed
\end{proof}

\begin{lemma}\label{Lemma:EarlyReadLooksLikeThis}
Consider a computation $\sigma\in\computationsOf{\program}{\power}$.
Then a load $(\tid,\anindex)$ reads a value from a store $(\tid,\anindex')$ via an early read (\descref{POW-EARLY}) transition iff
(1) $\sigma=\sigma_1\cdot(\loadkind,\tid,\anindex,\anaddr)\cdot\sigma_2\cdot(\commitkind,\tid,\anindex',*,\anaddr)\cdot\sigma_3$, $\anindex'\in\intrange{1}{\anindex-1}$
and (2) $\sigma_3$ does not contain events matching $(\commitkind,\tid,\intrange{\anindex'+1}{\anindex-1},*,\anaddr)$.
\end{lemma}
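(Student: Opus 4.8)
The plan is to study, inside $\sigma$, the unique transition that assigns $\loaded[\anindex]$: by Lemma~\ref{Lemma:SourceDoesNotChange} there is exactly one, it is either a \descref{POW-LOAD} or a \descref{POW-EARLY} step, and it carries the label $(\loadkind,\tid,\anindex,\anaddr)$. For the direction ``$\Rightarrow$'' I assume this step is a \descref{POW-EARLY} reading from $(\tid,\anindex')$. In its source state, $(\tid,\anindex')$ is a fetched store with $\getvalue(\tid,\anindex')\neq\bot$, $\getaddr(\tid,\anindex')=\anaddr$ (the side condition $\anaddr'\neq\bot$ forces this), $\anindex'\notin\committed$, and $\anindex'$ is the largest index in $\intrange{1}{\anindex-1}$ naming a fetched store whose address lies in $\set{\anaddr,\bot}$. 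Since $\anindex'\notin\committed$ there while \descref{FIN-COMM} holds at the end, $(\tid,\anindex')$ is committed later by a \descref{POW-STORE} step whose address component is still $\anaddr$ by Lemma~\ref{Lemma:OnceComputedDoesNotChange}; this yields the factorisation in~(1). For~(2) I will actually show that $\sigma$ contains no $(\commitkind,\tid,\anindex'',*,\anaddr)$ with $\anindex'<\anindex''<\anindex$ at all: such an event would be a \descref{POW-STORE} commit of a store $(\tid,\anindex'')$ of address $\anaddr$, and by Lemma~\ref{Lemma:OnceComputedDoesNotChange} that address is already in $\set{\anaddr,\bot}$ in the source state of the \descref{POW-EARLY} step (whether that state precedes or follows the commit of $(\tid,\anindex'')$), so there $(\tid,\anindex'')$ is a fetched store of address in $\set{\anaddr,\bot}$ with index $>\anindex'$, contradicting the maximality of $\anindex'$.

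For the direction ``$\Leftarrow$'' I assume~(1) and~(2) and write $q$ for the source state of the load step. From~(1) and Lemma~\ref{Lemma:OnceComputedDoesNotChange}: in $q$ the instruction $(\tid,\anindex')$ is a fetched store, $\anindex'\notin\committed$ (its commit, the $(\commitkind,\tid,\anindex',*,\anaddr)$ of~(1), is strictly after $q$), $\getaddr(\tid,\anindex')\in\set{\anaddr,\bot}$ (it becomes $\anaddr$ at that commit), and $\getaddr(\tid,\anindex)=\anaddr$. Suppose first the load step is \descref{POW-EARLY}; it reads from the store $(\tid,\anindex'')$ of largest index in $\intrange{1}{\anindex-1}$ naming a fetched store of address in $\set{\anaddr,\bot}$ in $q$, with $\getaddr(\tid,\anindex'')\neq\bot$, $\getvalue(\tid,\anindex'')\neq\bot$, $\anindex''\notin\committed$. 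Maximality gives $\anindex'\le\anindex''$. If $\anindex'<\anindex''$, then $(\tid,\anindex'')$ is an uncommitted store of address $\anaddr$ in $q$, hence committed later (by \descref{FIN-COMM}) by a \descref{POW-STORE} step with address $\anaddr$ (Lemma~\ref{Lemma:OnceComputedDoesNotChange}); that commit cannot lie in $\sigma_3$, since it would match $(\commitkind,\tid,\intrange{\anindex'+1}{\anindex-1},*,\anaddr)$ against~(2), and it cannot lie before the commit of $(\tid,\anindex')$, since the \descref{POW-COMMIT}/\descref{POW-STORE} precondition for committing $(\tid,\anindex'')$ requires every earlier instruction of address in $\set{\anaddr,\bot}$ to be committed first --- and $\getaddr(\tid,\anindex')$ stays in $\set{\anaddr,\bot}$ by Lemma~\ref{Lemma:OnceComputedDoesNotChange} --- whereas $(\tid,\anindex')$ is committed only after $\sigma_2$. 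So $\anindex''=\anindex'$, i.e. the load reads from $(\tid,\anindex')$, as wanted.

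It remains to rule out that the load step is \descref{POW-LOAD}, and I expect this to be the main obstacle, since it needs the coherence-key invariants of the storage subsystem rather than just dependency bookkeeping. In this case the step sets $\loaded[\anindex]:=s$ with $s=\propagated(\tid,\anaddr)$ in $q$, which stays the value of $\loaded[\anindex]$ to the end (Lemma~\ref{Lemma:SourceDoesNotChange}). Since $\anindex'\notin\committed$ in $q$, the store $(\tid,\anindex')$ has not been propagated to $\tid$ --- a store reaches its own thread only via the \descref{POW-PROP} forced right after its \descref{POW-STORE} --- so $s\neq(\tid,\anindex')$. When $(\tid,\anindex')$ is later committed, \descref{POW-STORE} picks its coherence key strictly above $\coherence(\propagated(\tid,\anaddr))$ at that moment; moreover $\coherence(\propagated(\tid,\anaddr))$ never decreases, because each \descref{POW-PROP} into $\tid$ at address $\anaddr$ strictly increases it and the coherence keys of already-committed (and of initial) stores never change. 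Hence $\coherence(\tid,\anindex')>\coherence(s)=\coherence(\loaded[\anindex])$ in the final state, contradicting \descref{FIN-LD-ST} applied to the same-address pair $\anindex'<\anindex$. Thus \descref{POW-LOAD} is impossible, and together with the previous paragraph this completes the proof. The two places where care is needed are exactly this coherence-key argument and, in the \descref{POW-EARLY} subcase, tracking precisely when the addresses of the stores strictly between $\anindex'$ and $\anindex$ become defined --- which is where Lemma~\ref{Lemma:OnceComputedDoesNotChange} and the in-order commit precondition do the work.
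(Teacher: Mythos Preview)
Your proof is correct and follows essentially the same approach as the paper: both directions hinge on the in-order commit precondition for same-address instructions, Lemma~\ref{Lemma:OnceComputedDoesNotChange}, and ruling out a \descref{POW-LOAD} via \descref{FIN-LD-ST}. The only organisational difference is that the paper, for ``$\Leftarrow$'', first fixes the maximal-index candidate $\anindex''$ and then splits on whether $\getaddr(\tid,\anindex'')$ is $\anaddr$ or $\bot$, whereas you split first on whether the load transition is \descref{POW-EARLY} or \descref{POW-LOAD}; your argument for~(2) in ``$\Rightarrow$'' is in fact slightly stronger (you show no matching commit exists anywhere in $\sigma$), but this is a cosmetic difference.
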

\begin{proof}
From left to right.
Assume the load $(\tid,\anindex)$ reads the store $(\tid,\anindex')$ via an early read transition.
Then $(\tid,\anindex)$ must be the latest store to the same address in thread $\tid$ and must not be committed before $\loadkind$ (i.e. committed after it), therefore (1) holds.
If (2) does not hold, then $(\tid,\anindex')$ is not the latest store to address $\anaddr$ in thread $\tid$ before the $\loadkind$ event, since stores to the same address are committed in the order of fetching.
Contradiction.

From right to left.
Let $\initialpowerstate\transitionto{\sigma_1}\powerstate=(\rtstatesmap,\ststate)$.
Consider $\rtstatesmap(\tid)=(\fetched,\committed,\loaded)$.
Let $\anindex''<\anindex$ be the greatest index, such that $\fetched[\anindex'']$ is a store, $\getaddr(\anindex'')\in\set{\anaddr,\bot}$.

Assume $\anindex'<\anindex''$.
If $\getaddr(\anindex'')=\anaddr$, we get a contradiction to (2), since stores to the same address are committed in the order of fetching.
If $\getaddr(\anindex'')=\bot$, then an early read is not possible in state $\powerstate$, and the load reads from the latest propagated store (\descref{POW-LOAD}), which is coherence-order-before the store $(\tid,\anindex')$, which is program-order-before $(\tid,\anindex)$.
This situation is forbidden by \descref{FIN-LD-ST}.

By Lemma~\ref{Lemma:OnceComputedDoesNotChange}, $\getaddr(\tid,\anindex')\in\set{\anaddr,\bot}$, therefore, $\anindex''=\anindex'$.
Assume $\getaddr(\tid,\anindex')=\bot$ or $\getvalue(\tid,\anindex')=\bot$.
Then, again, a load from the latest propagated store takes place, which is impossible (see above).
Therefore, $\getaddr(\tid,\anindex')=\anaddr$ and $\getvalue(\tid,\anindex')\neq\bot$.

Obviously, $\anindex'\not\in\committed$ holds, as each fetched instruction is committed only once, and $(\tid,\anindex')$ is committed after the load takes place, see (1).
All in all, all requirements for the early read from $(\tid,\anindex')$ are met, therefore, an early read transition from state $\powerstate$ is possible.
As shown above, a load from memory transition from the same state leads to $\sigma\not\in\computationsOf{\program}{\power}$, therefore, $(\tid,\anindex)$ reads from the store $(\tid,\anindex')$ via an early read transition.
\qed
\end{proof}

\begin{lemma}\label{Lemma:LoadFromMemoryLooksLikeThis}
Consider a computation $\sigma\in\computationsOf{\program}{\power}$.
Then a load $(\tid,\anindex)$ reads a value from a store $(\tid',\anindex')$ via a load from memory (\descref{POW-LOAD}) transition iff
(1) $\sigma=\sigma_1\cdot(\propagatekind,\tid,\tid',\anindex',\anaddr)\cdot\sigma_2\cdot(\loadkind,\tid,\anindex,\anaddr)\cdot\sigma_3$,
(2) $\sigma_2$ does not contain events matching $(\propagatekind,\tid,*,*,\anaddr)$,
and (3) $\sigma_3$ does not contains events matching $(\commitkind,\tid,[1..\anindex-1],*,\anaddr)$.
\end{lemma}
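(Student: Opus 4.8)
The plan is to prove both implications, closely mirroring the proof of Lemma~\ref{Lemma:EarlyReadLooksLikeThis}. The two facts I would rely on are: (a) a \descref{POW-LOAD} transition for $(\tid,\anindex)$ simply copies the current value of $\propagated(\tid,\anaddr)$ into $\loaded[\anindex]$, where $\anaddr=\getaddr(\tid,\anindex)$ is also the address carried by the $\loadkind$ event, and $\propagated(\tid,\anaddr)$ is modified only by events of the form $(\propagatekind,\tid,*,*,\anaddr)$; (b) a \descref{POW-STORE} transition never overwrites an already-assigned coherence key, and the \descref{POW-PROP} precondition forces the coherence key recorded in $\propagated(\tid,\anaddr)$ to be nondecreasing along the computation. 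I would also use that the $\loadkind$ event of $(\tid,\anindex)$ occurs exactly once in $\sigma$ --- both \descref{POW-LOAD} and \descref{POW-EARLY} require $\loaded[\anindex]=\bot$ and set it to a non-$\bot$ value, which by Lemma~\ref{Lemma:SourceDoesNotChange} is then frozen --- and that this event is produced by one of \descref{POW-LOAD}, \descref{POW-EARLY}.

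For the forward direction I would assume that $(\tid,\anindex)$ reads from $(\tid',\anindex')$ via \descref{POW-LOAD}, so at the firing time $\propagated(\tid,\anaddr)=(\tid',\anindex')$. As this value differs from the initial $\initialstore_{\anaddr}$ and $\propagated(\tid,\anaddr)$ changes only through $(\propagatekind,\tid,*,*,\anaddr)$-events, the last such event before the $\loadkind$ event is exactly $(\propagatekind,\tid,\tid',\anindex',\anaddr)$, which gives clauses (1) and (2). For (3) I would argue by contradiction: if $\sigma_3$ contained some $(\commitkind,\tid,\anindex'',*,\anaddr)$ with $\anindex''<\anindex$, then $\fetched[\anindex'']$ is a store to $\anaddr$ --- a $\commitkind$ event carrying a coherence key and address is emitted only by \descref{POW-STORE} --- and its address is still $\anaddr$ in the final state by Lemma~\ref{Lemma:OnceComputedDoesNotChange}. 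This commit occurs after the $\loadkind$ event, hence after $(\propagatekind,\tid,\tid',\anindex',\anaddr)$, so by monotonicity $\coherence(\propagated(\tid,\anaddr))\geq\coherence(\tid',\anindex')$ at commit time; the \descref{POW-PROP} step that \descref{POW-STORE} triggers immediately afterwards requires $\coherence(\propagated(\tid,\anaddr))<\coherence(\tid,\anindex'')$, whence $\coherence(\tid,\anindex'')>\coherence(\tid',\anindex')=\coherence(\loaded[\anindex])$ (using Lemma~\ref{Lemma:SourceDoesNotChange} for the last equality). This contradicts \descref{FIN-LD-ST} applied to the load $\anindex$ and the earlier same-address store $\anindex''$, which demands $\coherence(\tid,\anindex'')\leq\coherence(\loaded[\anindex])$.

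For the backward direction I would assume clauses (1)--(3). From (1), (2) and the same bookkeeping on $\propagated(\tid,\anaddr)$, its value equals $(\tid',\anindex')$ at the moment the $\loadkind$ event $(\loadkind,\tid,\anindex,\anaddr)$ fires; so it suffices to show that this event is produced by a \descref{POW-LOAD} transition (then the load reads $(\tid',\anindex')$, as required). I would exclude \descref{POW-EARLY}: were the $\loadkind$ event a \descref{POW-EARLY} transition, it would read from some store $(\tid,\anindex_0)$, and the left-to-right part of Lemma~\ref{Lemma:EarlyReadLooksLikeThis} would put an event $(\commitkind,\tid,\anindex_0,*,\anaddr)$ with $\anindex_0\in[1..\anindex-1]$ strictly after the $\loadkind$ event; since that event is unique in $\sigma$, this commit lies in $\sigma_3$ and matches $(\commitkind,\tid,[1..\anindex-1],*,\anaddr)$, contradicting (3). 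Hence the $\loadkind$ event is a \descref{POW-LOAD} transition.

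I expect clause (3) of the forward direction to be the crux: it is the only step that is not pure event-bookkeeping, and it requires combining the mandatory post-\descref{POW-STORE} propagation, monotonicity of propagated coherence keys, and \descref{FIN-LD-ST}. A lesser but still delicate point is aligning the factorization of $\sigma$ supplied by Lemma~\ref{Lemma:EarlyReadLooksLikeThis} with the one in the present statement, which is justified exactly by the uniqueness of the $\loadkind$ event.
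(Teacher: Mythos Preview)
Your proposal is correct and follows essentially the same route as the paper: both directions hinge on the same two observations --- that \descref{POW-LOAD} records the last $(\propagatekind,\tid,*,*,\anaddr)$ event, and that a post-$\loadkind$ same-address commit in thread $\tid$ would violate \descref{FIN-LD-ST} --- and the backward direction uses Lemma~\ref{Lemma:EarlyReadLooksLikeThis} to exclude \descref{POW-EARLY} via clause (3), exactly as the paper does. Your account is in fact more explicit than the paper's on why $(\tid',\anindex')$ is coherence-order-before $(\tid,\anindex'')$ (monotonicity of propagated keys plus the mandatory self-propagation after \descref{POW-STORE}) and on the uniqueness of the $\loadkind$ event; the paper simply asserts these points.
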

\begin{proof}
From left to right.
Assume the load $(\tid,\anindex)$ reads the store $(\tid',\anindex')$ via a load from memory transition.
Then, the load has read from the latest store to address $\anaddr$ propagated to thread $\tid$, i.e., (1) and (2) hold.
Assume (3) does not hold --- $\sigma_3$ contains a commit $(\commitkind,\tid,\anindex'',*,\anaddr)$ and $\anindex''<\anindex$.
Then, $(\tid,\anindex)$ reads from the store $(\tid',\anindex')$, which is coherence-order-before the store $(\tid,\anindex'')$, which is program-order-before $(\tid,\anindex)$.
This situation is forbidden by \descref{FIN-LD-ST}.

From right to left.
By (1), (3), and Lemma~\ref{Lemma:EarlyReadLooksLikeThis}, the $\loadkind$ event was not generated by an early read transition.
Therefore, the event was generated by a load from memory transition, and the load has taken the value from the latest propagated store to address $\anaddr$, which is, by (1) and (2), $(\tid',\anindex')$.
\qed
\end{proof}

\section{Robustness}
\label{Section:Robustness}
Intuitively, a \emph{trace} $\traceOf{\sigma}$ abstracts a program computation $\sigma$ to the dataflow and control-flow relations between instructions.
Formally, the trace of $\sigma$ is a directed graph $\traceOf{\sigma}:=(\nodes,\programorder,\coherenceorder,\sourceorder,\conflictorder)$ with nodes $\nodes$ and four kinds of arcs.
The nodes are instructions together with their thread identifiers and serial numbers (in order to distinguish instructions executed in different threads and the same instruction executed multiple times in the same thread): $\nodes\subseteq(\setcond{\initialstore_{\anaddr}}{\anaddr\in\addrdomain}\cup\bigcup_{\tid\in\tiddomain}\set{\tid})\times\naturalnumbers\times\instructions_{\tid}$.
The \emph{program order} $\programorder$ is the order in which instructions were fetched in each thread.
The \emph{coherence order} $\coherenceorder$ gives the global ordering of writes to each address.
The \emph{source order} $\sourceorder$ shows the store from which a load took its value.
The \emph{conflict order} $\conflictorder$ shows, for a load, the stores to the same address following the store the load took its value from.
We define the \emph{happens-before} relation as $\happensbefore:=\programorder\cup\coherenceorder\cup\sourceorder\cup\conflictorder$.

Formally, consider a computation $\sigma\in\computationsOf{\program}{\power}$.
Let $\initialpowerstate\transitionto{\sigma}\powerstate$ with $\powerstate=(\rtstatesmap,(\coherence,\propagated))$.
By Lemma~\ref{Lemma:FinalStateIsDeterminedByComputation}, $\powerstate$ is uniquely determined.
The trace $\traceOf{\sigma}:=(\nodes,\programorder,\coherenceorder,\sourceorder,\conflictorder)$ is defined as follows.
Assuming $\tid\in\tiddomain$, $\rtstatesmap(\tid)=(\fetched,\committed,\loaded)$, $\anindex\in[1..\lengthOf{\fetched}]$, and similarly for $\tid'$, we have:
\begin{align*}
\nodes:=&\setcond{(\tid,\anindex,\fetched[\anindex])}{\tid\in\tiddomain,\ \anindex\in\naturalnumbers},\\
\programorder:=&\setcond{((\tid,\anindex,\fetched[\anindex]),(\tid,\anindex+1,\fetched[\anindex+1]))}{\\&\hspace{1.5em}\anindex\in[1..\lengthOf{\fetched}-1]},\\
\coherenceorder:=&\setcond{((\tid,\anindex,\fetched[\anindex]),(\tid',\anindex',\fetched[\anindex']))}{\\&\hspace{1.5em}\getaddr(\tid,\anindex)=\getaddr(\tid',\anindex')\text{ and }\coherence(\tid,\anindex)<\coherence(\tid',\anindex')}\ \cup\\&\setcond{(\initialstore_a,(\tid',\anindex',\fetched[\anindex']))}{a=\getaddr(\tid',\anindex')},\\
\sourceorder:=&\setcond{((\tid,\anindex,\fetched[\anindex]),(\tid',\anindex',\fetched'[\anindex']))}{\\&\hspace{1.5em}(\tid,\anindex)=\loaded'[\anindex']}\ \cup\\&\setcond{(\initialstore_a,(\tid',\anindex',\fetched'[\anindex']))}{\initialstore_a=\loaded'(\anindex')},\\
\conflictorder:=&\setcond{(a,b)}{\exists c\colon c\sourceorder a\text{ and }c\coherenceorder{}b}.
\end{align*}

We will also need address $\addrdep$ and data $\datadep$ dependence relations (defined as expected based on $\addrdeps$ and $\datadeps$).
\begin{align*}
\addrdep:=&\setcond{((\tid,\anindex,\fetched[\anindex]),(\tid,\anindex',\fetched'[\anindex]))}{\anindex\in\addrdeps(\tid,\anindex')},\\
\datadep:=&\setcond{((\tid,\anindex,\fetched[\anindex]),(\tid,\anindex',\fetched'[\anindex]))}{\anindex\in\datadeps(\tid,\anindex')}.
\end{align*}
Since $\programorder$ includes all the information from the $\fetched$ component of a thread state, $\addrdep$ and $\datadep$ can be reconstructed from $\programorder$ 
by inspecting the instructions labeling a node. 
They are therefore not included in the trace explicitly.

The \emph{robustness problem} is, given a program $\program$, to check whether the set of all traces under Power is a subset of all traces under SC: $\tracesOf{\program}{\power}\subseteq\tracesOf{\program}{\seqcons}$, where $\tracesOf{\program}{\memorymodel}:=\setcond{\traceOf{\sigma}}{\sigma\in\computationsOf{\program}{\memorymodel}}$ for $\memorymodel\in\set{\power,\seqcons}$.

Shasha and Snir have shown that a trace belongs to an SC computation iff its happens-before relation is acyclic:

\begin{lemma}[\cite{ShashaSnir88}]\label{Lemma:ShashaSnir}
A program $\program$ is robust against Power iff there is no trace $\trace\in\tracesOf{\program}{\power}$ with cyclic $\happensbefore$.
\end{lemma}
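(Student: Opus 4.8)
The plan is to derive the lemma from the classical Shasha--Snir characterization~\cite{ShashaSnir88}: a trace (with components $\programorder,\coherenceorder,\sourceorder,\conflictorder$) is realizable by a sequentially consistent computation if and only if its $\happensbefore$ relation is acyclic. Granting this, robustness $\tracesOf{\program}{\power}\subseteq\tracesOf{\program}{\seqcons}$ is equivalent to the statement that every trace in $\tracesOf{\program}{\power}$ is an SC trace, hence (by the characterization) to the statement that every trace in $\tracesOf{\program}{\power}$ has acyclic $\happensbefore$, which is exactly the claim. So the real work is to re-establish this characterization in the present formalization, split into the two implications below.

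For the easy direction --- every SC trace has acyclic $\happensbefore$ --- I would argue as follows. In $\sigma\in\computationsOf{\program}{\seqcons}$ each instruction executes atomically and stores propagate to all threads at once, so the computation induces a strict total order $\prec$ on the nodes of $\traceOf{\sigma}$, namely the order in which instructions execute. It then suffices to show $\happensbefore\subseteq\prec$: program order is respected since each thread runs in program order; same-address stores must receive strictly increasing coherence keys since each is propagated immediately everywhere, so $\coherenceorder\subseteq\prec$; a load reads the value of the last same-address store that executed before it, giving $\sourceorder\subseteq\prec$; and for a conflict arc $(a,b)$ with $c\sourceorder a$ and $c\coherenceorder b$, if $b\prec a$ then $c\prec b\prec a$ would make $c$ not the last same-address store before the load $a$, a contradiction, so $\conflictorder\subseteq\prec$. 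A relation contained in a strict total order is acyclic.

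For the main direction --- a Power trace with acyclic $\happensbefore$ is an SC trace --- take $\trace=\traceOf{\sigma}$ with $\sigma\in\computationsOf{\program}{\power}$ and $\happensbefore$ acyclic. Since the graph is finite and acyclic, fix a linearization $\prec$ of $\happensbefore$. I would then construct $\sigma'\in\computationsOf{\program}{\seqcons}$ by replaying the nodes in the order $\prec$, executing each instruction atomically and, for stores, propagating immediately to all threads, and prove by induction along $\prec$ that the replay is legal and that its partial trace coincides with the corresponding restriction of $\trace$. The inductive content: (i) when an instruction is replayed, its thread has already replayed all program-order-earlier instructions, since $\programorder\subseteq\prec$; (ii) registers evaluate to the same values as in $\sigma$, since address, data, and control dependencies are determined by the instruction labels (which $\programorder$ records) together with the stores read by loads, and the latter are fixed by $\sourceorder$ and covered by the induction hypothesis --- in particular every $\lit*{assume}$ guard that held in $\sigma$ still holds; (iii) a load $\ell$ with source store $s$ ($s\sourceorder\ell$) indeed reads $s$ under SC: all stores $\coherenceorder$-before $s$ were replayed before $s$ and hence before $\ell$; $s$ was replayed before $\ell$ since $s\sourceorder\ell$ gives $s\prec\ell$; and every store $t$ with $s\coherenceorder t$ satisfies $\ell\conflictorder t$, hence $\ell\prec t$, so no such $t$ has been replayed yet; as $\coherenceorder$ totally orders the same-address stores, $s$ is the last same-address store replayed before $\ell$. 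Since threads fetch the same instruction sequences, same-address stores are replayed exactly in $\coherenceorder$, loads read the same sources, and the conditions \descref{FIN-COMM}, \descref{FIN-LD}, \descref{FIN-LD-ST} hold automatically for an SC replay, we get $\sigma'\in\langOf{\powerautomaton(\program)}$ with $\traceOf{\sigma'}=\trace$; thus $\trace\in\tracesOf{\program}{\seqcons}$, and as $\trace$ was arbitrary, $\program$ is robust.

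I expect the main obstacle to be the bookkeeping in the inductive step of the second direction: showing that replaying in topological order is not just a valid SC schedule but one that reproduces the trace exactly. This forces one to use all four trace components --- and especially the interaction of $\sourceorder$ and $\conflictorder$ --- and to track that the dependency-derived side conditions (guards of $\lit*{assume}$s, definedness of addresses and values) are inherited along the construction. A secondary point worth stating explicitly is that the present notion of trace carries enough information for this replay: $\programorder$ records the full per-thread instruction list, so all register computations are reconstructible, which is what makes the appeal to~\cite{ShashaSnir88} legitimate in this model.
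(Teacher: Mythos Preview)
Your proof is correct. The paper, however, does not prove this lemma at all: it states it with a citation to Shasha and Snir~\cite{ShashaSnir88} and uses it as a black box. So there is no ``paper's own proof'' to compare against; what you have written is a self-contained adaptation of the classical Shasha--Snir argument to the present formalization, which the paper simply takes for granted.

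Your decomposition into the two directions is the standard one, and both are carried out correctly. For the easy direction, the containment $\happensbefore\subseteq{\prec}$ in the SC execution order is exactly right; the $\conflictorder$ case is the only one requiring an argument, and yours is the usual one. For the hard direction, the replay along a topological sort of $\happensbefore$ is again the classical construction; the point you single out as the main obstacle---that the load $\ell$ with source $s$ really reads $s$ in the replay because every same-address store $t$ with $s\coherenceorder t$ satisfies $\ell\conflictorder t$ and hence $\ell\prec t$---is precisely the crux, and you handle it. One small thing you leave implicit and might state: since $\coherenceorder$ is already total on same-address stores and $\coherenceorder\subseteq{\prec}$, the restriction of $\prec$ to same-address stores \emph{is} $\coherenceorder$, so assigning fresh increasing coherence keys in replay order reproduces the original $\coherenceorder$ exactly (not merely a compatible order). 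Your closing remark that $\programorder$ carries the full per-thread instruction sequence, so register evaluation is reconstructible, is the right justification for why the Shasha--Snir characterization transfers to this model.
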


\begin{figure}[t]
\centering
\begin{tikzpicture}[nodes={rectangle,draw=none,fill=none}]
  \matrix[row sep=0.5cm,column sep=1cm,nodes={rectangle,draw=none,fill=none}] {
    & \node (thread1) {Thread 1}; & \node (thread2) {Thread 2}; \\
    \node (initialstorex) {$\initialstore_{\&x}$}; & \node (storex) {$a\colon$ \thestore{\&x}{1}}; & \node (loadx) {$d\colon$ \theload{\areg_2}{\&x}}; \\
    \node(initialstorey) {$\initialstore_{\&y}$}; & \node (storey) {$b\colon$ \thestore{\&y}{1}}; & \node (loady) {$c\colon$ \theload{\areg_1}{\&y}}; \\
  };
  \draw[->] (storex) edge node[midway,left] {$po$} (storey);
  \draw[->] (loady) edge node[midway,left] {$po$} (loadx);
  \draw[->] (storey) edge node[near end,above] {$src$} (loady);
  \draw[->] (loadx) edge node[near end,below] {$cf$} (storex);
  \draw[->] (initialstorex) edge[bend left=15] node[very near end,above] {$src$} (loadx);
  \draw[->] (initialstorex) edge node[near end,below] {$co$} (storex);
  \draw[->] (initialstorey) edge node[near end,above] {$co$} (storey);

  \draw[dotted] (thread1.north -| storey.west) rectangle (storey.south -| storey.east);
  \draw[dotted] (thread2.north -| loady.west) rectangle (loady.south -| loady.east);
\end{tikzpicture}

\caption{Trace of computation $\sigma_{\mathit{MP}}$ from Example~\ref{Example:Computation}.}
\label{Figure:TraceMP}
\end{figure}
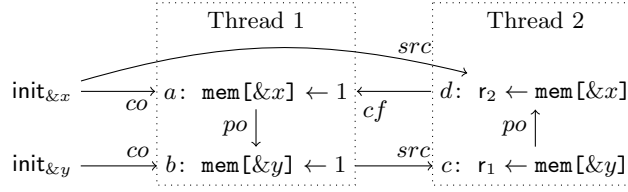

\begin{example}\label{Example:TraceMP}
The trace of computation $\sigma_{\mathit{MP}}$ (Figure~\ref{Figure:TraceMP}) has a cyclic happens-before relation.
By Lemma~\ref{Lemma:ShashaSnir}, this means that the program is not robust.
Indeed, in no SC computation load $d$ can read $0$ whereas $c$ has read $1$.
\end{example}

\section{Normal-Form Computations}
\label{Sections:NormalFormComputations}

We say that a computation $\tau\in\computationsOf{\program}{\power}$ is \emph{in normal form of degree $n$} if there is a partitioning $\tau=\tau_1\cdots\tau_n$, such that all $\fetchkind$ events are in $\tau_1$ (\descref{NF-A}) and events related to different instructions occur in different parts of the computation in the same order (\descref{NF-B}):
\begin{description}
\descitem{NF-A} $\projectionOf{(\tau_2\cdots\tau_n)}{\fetchkind}=\emptysequence$.
\descitem{NF-B} Formally, for $j\in\set{1,2}$ let $\event_j$, $\event_j'$ be events related to instruction $(\tid_j,\anindex_j)$.
If $\event_1,\event_2\in\tau_s$ and $\event_1',\event_2'\in\tau_{s'}$, then $\event_1\before{\tau_s}\event_2$ iff $\event_1'\before{\tau_{s'}}\event_2'$.
\end{description}
In the rest of this section we prove the following theorem:
\begin{theorem}\label{Theorem:NormalFormComputationsAreEnough}
A program is robust iff it has no normal-form computations of degree $\cardinalityOf{\program}+3$ with cyclic happens-before relation.
\end{theorem}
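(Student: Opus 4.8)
The plan is to prove the nontrivial direction: if the program is not robust, then by Lemma~\ref{Lemma:ShashaSnir} there is a computation $\sigma\in\computationsOf{\program}{\power}$ whose trace $\traceOf{\sigma}$ has a cyclic happens-before relation; from $\sigma$ I will construct a normal-form computation $\tau$ of degree $\cardinalityOf{\program}+3$ whose trace still has a cyclic $\happensbefore$. (The reverse direction is immediate: a normal-form computation is in particular a Power computation, so a cyclic happens-before in it witnesses non-robustness via Lemma~\ref{Lemma:ShashaSnir}.) Since happens-before is $\programorder\cup\coherenceorder\cup\sourceorder\cup\conflictorder$ and all four relations are determined by the final state $\powerstate$ (which by Lemma~\ref{Lemma:FinalStateIsDeterminedByComputation} depends only on the computation), it suffices to produce $\tau$ that (a) is a valid Power computation, (b) ends in a final state inducing the same trace on the involved nodes, and (c) has the normal-form shape.

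The first step is to isolate a minimal happens-before cycle in $\traceOf{\sigma}$: pick a cycle $C = (\tid_1,\anindex_1)\,h\,(\tid_2,\anindex_2)\,h\,\cdots\,h\,(\tid_k,\anindex_k)\,h\,(\tid_1,\anindex_1)$ and note that a minimal such cycle visits each thread only a bounded number of times — intuitively each thread contributes at most one ``incoming/outgoing'' pair of arcs, since a $\programorder$-path inside one thread can be contracted to a single $\programorder$ arc. This is what bounds the degree: I expect the $\cardinalityOf{\program}$ summand to come from one ``column'' per thread and the additive constant $3$ from a bounded amount of glue (fetching, the stores/propagations needed to realize the $\coherenceorder$/$\sourceorder$/$\conflictorder$ arcs crossing between threads, and a final cleanup phase committing whatever is left so that \descref{FIN-COMM} holds). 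Then I would build $\tau$ in phases: $\tau_1$ fetches exactly the instructions that are needed — the at most one relevant prefix per thread up to each node on the cycle plus their address/data/control dependencies and the co-earlier stores required by \descref{POW-COMMIT}/\descref{FIN-LD-ST}; the middle parts $\tau_2,\ldots$ replay, thread by thread, the loads, commits, store-commits and propagations of those instructions in an order that (i) respects the per-thread fetch order for same-address accesses, (ii) reproduces the coherence keys and propagation pattern of $\sigma$ restricted to the relevant stores (using density of $\rationalnumbers$ to re-insert keys consistently), and (iii) arranges each cross-thread $h$-arc to be realized exactly as in $\sigma$ — by Lemma~\ref{Lemma:EarlyReadLooksLikeThis} and Lemma~\ref{Lemma:LoadFromMemoryLooksLikeThis} an early read vs.\ a read-from-memory is pinned down by the relative order of the $\loadkind$, the relevant $\commitkind$, and the relevant $\propagatekind$ events, so I only need to place those few events correctly; a final part commits and propagates everything else to reach $\finalpowerstates$.

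The normal-form conditions then have to be checked: \descref{NF-A} holds by construction since all $\fetchkind$ events are pushed into $\tau_1$ — I must verify that delaying all fetches to the front cannot block a later transition, which is fine because \descref{POW-FETCH} only needs the control-flow successor relation and none of the other rules inspect un-fetched instructions, while \descref{POW-COMMIT}'s control-dependency and same-address-prefix conditions are met precisely because I also fetched those dependency/prefix instructions into $\tau_1$. For \descref{NF-B} I would lay out the phases so that, for each thread, ``phase $s$'' touches instructions in a fixed per-thread order and the relative order of any two instructions' events is the same across all phases in which both appear; concretely, I process the instructions of a thread in fetch order within each phase, so $\event_1 \before{\tau_s}\event_2$ iff the instruction of $\event_1$ was fetched before that of $\event_2$, which is phase-independent. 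Finally I count: one phase to fetch, roughly one phase per thread to ``drive'' that thread's loads/commits along the cycle, plus a constant number of phases for store commits-with-propagations, the cross-thread propagation matching, and the terminal cleanup — bounding the total by $\cardinalityOf{\program}+3$.

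The main obstacle I anticipate is the simultaneous satisfaction of \descref{NF-B} and Power's feasibility constraints: the normal form forces the same relative event order in every phase, but a store's commit must be immediately followed by its self-propagation (\descref{POW-STORE}), a load's value is fixed at $\loadkind$-time, and \descref{FIN-LD}/\descref{FIN-LD-ST} constrain which store a load may read relative to earlier same-address accesses in its thread. Reconciling these means choosing the phase structure and the intra-phase order carefully — essentially showing that the dependency/coherence/propagation constraints of the relevant sub-computation form a partial order compatible with a ``columnar'' schedule, and then invoking the two characterization lemmas to certify that each load still reads from the intended store. I would handle this by making the happens-before cycle itself the backbone of the schedule (traverse the cycle, and whenever it enters a thread, run that thread's needed prefix in fetch order), and by a separate lemma that the leftover, non-cycle instructions can always be committed and propagated at the end without disturbing the trace on the cycle nodes.
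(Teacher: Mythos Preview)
Your approach is genuinely different from the paper's, and in its present form it has a real gap: you are missing the key mechanism that makes the normal form appear almost for free. The paper does not build a normal-form witness from scratch. It takes a \emph{shortest} Power computation $\alpha$ with cyclic $\happensbefore$, applies the Cancellation Lemma to delete one instruction $(\tid_\deleted,\anindex_\deleted)$, and observes that the strictly shorter $\alpha'=\alpha\setminus(\tid_\deleted,\anindex_\deleted)$ must have acyclic $\happensbefore$ by minimality, hence $\traceOf{\alpha'}=\traceOf{\beta}$ for some SC computation $\beta$. Writing $\alpha=\alpha_1\cdot\deleted_1\cdots\deleted_{n-1}\cdot\alpha_n$ where the $\deleted_j$ are exactly the events of the deleted instruction, one then sets $\gamma:=\projectionOf{\beta}{\alpha_1}\cdot\deleted_1\cdots\projectionOf{\beta}{\alpha_n}$. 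Now \descref{NF-B} is automatic: inside every part the order is the single global order of $\beta$, so any two instructions' events are ordered consistently across all parts. The degree bound $\cardinalityOf{\program}+3$ is equally mechanical: a single instruction generates at most $\cardinalityOf{\program}+2$ events (fetch, commit, and one propagate per thread), so there are at most $\cardinalityOf{\program}+2$ separators $\deleted_j$ and hence at most $\cardinalityOf{\program}+3$ parts. Feasibility of $\gamma$ and $\traceOf{\gamma}=\traceOf{\alpha}$ are then checked by case analysis using Lemmas~\ref{Lemma:EarlyReadLooksLikeThis} and~\ref{Lemma:LoadFromMemoryLooksLikeThis}.

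By contrast, your direct ``phase-per-thread along a minimal cycle'' construction leaves two things unsupported. First, \descref{NF-B} is a \emph{global} consistency condition across all instructions (from all threads) whose events co-occur in two parts; ``process each thread's instructions in fetch order within a phase'' gives you only a per-thread order, not a single linear order that is the same in every part --- and your phases certainly mix threads (propagations, the cross-thread $\hoporder$ arcs). You need some canonical global linearization to impose, and you do not have one; the paper gets it from $\beta$. Second, your counting does not yield $\cardinalityOf{\program}+3$: ``one phase per thread plus a constant'' is at best a heuristic coincidence with the paper's bound, whose origin is the number of events of one deleted instruction, not the number of threads on the cycle. Until you supply a global order for \descref{NF-B} and a precise reason for the degree bound, the argument does not go through; the minimality/cancellation/SC-reordering trick is exactly what closes both gaps simultaneously.
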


Consider a computation $\sigma\in\computationsOf{\program}{\memorymodel}$.
By $\sigma\setminus(\tid,\anindex)$ we denote the computation obtained from $\sigma$ by deleting all events related to the $\anindex$'th fetched instruction in thread $\tid$.

\begin{lemma}\label{Lemma:Cancellation}
Consider a non-empty computation $\sigma\in\computationsOf{\program}{\power}$.
Then there is a $(\tid_{\deleted},\anindex_{\deleted})$, such that $\sigma'=\sigma\setminus(\tid_{\deleted},\anindex_{\deleted})$ satisfies $\lengthOf{\sigma'}<\lengthOf{\sigma}$ and $\sigma'\in\computationsOf{\program}{\power}$.
\end{lemma}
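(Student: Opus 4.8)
The plan is to exhibit the instruction to delete as the \emph{last-fetched} instruction of some thread. This restriction is essentially forced: deleting any earlier instruction of a thread would leave that thread's sequence of fetched instructions no longer a run of its automaton $\thread_{\tid}$, whereas deleting the last one just drops the final edge of that run. Since $\sigma$ is non-empty, its first event must be a \descref{POW-FETCH} (every other event kind refers to an already-fetched instruction), so some thread $\tid$ has $\fetched\neq\emptysequence$; set $m:=\lengthOf{\fetched}$ and $(\tid_{\deleted},\anindex_{\deleted}):=(\tid,m)$. By \descref{FIN-COMM}, $(\tid,m)$ carries at least its \descref{POW-FETCH} and \descref{POW-COMMIT} events in $\sigma$, so $\lengthOf{\sigma\setminus(\tid,m)}<\lengthOf{\sigma}$ automatically; the content of the lemma is that $\sigma':=\sigma\setminus(\tid,m)\in\computationsOf{\program}{\power}$.

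I would pick $\tid$ so that $(\tid,m)$ is \emph{either not a store, or a store read by no load}, and then check that $\sigma'$ is a valid computation reaching a final state. Three points. (i)~Deleting $(\tid,m)$ changes no $\evaluate$, $\getaddr$, $\getvalue$, $\addrdeps$, $\datadeps$, $\ctrldeps$ of a surviving instruction: a store defines no register, and a last-fetched non-store has no later instruction reading its register. (ii)~No precondition of a surviving \descref{POW-FETCH}/\descref{POW-LOAD}/\descref{POW-EARLY}/\descref{POW-COMMIT}/\descref{POW-STORE}/\descref{POW-PROP} event is invalidated: being last-fetched, $(\tid,m)$ is an address/data/control dependency of nothing and precedes no same-address access later in $\tid$, so its commit is nowhere required; if it is a store, the only further deleted events are its propagations, and the stores that followed it in any thread's propagation order have strictly larger coherence keys, so their \descref{POW-PROP} guards still hold, while Lemmas~\ref{Lemma:LoadFromMemoryLooksLikeThis} and~\ref{Lemma:EarlyReadLooksLikeThis} show every load transition still fires from the same source (this is where ``no load reads $(\tid,m)$'' is used). (iii)~The reached state is still final: \descref{FIN-COMM} only loses a discharged obligation, and in \descref{FIN-LD}/\descref{FIN-LD-ST} the last-fetched $(\tid,m)$ never appears as the \emph{earlier} member of a compared pair, while no compared load-source changes.

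The heart of the proof is therefore the existence claim: \emph{some non-empty thread ends with a non-store, or some non-empty thread ends with a store read by no load}. Suppose not; then every non-empty thread ends with a store that is read by a load, and that reading load must sit in a different thread --- an early read of a thread's trailing store is impossible since the store is fetched last, and a same-thread load-from-memory would give $\mathit{commit}(\text{store})<\mathit{commit}(\text{load})$ (the load has the store's address, so the same-address clause of \descref{POW-STORE} applies) together with $\mathit{commit}(\text{store})<\mathit{prop}<\mathit{load}<\mathit{commit}(\text{load})$ (by Lemma~\ref{Lemma:LoadFromMemoryLooksLikeThis}), a contradiction. Form the finite digraph on the trailing stores with an edge $w\to w'$ whenever some load of $w'$'s thread reads $w$; every vertex has an outgoing edge, so there is a cycle $w_1\to\cdots\to w_r\to w_1$. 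Along it, each store's commit precedes its propagation, which precedes the reading load's \descref{POW-LOAD}, which precedes that load's commit; and whenever the reading load is a dependency of, or accesses the same address as, the trailing store of its thread, its commit precedes that store's commit by \descref{POW-STORE}. Chaining these around the cycle yields a strict circular order on $\mathit{commit}(w_1),\dots,\mathit{commit}(w_r)$, which is absurd. The one gap, and what I expect to be the genuine obstacle, is a cycle link in which the reading load accesses some address different from its thread's trailing store and is not a dependency of it, so neither clause of \descref{POW-STORE} connects the two commits; resolving it should amount to following a real dependency/coherence chain from the reading load up to an instruction whose commit is provably ordered before the trailing store's, rather than jumping directly --- the remaining bookkeeping being routine given Lemmas~\ref{Lemma:LoadFromMemoryLooksLikeThis} and~\ref{Lemma:EarlyReadLooksLikeThis}.
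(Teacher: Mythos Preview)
Your deletion criterion---a trailing store that \emph{no load reads}---is too strong, and the gap you flag in the existence argument is fatal, not a matter of bookkeeping. Take two threads, each a load followed by an unrelated store: thread~1 loads $y$ into a dead register and then stores to~$x$; thread~2 loads $x$ into a dead register and then stores to~$y$. There is a valid Power computation in which both stores commit first (neither load blocks its thread's trailing store, since the addresses differ and there is no address, data, or control dependency), both stores propagate across, and only then do the two loads fire, each reading the other thread's store. Every trailing instruction is now a store that is read by some load, so your existence claim is simply false. Your proposed repair---chasing a dependency or coherence chain from the reading load up to its thread's trailing store---cannot work here, because inside each thread the load and the store are completely independent; there is no chain to follow.

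The paper closes the gap by \emph{weakening the deletion criterion} rather than strengthening the cycle argument. It deletes a trailing store $w$ on which no load or store depends via $(\sourceorder\cup\datadep)^+\cdot\addrdep$ and on which no conditional depends via $(\sourceorder\cup\datadep)^+$. Existence then goes through: if every trailing store violated this, each would feed (through $\sourceorder$ and $\datadep$) an address computation or a guard sitting in another thread, and \descref{POW-STORE} forces that thread's trailing store to commit strictly later---a store cannot commit while an earlier-fetched instruction's address is still undetermined, nor before an earlier guard commits---yielding a strict cycle among finitely many commit events. Deletion-validity also survives the weaker criterion: a load that used to read $w$ now reads some earlier-propagated store, so its value may change, but by construction that value can flow only into the \emph{data} of later stores, never into an address or an $\mathtt{assume}$; since event labels carry addresses and coherence keys but not store values, every surviving transition stays enabled and \descref{FIN-COMM}, \descref{FIN-LD}, \descref{FIN-LD-ST} are unaffected.
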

\begin{proof}
Consider the last fetched instruction in each thread.
If among such instructions there is a non-store instruction, delete it: its result cannot be used by any other instruction.
If all these instructions are stores, delete the one, on which
(1) no load or store depends via $(\sourceorder\cup\datadep)^+\cdot\addrdep$, and
(2) no condition depends via $(\sourceorder\cup\datadep)^+$.

Towards a contradiction, assume there is no such store.
Consider a last fetched (store) instruction in a thread $\tid_1$: $(\tid_1,\anindex_1)$.
Case~1: there is a load or a store $(\tid_2,\anindex_2')$ whose address depends on $(\tid_1,\anindex_1)$.
Case~2: there is a condition $(\tid_2,\anindex_2')$ whose value depends on $(\tid_1,\anindex_1)$.
Consider the last fetched instruction in thread $\tid_2$: $(\tid_2,\anindex_2)$.
It must be a store, and it must have been committed after $(\tid_1,\anindex_1)$: a store can only be committed after all loads and stores fetched before it have their addresses determined (Case~1) and after all preceding conditions are committed (Case~2).

Continuing the reasoning, for any last fetched instruction in a thread $(\tid_j,\anindex_j)$ there is a last instruction in a different thread $(\tid_{j+1},\anindex_{j+1})$ which must have been committed later.
Taking into account finiteness of the number of threads, we get a contradiction.
\qed
\end{proof}

Fix a program $\program$.
Consider a shortest Power computation $\alpha\in\computationsOf{\program}{\power}$ with cyclic $\happensbefore$.
Let $(\tid_{\deleted},\anindex_{\deleted})$ be the instruction determined by Lemma~\ref{Lemma:Cancellation}.
Let $\alpha:=\alpha_1\cdot \deleted_1\cdot\alpha_2\cdot \deleted_2\cdots\alpha_n$, where $\set{\deleted_1\ldots \deleted_{n-1}}$ are the events related to the $\anindex_{\deleted}$'th instruction fetched in thread $\tid_{\deleted}$.
Then $\alpha\setminus(\tid_{\deleted},\anindex_{\deleted}):=\alpha':=\alpha_1\cdot\alpha_2\cdots\alpha_n$.
Since $\alpha'$ is shorter than $\alpha$, its $\happensbefore$ is acyclic.
Therefore, there is a computation $\beta\in\computationsOf{\program}{\seqcons}$ with $\traceOf{\beta}=\traceOf{\alpha'}$.

Computations $\beta$ and $\alpha'$ consist of the same fetch, load, and commit events: fetch events are determined by $\programorder$; address component $\anaddr$ of load and store commit events is determined by $\addrdep$, $\datadep$ (derivable from $\programorder$), and $\sourceorder$; since $\coherenceorder$ is the same for both computations, we can assume that matching store commit events have the same value of coherence key $\coherencekey$.
Notably, $\beta$ can have more propagate events than $\alpha'$ as Power semantics does not guarantee that all stores are propagated to all threads.
Now we reorder events in each part $\alpha_j$ of $\alpha$ in the way they follow in $\beta$.
This gives a computation $\gamma:=\projectionOf{\beta}{\alpha_1}\cdot \deleted_1\cdot\projectionOf{\beta}{\alpha_2}\cdot \deleted_2\cdots\projectionOf{\beta}{\alpha_n}$.
In the rest of the section we show that $\gamma$ is a valid Power computation of program $\program$ and has the same trace as $\alpha$.

\begin{lemma}\label{Lemma:ProgramOrderIsTheSame}
For all $\tid\in\tiddomain$ holds $\projectionOf{\projectionOf{\alpha}{\fetchkind}}{\tid}=\projectionOf{\projectionOf{\gamma}{\fetchkind}}{\tid}$.
\end{lemma}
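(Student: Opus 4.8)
The plan is to verify that both $\projectionOf{\projectionOf{\alpha}{\fetchkind}}{\tid}$ and $\projectionOf{\projectionOf{\gamma}{\fetchkind}}{\tid}$ coincide with the program order of thread $\tid$, reading it off from the traces involved. The starting observation is that in \emph{any} Power computation $\sigma$ the subsequence $\projectionOf{\projectionOf{\sigma}{\fetchkind}}{\tid}$ lists the instructions of the $\fetched$ component of thread $\tid$ in order: \descref{POW-FETCH} is the only rule modifying $\fetched$, and it forces the source control state of the appended instruction to match the destination of the one fetched before it. Since $\programorder$ in a trace is by definition exactly this fetch order within each thread, $\traceOf{\beta}=\traceOf{\alpha'}$ yields $\projectionOf{\projectionOf{\beta}{\fetchkind}}{\tid}=\projectionOf{\projectionOf{\alpha'}{\fetchkind}}{\tid}$ for every $\tid$, and this common sequence is the program order of $\tid$. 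Moreover $(\tid_{\deleted},\anindex_{\deleted})$ was chosen, by Lemma~\ref{Lemma:Cancellation}, as the last instruction fetched in thread $\tid_{\deleted}$, so deleting it removes exactly the final fetch event $\deleted_1$ of that thread and nothing from any other thread; hence $\projectionOf{\projectionOf{\alpha'}{\fetchkind}}{\tid}=\projectionOf{\projectionOf{\alpha}{\fetchkind}}{\tid}$ for $\tid\ne\tid_{\deleted}$, while $\projectionOf{\projectionOf{\alpha}{\fetchkind}}{\tid_{\deleted}}=\projectionOf{\projectionOf{\alpha'}{\fetchkind}}{\tid_{\deleted}}\cdot\deleted_1$.

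Next I would analyze the fetch events of $\gamma=\projectionOf{\beta}{\alpha_1}\cdot\deleted_1\cdots\projectionOf{\beta}{\alpha_n}$. This word is obtained from $\alpha=\alpha_1\cdot\deleted_1\cdots\alpha_n$ by permuting, inside each block $\alpha_j$, its events into the relative order they have in $\beta$, the $\deleted_k$ staying in place; in particular $\gamma$ and $\alpha$ carry the same fetch events. So the fetch events of thread $\tid$ in $\gamma$ are, block by block, the fetch events of thread $\tid$ occurring in $\alpha_j$ listed in $\beta$-order, with $\deleted_1$ inserted after the $\alpha_1$-part when $\tid=\tid_{\deleted}$. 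The key point is a monotonicity claim: for a fixed thread the block index of a fetch event is non-decreasing along that thread's program order. Indeed, in $\alpha$ the fetch events of a thread already occur in program order by the first observation, and the blocks $\alpha_1,\dots,\alpha_n$ occur inside $\alpha$ in index order, so a program-order-earlier fetch of the thread cannot sit in a later block. Since within each block $\beta$-order restricted to a thread's fetches is again program order (the first observation applied to $\beta$), concatenating the per-block lists reproduces exactly the program order of the thread.

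It remains to compare with $\alpha$. For $\tid\ne\tid_{\deleted}$, $\deleted_1$ contributes nothing to the projection, so $\projectionOf{\projectionOf{\gamma}{\fetchkind}}{\tid}$ is the program order of $\tid$, which equals $\projectionOf{\projectionOf{\alpha}{\fetchkind}}{\tid}$ as noted above. For $\tid=\tid_{\deleted}$, since the deleted instruction is last-fetched in its thread, every other fetch of $\tid_{\deleted}$ precedes $\deleted_1$ in $\alpha$ and hence lies in $\alpha_1$, while $\alpha_2,\dots,\alpha_n$ contain no fetch of $\tid_{\deleted}$; thus in $\gamma$ these fetches all appear in $\projectionOf{\beta}{\alpha_1}$ in program order, followed by $\deleted_1$, which is precisely $\projectionOf{\projectionOf{\alpha}{\fetchkind}}{\tid_{\deleted}}$. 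I do not expect a genuine obstacle in this lemma; the only points needing care are the monotonicity of the block partition with respect to program order, and the separate bookkeeping for the block that holds $\deleted_1$ in the thread $\tid_{\deleted}$.
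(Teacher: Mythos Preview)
Your argument is correct and coincides in substance with the paper's proof; both rest on the fact that $\traceOf{\beta}=\traceOf{\alpha'}$ forces the per-thread fetch sequences of $\beta$ and $\alpha'$ to agree, so that reordering each $\alpha_j$ into $\beta$-order leaves the per-thread fetch order unchanged. The paper carries this out as a short chain of projection equalities (using that $\projectionOf{\projectionOf{\alpha_j}{\fetchkind}}{\tid}$ is a subsequence of $\projectionOf{\projectionOf{\beta}{\fetchkind}}{\tid}$ and that projections commute), while you unpack the same reasoning semantically via your monotonicity observation and the separate treatment of $\tid_{\deleted}$; the content is the same.
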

\begin{proof}
Since $\traceOf{\beta}=\traceOf{\alpha'}$, by definition of $\alpha$ and properties of projection, for any $\tid\in\tiddomain$ we have
\begin{align*}
\projectionOf{\projectionOf{\alpha}{\fetchkind}}{\tid}&=\projectionOf{{\projectionOf{\alpha_1}{\fetchkind}}}{\tid}\cdot\projectionOf{\projectionOf{\deleted_1}{\fetchkind}}{\tid}\cdots\projectionOf{{\projectionOf{\alpha_n}{\fetchkind}}}{\tid}\\
&=\projectionOf{\cdots(\projectionOf{\projectionOf{\beta}{\fetchkind}}{\tid})}{(\projectionOf{\projectionOf{\alpha_i}{\fetchkind}}{\tid})}\cdot\projectionOf{\projectionOf{\deleted_i}{\fetchkind}}{\tid}\cdots\\
&=\projectionOf{\projectionOf{\projectionOf{\beta}{\alpha_1}}{\fetchkind}}{\tid}\cdot\projectionOf{\projectionOf{\deleted_1}{\fetchkind}}{\tid}\cdots\projectionOf{\projectionOf{\projectionOf{\beta}{\alpha_n}}{\fetchkind}}{\tid}\\
&=\projectionOf{\projectionOf{(\projectionOf{\beta}{\alpha_1}\cdot \deleted_1\cdots{\projectionOf{\beta}{\alpha_n}})}{\fetchkind}}{\tid}\\
&=\projectionOf{\projectionOf{\gamma}{\fetchkind}}{\tid}.
\end{align*}
\qed
\end{proof}

\begin{lemma}\label{Lemma:SourceRelationIsTheSame}
Consider some $(\tid,\anindex)$ and $(\tid',\anindex')$.
Let $\predicate(\sigma):=\true$ if requirements (1)--(2) from Lemma~\ref{Lemma:EarlyReadLooksLikeThis} or (1)--(3) from Lemma~\ref{Lemma:LoadFromMemoryLooksLikeThis} hold for $\sigma$, and $\predicate(\sigma):=\false$ otherwise.
Then, if $\predicate(\alpha)$ then $\predicate(\gamma)$.
\end{lemma}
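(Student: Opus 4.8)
The plan is to unfold the definition of $\predicate$ and show that each of the conditions (1)--(2) from Lemma~\ref{Lemma:EarlyReadLooksLikeThis} and (1)--(3) from Lemma~\ref{Lemma:LoadFromMemoryLooksLikeThis} is preserved when we pass from $\alpha$ to $\gamma$. All these conditions are statements about the relative order of a small, fixed set of events (a $\loadkind$ event, a $\commitkind$ or $\propagatekind$ event, and the absence of certain $\commitkind$/$\propagatekind$ events in a suffix or an infix), all of which are events related to specific instructions $(\tid,\anindex)$, $(\tid',\anindex')$. Since $\gamma$ is obtained from $\alpha$ by reordering events \emph{within} each block $\alpha_j$ according to $\beta$, the coarse block structure is the same, so I would first argue that for any two events $\event, \event'$ related to instructions, $\event \before{\gamma} \event'$ is determined by (a) which blocks they fall into, and (b) if they fall into the same block, their relative order in $\beta$.

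The key observation I would establish first: since $\traceOf{\beta} = \traceOf{\alpha'}$, the two computations $\beta$ and $\alpha'$ consist of the same fetch, load, and commit events (as already argued in the paragraph preceding Lemma~\ref{Lemma:ProgramOrderIsTheSame}), and in particular for events $\event, \event'$ lying in the same block $\alpha_j$ of $\alpha'$ (equivalently the same block $\projectionOf{\beta}{\alpha_j}$ of $\gamma$), the source and conflict structure being identical means the relevant orderings that $\predicate$ tests are the same. More precisely, I would split into cases according to whether the load event $(\loadkind,\tid,\anindex,\anaddr)$ and its witness store-commit or propagate event lie in the same block or different blocks of the partition. If they lie in different blocks, their relative order in $\gamma$ equals their relative order in $\alpha$, because the block order is preserved and the deleted events $\deleted_i$ sit between blocks in both; hence the ``$\sigma = \sigma_1 \cdot \ldots \cdot \sigma_3$'' shape in conditions (1) is inherited. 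If they lie in the same block $\alpha_j$, I use that within $\alpha_j$ the events are exactly those of $\projectionOf{\beta}{\alpha_j}$, and since $\traceOf{\beta}=\traceOf{\alpha'}$ fixes the source relation, the relative order of these particular events is forced — a load is read-from the same store in $\beta$ as in $\alpha'$, and by Lemmas~\ref{Lemma:EarlyReadLooksLikeThis}--\ref{Lemma:LoadFromMemoryLooksLikeThis} that pins down the required order of the $\loadkind$ event relative to the commit/propagate events. For the ``absence'' conditions (2) in Lemma~\ref{Lemma:EarlyReadLooksLikeThis}, (2)--(3) in Lemma~\ref{Lemma:LoadFromMemoryLooksLikeThis}, I would argue contrapositively: a violating event in the suffix $\sigma_3$ (or infix $\sigma_2$) of $\gamma$ would, by the same block-preservation argument, translate back to a violating event in the corresponding region of $\alpha$, contradicting $\predicate(\alpha)$.

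\textbf{The main obstacle} I anticipate is the bookkeeping around the deleted instruction $(\tid_{\deleted},\anindex_{\deleted})$ and its events $\deleted_1,\ldots,\deleted_{n-1}$: the predicate $\predicate$ is stated for arbitrary $(\tid,\anindex)$ and $(\tid',\anindex')$, which could in principle coincide with $(\tid_{\deleted},\anindex_{\deleted})$, and one must be careful that the witness events referenced in $\predicate(\alpha)$ are not among the $\deleted_i$ — or, if the load/store in question is $(\tid_{\deleted},\anindex_{\deleted})$ itself, that the corresponding events appear in $\gamma$ as the $\deleted_i$ in exactly the same block-positions. Since the $\deleted_i$ are inserted into $\gamma$ at precisely the block boundaries they occupied in $\alpha$, their relative position to every other event is unchanged, so this case is actually the easy one; the subtlety is purely notational. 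The other delicate point is ensuring that a $\propagatekind$ event witnessing condition (1) of Lemma~\ref{Lemma:LoadFromMemoryLooksLikeThis} in $\alpha$ still exists in $\gamma$ — but since $\gamma$'s blocks are built from $\beta$, which has \emph{at least} as many propagate events as $\alpha'$, and the relevant propagate-to-$\tid$-of-$(\tid',\anindex')$ event is needed for the load to read correctly and hence is present in $\beta$ by $\traceOf{\beta}=\traceOf{\alpha'}$, this is fine. I would close by noting the symmetric direction ($\predicate(\gamma) \Rightarrow \predicate(\alpha)$) is not claimed here and is presumably handled separately or not needed.
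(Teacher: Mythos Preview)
Your plan has a genuine gap: you assume that the \emph{same} disjunct of $\predicate$ (early-read or load-from-memory) that witnessed $\predicate(\alpha)$ will also witness $\predicate(\gamma)$, and that the intervals $\sigma_2,\sigma_3$ only get reordered within blocks. Neither is true. Within a block $\alpha_j$ the events are reordered according to $\beta$, and $\beta$ may realise the \emph{other} disjunct for the same source pair. Concretely, suppose $(\tid,\anindex)$ reads $(\tid',\anindex')$ by early read in $\alpha$ (so $\loadkind$ precedes the $\commitkind$ of the store, both in the same $\alpha_j$). In $\beta$, being SC, the same read-from is necessarily realised as a load-from-memory: $\commitkind$ then $\propagatekind$ then $\loadkind$. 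Hence in $\gamma$ the block $\projectionOf{\beta}{\alpha_j}$ has the $\propagatekind$ before the $\loadkind$, and you must now verify conditions (1)--(3) of Lemma~\ref{Lemma:LoadFromMemoryLooksLikeThis} for $\gamma$, not (1)--(2) of Lemma~\ref{Lemma:EarlyReadLooksLikeThis}. Your contrapositive argument (``a violating event in $\sigma_3$ of $\gamma$ translates back to a violating event in $\alpha$'') then breaks down, because the interval you must inspect in $\gamma$ is not the image of any interval tested in $\alpha$.

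The paper's proof confronts this directly: the case analysis is three-dimensional---which disjunct holds in $\alpha$, which disjunct holds in $\beta$, and whether the two key events lie in the same $\alpha_j$ or different ones---and the verification of the absence conditions for $\gamma$ uses properties of $\beta$ (obtained by applying Lemmas~\ref{Lemma:EarlyReadLooksLikeThis} and~\ref{Lemma:LoadFromMemoryLooksLikeThis} to $\beta$), not just of $\alpha$. For instance, to rule out a forbidden $\propagatekind$ in the infix of $\gamma$ one uses that the corresponding infix $\beta_2$ of $\beta$ is already free of such events; to rule out forbidden commits across later blocks one uses that stores to the same address are committed in the same coherence order in $\alpha'$ and $\beta$. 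A further ingredient you mention but do not exploit correctly is that the $\deleted_l$ events belong to a \emph{last} fetched instruction of some thread, hence cannot be commits of program-order-earlier stores---this is what disposes of the block separators. Your sketch needs to incorporate the $\beta$-side analysis and the possibility of the disjunct switching to become a proof.
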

\begin{proof}
The proof is a case consideration: which of the two cases holds in the definition of $\predicate$ hold for $\sigma$, for $\alpha$, and whether the distinguished $\loadkind$ and $\commitkind$ events are located in the same part $\alpha_j$.
We consider two of the cases. The other are similar.

Assume requirements (1)--(2) from Lemma~\ref{Lemma:EarlyReadLooksLikeThis} hold for $\alpha$ and requirements (1)--(3) from Lemma~\ref{Lemma:LoadFromMemoryLooksLikeThis} holds for sequentially consistent computation $\beta$.
If $\loadkind$ and $\commitkind$ events are in the same part, then
$\alpha=\alpha_1\cdot \deleted_1\cdots(\alpha_j'\cdot b\cdot\alpha_j''\cdot c\cdot d\cdot\alpha_j''')\cdot \deleted_j\cdots\alpha_n$,
$\beta=\beta_1\cdots c\cdot d\cdot\beta_2\cdot b\cdot\beta_3$, where
$b=(\loadkind,\tid,\anindex,\anaddr)$, $c=(\commitkind,\tid,\anindex')$, $d=(\propagatekind,\tid,\tid,\anindex',\anaddr)$, $\anindex'<\anindex$.
Consequently, $\gamma
=\projectionOf{\beta}{\alpha_1}\cdot \deleted_1\cdots\projectionOf{\beta}{\alpha_j}\cdot \deleted_j\cdots\projectionOf{\beta}{\alpha_n}
=\projectionOf{\beta}{\alpha_1}\cdot \deleted_1\cdots(\projectionOf{\beta_1}{\alpha_j}\cdot d\cdot\projectionOf{\beta_2}{\alpha_j}\cdot b\cdot\projectionOf{\beta_3}{\alpha_j})\cdot \deleted_j\cdots\projectionOf{\beta}{\alpha_n}$ --- looks like a read from memory situation.
We check requirements (1)--(3) of Lemma~\ref{Lemma:LoadFromMemoryLooksLikeThis} then.
First, $\projectionOf{\beta_2}{\alpha_j}$ must have no $\propagatekind$ events to thread $\tid$ with the address $\anaddr$ --- holds as $\beta_2$ does not have them.
Second, $\projectionOf{\beta_3}{\alpha_j}$ must have no commits of earlier writes in thread $\tid$ --- holds as $\beta_3$ does not have them.
Third, $\projectionOf{\beta}{\alpha_l}=\projectionOf{(\beta_1\cdot\beta_2\cdot\beta_3)}{\alpha_l}$, $l\in\intrange{i+1}{n}$ must have no commit events for stores with indices $\intrange{1}{i-1}$, the same address and thread id.
Consider $\projectionOf{\beta_1}{\alpha_l}$ --- if it has such an event $e$, then two stores to the same address, $e$ and $c$, are committed in different order in $\alpha'$ and $\beta$, which is impossible due to $\traceOf{\alpha'}=\traceOf{\beta}$.
Consider $\projectionOf{\beta_2}{\alpha_l}$ --- it does not have such an event, because $\beta_2$ does not have $\propagatekind$ events to address $\anaddr$, therefore, it does not have commits of own stores there too.
Consider $\projectionOf{\beta_3}{\alpha_l}$ --- it does not have such an event, because $\beta_3$ does not.
Finally, none of $\deleted_l$ events, $l\in\intrange{i+1}{n-1}$, must be a commit of earlier writes in thread $\tid$ --- holds, as these events belong to the last fetched instruction of a thread.

Consider the case when $\loadkind$ and $\commitkind$ events are in different parts, i.e.
$\alpha=\alpha_1\cdot \deleted_1\cdots(\alpha_j'\cdot b\cdot \alpha_j'')\cdots(\alpha_k'\cdot c\cdot d\cdot\alpha_k'')\cdots\alpha_n$,
$\beta=\beta_1\cdot c\cdot d\cdot \beta_2\cdot b\cdot\beta_3$, where $b,c,d$ are defined as before and $\anindex'<\anindex$.
Then, $\gamma
=\projectionOf{\beta}{\alpha_1}\cdot \deleted_1\cdots\projectionOf{\beta}{\alpha_j}\cdot \deleted_j\cdots\projectionOf{\beta}{\alpha_k}\cdots\projectionOf{\beta}{\alpha_n}
=\projectionOf{\beta}{\alpha_1}\cdot \deleted_1\cdots\projectionOf{\beta_1}{\alpha_j}\cdot\projectionOf{\beta_2}{\alpha_j}\cdot b\cdot\projectionOf{\beta_3}{\alpha_j}\cdot \deleted_j\cdots\projectionOf{\beta_1}{\alpha_k}\cdot c\cdot d\cdot\projectionOf{\beta_2}{\alpha_k}\cdot\projectionOf{\beta_3}{\alpha_k}\cdot \deleted_k\cdots\projectionOf{\beta}{\alpha_n}$ --- looks like an early read case.
Therefore, one must check that $\projectionOf{\beta_2}{\alpha_k}\cdot\projectionOf{\beta_3}{\alpha_k}\cdot \deleted_k\cdots\projectionOf{\beta}{\alpha_n}$ has no $\commitkind$ events matching $(\commitkind,\tid,\intrange{\anindex'+1}{\anindex-1},*,\anaddr)$.
Consider $\projectionOf{\beta_2}{\alpha_k}$ --- does not have such events, because they would be immediately followed by a $\propagatekind$ event to thread $\tid$ and address $\anaddr$, which contradicts requirement (2) of Lemma~\ref{Lemma:LoadFromMemoryLooksLikeThis}.
Consider $\projectionOf{\beta_3}{\alpha_k}$ --- does not have such events, because $\beta_3$ does not have them by requirement (3) of Lemma~\ref{Lemma:LoadFromMemoryLooksLikeThis}.
Consider $\projectionOf{\beta}{\alpha_l}$, $l\in\intrange{j+1}{n}$ --- does not have such events, because $\alpha_l$ do not have them by requirement (2) of Lemma~\ref{Lemma:EarlyReadLooksLikeThis}.
Finally, $\deleted_l$, $l\in\intrange{j+1}{n-1}$ belong to the last fetched instruction of a thread, therefore do not contain the described $\commitkind$ events.
\qed
\end{proof}

\begin{lemma}\label{Lemma:ReshuffledComputationIsFeasible}
$\gamma\in\computationsOf{\program}{\power}$.
\end{lemma}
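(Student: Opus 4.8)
The plan is to verify that $\gamma$ is a valid Power computation by checking, step by step along $\gamma$, that every event is enabled when it fires and that the resulting run ends in a final state. The key observation making this feasible is that $\gamma$ is built from $\alpha$ by only reordering events \emph{within} each part $\alpha_j$, keeping the separator events $\deleted_j$ fixed, and using the order induced by the SC computation $\beta$ — so locally $\gamma$ looks like $\beta$, and globally it looks like $\alpha$. First I would record the bookkeeping already done in the running text: $\gamma$, $\alpha'$ and $\beta$ agree on all fetch, load and commit events (with the same coherence keys), and $\gamma$'s program order per thread equals $\alpha$'s by Lemma~\ref{Lemma:ProgramOrderIsTheSame}. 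This immediately handles \descref{POW-FETCH}: the fetched sequence of each thread in $\gamma$ is the same as in $\alpha$, so every fetch is enabled in the same way.

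Next I would go through the remaining rules. For \descref{POW-LOAD} and \descref{POW-EARLY}: by Lemma~\ref{Lemma:EarlyReadLooksLikeThis} and Lemma~\ref{Lemma:LoadFromMemoryLooksLikeThis}, whether a load $(\tid,\anindex)$ reads a given store via early read or from memory is characterized purely by the pattern of surrounding $\loadkind$, $\commitkind$ and $\propagatekind$ events; Lemma~\ref{Lemma:SourceRelationIsTheSame} shows exactly that if the relevant pattern held in $\alpha$ then it holds in $\gamma$, so every $\loadkind$ event in $\gamma$ reads the same store as in $\alpha$, hence $\loaded$ is the same, hence (via Lemma~\ref{Lemma:SourceDoesNotChange} and Lemma~\ref{Lemma:OnceComputedDoesNotChange}) all $\evaluate$, $\getaddr$, $\getvalue$ values agree with $\alpha$. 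For \descref{POW-COMMIT} and \descref{POW-STORE}: the preconditions are about dependencies being committed, earlier same-address accesses being committed, and (for \lit*{assume}) the guard evaluating nonzero; the dependency and guard conditions transfer because the values are the same as in $\alpha$, while the ``commit order on same-address accesses'' and ``dependencies committed before'' conditions are \emph{per-thread} ordering constraints — here I would argue that within a single thread the relative order of commit events in $\gamma$ is forced: since $\traceOf{\alpha'}=\traceOf{\beta}$ they share coherence order, and within-thread commit order of same-address accesses is determined by program order (Power commits same-address accesses in fetch order), which is preserved. For \descref{POW-PROP}: the enabling condition is that the propagated store is coherence-after the last store propagated to that thread; since coherence keys are unchanged and the relative order of $\propagatekind$ events to a fixed $(\tid,\anaddr)$ pair is inherited from $\beta$ (which is itself consistent with the shared coherence order), each propagate is enabled.

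I would close by checking the final-state conditions \descref{FIN-COMM}, \descref{FIN-LD}, \descref{FIN-LD-ST}: $\gamma$ contains exactly the same commit events as $\alpha$ (all instructions committed), and conditions \descref{FIN-LD}/\descref{FIN-LD-ST} are statements purely about $\coherence$ and $\loaded$ in the final state, both of which coincide with $\alpha$'s (same coherence keys, same source relation by Lemma~\ref{Lemma:SourceRelationIsTheSame}); since $\alpha$ is a valid computation these hold. The main obstacle I anticipate is the \descref{POW-COMMIT}/\descref{POW-STORE} step: one must be careful that reshuffling inside $\alpha_j$ according to $\beta$ does not, e.g., move a commit before a dependency it needs, or swap the order of two same-address commits. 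The safeguard is that $\beta$ is itself a legal (SC) computation of the \emph{same} program with the \emph{same} trace as $\alpha'$, so any such per-thread ordering constraint is already respected inside $\beta$, hence inside each $\projectionOf{\beta}{\alpha_j}$; the cross-part constraints are then handled because the $\deleted_j$ belong to last-fetched instructions (so they are never dependencies of anything) and the parts appear in $\gamma$ in the same coarse order as in $\alpha$. A careful but essentially mechanical case analysis along these lines completes the proof.
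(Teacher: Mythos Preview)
Your plan is correct and follows essentially the same route as the paper: a step-by-step walk along $\gamma$, using that within each part the order comes from $\beta$ and across parts from $\alpha$, together with Lemmas~\ref{Lemma:ProgramOrderIsTheSame} and~\ref{Lemma:SourceRelationIsTheSame} to transfer the enabling conditions and the source relation. The one point the paper makes more explicit is that the argument is framed as an induction along $\gamma$ carrying the hypothesis ``all loads satisfied so far read from the same stores as in $\alpha$''; this is what lets you conclude $\getaddr(\tid,\anindex)\neq\bot$ at the moment a $\loadkind$ event fires without circularly invoking Lemmas~\ref{Lemma:EarlyReadLooksLikeThis}/\ref{Lemma:LoadFromMemoryLooksLikeThis} on $\gamma$ itself, so you should state that inductive invariant explicitly when you write the proof out.
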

\begin{proof}
We proceed by induction.
Assume
(1) $\gamma=\gamma_1\cdot\event\cdot\gamma_2$,
(2) $\initialpowerstate\transitionto{\gamma_1}\powerstate$,
and (3) all loads satisfied in $\gamma_1$ have read from the same stores as in $\alpha$.
We show that $\initialpowerstate\transitionto{\gamma_1\cdot\event}\powerstate'$ and all loads satisfied in $\gamma_1\cdot\event$ have read from the same stores as in $\alpha$.
Let $\powerstate=(\rtstatesmap,\ststate)$ and $\rtstatesmap(\tid)=(\fetched,\committed,\loaded)$.
Consider the event $\event$.
\begin{description}

\item[$(\fetchkind,\tid,\anindex)$]
A transition labeled by $\event$ from state $\powerstate$ is feasible due to Lemma~\ref{Lemma:ProgramOrderIsTheSame} and the fact that feasibility of a fetch transition is conditioned solely on the previous $\fetchkind$ transition with the same thread id.

\item[$(\loadkind,\tid,\anindex,\anaddr)$]
For the transition to be feasible, $\getaddr(\anindex)=\anaddr$ must hold.
In order to have $\getaddr(\tid,\anindex)\neq\bot$, all loads in thread $\tid$, on which $\getaddr(\tid,\anindex)$ depends, must be satisfied.
Note that these loads are the same in $\alpha$ and $\gamma$ due to Lemma~\ref{Lemma:ProgramOrderIsTheSame}.
Since $\alpha\in\computationsOf{\program}{\power}$, these $\loadkind$ events occurred before $\event$ in $\alpha$.
Let $\event'$ be one of these $\loadkind$ events.
If $\event'\in\alpha_i$ and $\event\in\alpha_j$, $i<j$, or $\event'\in\setcond{\deleted_i}{i\in\intrange{1}{n-1}}$, or $\event\in\setcond{\deleted_i}{i\in\intrange{1}{n-1}}$, then $\event'$ and $\event$ are located in $\gamma$ in the same order.
If $\event',\event\in\alpha_i$, then $\event',\event\in\beta$.
Since the $\programorder$ components of $\traceOf{\alpha}$ and $\traceOf{\beta}$ match up to a single deleted arc, $\event'$ and $\event$ are located in $\beta$ (therefore, in $\projectionOf{\beta}{\alpha_i}$ and $\gamma$) in this order.
By inductive assumption~(3) and the fact that functions in $\functiondomain$ are deterministic, $\getaddr(\tid,\anindex)=\anaddr$ holds.

Assume the load $(\tid,\anindex)$ has read from a store $(\tid',\anindex')$ in $\alpha$.
Then, by Lemmas~\ref{Lemma:EarlyReadLooksLikeThis}, \ref{Lemma:LoadFromMemoryLooksLikeThis}, \ref{Lemma:SourceRelationIsTheSame}, either conditions (1)--(3) of Lemma~\ref{Lemma:LoadFromMemoryLooksLikeThis} hold, or conditions (1)--(2) of Lemma~\ref{Lemma:EarlyReadLooksLikeThis} hold.
In the former case, $(\propagatekind,\tid,\tid',\anindex',\anaddr)$ is the last $\propagatekind$ event to $\tid$ with address $\anaddr$, therefore, a load from memory transition reading $(\tid',\anindex')$ is feasible from state $\powerstate$.
In the latter case, $(\tid',\anindex')$ is the latest non-committed store to address $\anaddr$, and an early read transition reading $(\tid',\anindex')$ is possible.
The proof that $\getaddr(\tid',\anindex')\neq\bot$ is similar to the proof that $\getaddr(\tid,\anindex)\neq\bot$.

\item[$(\commitkind,\tid,\anindex)$]
The proof of $\getaddr(\tid,\anindex)\neq\bot$ and $\getvalue(\tid,\anindex)\neq\bot$ is similar to the proof of $\getaddr(\tid,\anindex)\neq\bot$ in the previous case.
If $\fetched[\anindex]$ is a load or a store, there must be no preceding loads and stores to unknown addresses, which holds and can be proven in a similar way.
If $\fetched[\anindex]$ is a load, requirement $\loaded[\anindex]\neq\bot$ holds for the same reasons.
If $\fetched[\anindex]$ is a conditional, requirement $\getvalue(\tid,\anindex)\neq 0$ holds by inductive assumption (3), the fact that functions in $\functiondomain$ are deterministic, and the fact that $\alpha\in\computationsOf{\program}{\power}$.

\item[$(\commitkind,\tid,\anindex,\coherencekey,\anaddr)$]
Value $\coherencekey$ is unique, since it was unique in $\alpha$, and $\alpha$ and $\gamma$ consist of the same commit events.
We check $\coherence(\propagated(\tid,\anaddr))<\coherencekey$.
Assume it does not hold.
Then, there is $\event'=(\propagatekind,\tid,\tid',\anindex',\anaddr)$, where $\coherence(\tid',\anindex')>\coherencekey$, and $\event'$, $\event$ are located in $\gamma$ in this order.
If $\event'\in\alpha_i$, $\event\in\alpha_j$, $i<j$, or $\event'\in\setcond{\deleted_i}{i\in\intrange{1}{n-1}}$, or $\event\in\setcond{\deleted_i}{i\in\intrange{1}{n-1}}$, these events are located in $\alpha$ in this order, which contradicts $\alpha\in\computationsOf{\program}{\power}$.
If $\event',\event\in\alpha_i$, these events are located in $\beta$ in this order, which contradicts $\beta\in\computationsOf{\program}{\power}$.

This transition is immediately followed by a $\propagatekind$ transition in $\gamma$, since it did so in $\alpha$ and $\beta$ (unless $e\in\setcond{\deleted_i}{i\in\intrange{1}{n-1}}$, which is a simpler case), and by properties of projection.

\item[$(\propagatekind,\tid,\tid',\anindex',\anaddr)$]
The requirement $\coherence(\propagated(\tid,\anaddr))<\coherence(\tid',\anindex')$ is proven similarly to $\coherence(\propagated(\tid,\anaddr))<\coherencekey$ in the previous case.

\end{description}

As shown above, $\initialpowerstate\transitionto{\gamma}\powerstate$.
What is left to check, is that $\powerstate\in\finalpowerstates$.
The requirement that all fetched instructions are committed trivially holds: $\beta$ includes the same commit events as $\alpha'$, therefore, by definition, $\gamma$ contains the same commit events as $\alpha$.
The other two requirements that loads and stores agree with the coherence order hold due to Lemma~\ref{Lemma:ProgramOrderIsTheSame}, the inductive assumption (3), and the fact that $\alpha$ and $\gamma$ consist of the same commit events (i.e.\ the coherence keys of matching stores are equal in these computations).
\qed
\end{proof}

\begin{lemma}\label{Lemma:ReshuffledComputationHasTheSameTrace}
$\traceOf{\gamma}=\traceOf{\alpha}$
\end{lemma}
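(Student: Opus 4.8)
The plan is to check that $\traceOf{\gamma}$ and $\traceOf{\alpha}$ agree componentwise; given Lemmas~\ref{Lemma:ProgramOrderIsTheSame} and~\ref{Lemma:ReshuffledComputationIsFeasible} this is essentially bookkeeping. The first thing I would record is a purely structural fact: $\gamma$ and $\alpha$ consist of exactly the same events. Inside each part $\projectionOf{\beta}{\alpha_j}$ is only a reordering of $\alpha_j$ --- every event of $\alpha_j$ occurs in $\beta$, since $\beta$ and $\alpha'$ share all fetch, load and (store-)commit events, the matching store commits carry the same coherence key, and every propagate event of $\alpha'$ occurs in $\beta$ as well (being sequentially consistent, $\beta$ propagates every store to every thread) --- while the $\deleted_i$ stay in their original positions. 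By Lemma~\ref{Lemma:ReshuffledComputationIsFeasible}, $\gamma$ ends in a final state, and $\alpha\in\computationsOf{\program}{\power}$ does too, so by Lemma~\ref{Lemma:FinalStateIsDeterminedByComputation} the final states, and hence the traces, of both computations are well defined.

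Next I would go through the four arc relations. For $\nodes$ and $\programorder$: both are read off the per-thread $\fetched$ sequences of the final state, which coincide for $\alpha$ and $\gamma$ because, by Lemma~\ref{Lemma:ProgramOrderIsTheSame}, the two computations have the same $\fetchkind$ events in the same order within every thread; the instruction labels then induce the same $\addrdep$ and $\datadep$. For $\coherenceorder$: each store-commit event of $\gamma$ carries the same coherence key as the matching store-commit event of $\alpha$, and $\initialstore_a$ has key $0$ in both, so the final $\coherence$ function is the same; since $\getaddr$ depends only on $\fetched$ and $\loaded$ (which we check are equal), $\coherenceorder$ is the same. For $\sourceorder$: invariant~(3) from the proof of Lemma~\ref{Lemma:ReshuffledComputationIsFeasible}, instantiated at $\gamma_1=\gamma$, says exactly that in $\gamma$ every load reads from the same store as in $\alpha$; hence the final $\loaded$ functions coincide and $\sourceorder$, including the arcs out of initial stores, agrees. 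Finally $\conflictorder$ is defined purely from $\sourceorder$ and $\coherenceorder$, both of which agree, so it agrees too.

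The main --- and essentially only --- delicate point is the first structural observation: that keeping the $\deleted_i$ in place while reordering each part according to $\beta$ really produces the same multiset of events as $\alpha$. This rests entirely on the earlier remarks that $\beta$ and $\alpha'$ differ only by extra propagate events of $\beta$ and that $\beta$'s coherence keys may be chosen to match those of $\alpha'$; once this is granted, everything else is a direct appeal to Lemmas~\ref{Lemma:ProgramOrderIsTheSame} and~\ref{Lemma:ReshuffledComputationIsFeasible} together with the definitions of $\programorder$, $\coherenceorder$, $\sourceorder$, and $\conflictorder$.
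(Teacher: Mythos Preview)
Your proposal is correct and follows essentially the same approach as the paper: program order from Lemma~\ref{Lemma:ProgramOrderIsTheSame}, coherence order from the fact that $\alpha$ and $\gamma$ contain the same commit events with identical keys, and source order from the fact that loads read the same stores in both computations. The only cosmetic difference is that for $\sourceorder$ the paper cites Lemmas~\ref{Lemma:EarlyReadLooksLikeThis}, \ref{Lemma:LoadFromMemoryLooksLikeThis}, \ref{Lemma:SourceRelationIsTheSame}, \ref{Lemma:ReshuffledComputationIsFeasible} directly, whereas you invoke the inductive invariant~(3) maintained inside the proof of Lemma~\ref{Lemma:ReshuffledComputationIsFeasible}; since that invariant is itself established via those lemmas, the two arguments are the same in substance.
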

\begin{proof}
Equality of $\programorder$ follows from Lemma~\ref{Lemma:ProgramOrderIsTheSame}.
Equality of source relation follows from Lemmas~\ref{Lemma:EarlyReadLooksLikeThis}, \ref{Lemma:LoadFromMemoryLooksLikeThis}, \ref{Lemma:SourceRelationIsTheSame}, \ref{Lemma:ReshuffledComputationIsFeasible}.
Store order is determined by $\anaddr$ and $\coherencekey$ components of store commit events.
Since computations $\alpha$ and $\gamma$ consist of the same commit events, the $\coherenceorder$ relations in the traces of $\alpha$ and $\gamma$ are the same.
\qed
\end{proof}

\begin{lemma}\label{Lemma:ReshuffledComputationIsFeasibleAndHasTheSameTrace}
$\gamma\in\computationsOf{\program}{\power}$ and $\traceOf{\gamma}=\traceOf{\alpha}$.
\end{lemma}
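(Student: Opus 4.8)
The plan is short, because this statement is nothing but the conjunction of the two lemmas just proved: Lemma~\ref{Lemma:ReshuffledComputationIsFeasible} already gives $\gamma\in\computationsOf{\program}{\power}$, and Lemma~\ref{Lemma:ReshuffledComputationHasTheSameTrace} already gives $\traceOf{\gamma}=\traceOf{\alpha}$. So the proof I would write is a single line citing both; the combined statement exists only to serve as a convenient reference point for the normal-form construction that follows.

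Still, for orientation I would recall how the two ingredients fit together, since the construction of $\gamma$ is spread over several pages. Recall $\gamma$ is obtained from a shortest cyclic-$\happensbefore$ Power computation $\alpha$ by (i) deleting the removable last-fetched instruction $(\tid_{\deleted},\anindex_{\deleted})$ from Lemma~\ref{Lemma:Cancellation} to get an acyclic $\alpha'$, (ii) choosing an SC witness $\beta$ with $\traceOf{\beta}=\traceOf{\alpha'}$, and (iii) re-sorting the events inside each block $\alpha_j$ into the order they have in $\beta$ while leaving the deleted events $\deleted_i$ in place. Lemma~\ref{Lemma:ReshuffledComputationIsFeasible} verifies by induction along $\gamma$ --- with the invariant that every load resolved so far reads the same store as in $\alpha$ --- that each transition is enabled; its inputs are Lemma~\ref{Lemma:ProgramOrderIsTheSame} (per-thread fetch order is preserved), Lemma~\ref{Lemma:SourceRelationIsTheSame} together with the read characterizations of Lemmas~\ref{Lemma:EarlyReadLooksLikeThis} and~\ref{Lemma:LoadFromMemoryLooksLikeThis} (reads stay well-defined with the same source, possibly switching between the early-read and load-from-memory mechanisms), and the fact that $\alpha$ and $\gamma$ carry exactly the same commit events with the same coherence keys (so coherence-key freshness, the propagation-ordering side conditions, and the final-state conditions \descref{FIN-COMM}, \descref{FIN-LD}, \descref{FIN-LD-ST} transfer verbatim). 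Lemma~\ref{Lemma:ReshuffledComputationHasTheSameTrace} then reads off trace equality component by component: $\programorder$ from Lemma~\ref{Lemma:ProgramOrderIsTheSame}, $\sourceorder$ and hence $\conflictorder$ from the source lemmas plus feasibility, and $\coherenceorder$ from the identical commit events.

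Since both halves are already established, I expect no obstacle in this lemma itself. The genuinely delicate work lies upstream, in Lemmas~\ref{Lemma:ReshuffledComputationIsFeasible} and~\ref{Lemma:SourceRelationIsTheSame}: a block reordering can turn an early read in $\alpha$ into a load-from-memory in $\gamma$ (or vice versa), so one must re-check conditions (1)--(3) of the applicable characterization lemma across block boundaries and across the untouched $\deleted_i$ markers --- that is the step I would treat most carefully, but it is already discharged before we reach this statement.
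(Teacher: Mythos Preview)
Your proposal is correct and matches the paper's own proof, which is literally the one-line ``Corollary of Lemmas~\ref{Lemma:ReshuffledComputationIsFeasible} and~\ref{Lemma:ReshuffledComputationHasTheSameTrace}.'' The additional orientation you provide is accurate but unnecessary for this particular lemma.
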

\begin{proof}
Corollary of Lemmas~\ref{Lemma:ReshuffledComputationIsFeasible} and~\ref{Lemma:ReshuffledComputationHasTheSameTrace}.
\qed
\end{proof}

Without loss of generality we may assume that all $\fetchkind$ events of $\alpha$ are located within $\alpha_1\cdot \deleted_1$: every thread can always first fetch all instructions and in the rest of the computation only execute them; such a reordering does not change the trace.
Also, note that the maximal number of events an instruction can generate is $\cardinalityOf{\program}+2$.
This bound is achieved by a store that is fetched, committed, and propagated to all threads.
Then the following lemma holds:
\begin{lemma}\label{Lemma:ReshuffledComputationIsInNormalForm}
Computation $\gamma$ is in normal form of degree $\cardinalityOf{\program}+3$.
\end{lemma}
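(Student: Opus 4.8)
The plan is to read the required partition of $\gamma$ directly off its construction. Recall that $\gamma=\projectionOf{\beta}{\alpha_1}\cdot\deleted_1\cdot\projectionOf{\beta}{\alpha_2}\cdot\deleted_2\cdots\projectionOf{\beta}{\alpha_n}$, that $\gamma\in\computationsOf{\program}{\power}$ by Lemma~\ref{Lemma:ReshuffledComputationIsFeasible}, and that $\deleted_1,\dots,\deleted_{n-1}$ are exactly the events of the single instruction $(\tid_{\deleted},\anindex_{\deleted})$. As observed just before the lemma, an instruction generates at most $\cardinalityOf{\program}+2$ events, so $n-1\leq\cardinalityOf{\program}+2$, i.e.\ $n\leq\cardinalityOf{\program}+3$. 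I would take the partition $\gamma=\tau_1\cdots\tau_n$ with $\tau_j:=\projectionOf{\beta}{\alpha_j}\cdot\deleted_j$ for $j\in\intrange{1}{n-1}$ and $\tau_n:=\projectionOf{\beta}{\alpha_n}$. Since being in normal form of degree $n$ implies being in normal form of any degree $n'\geq n$ --- just append empty parts, which changes neither \descref{NF-A} nor \descref{NF-B} --- it is enough to verify the two conditions for this $n$-part partition.

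\descref{NF-A} is immediate: we assumed without loss of generality that all $\fetchkind$ events of $\alpha$ lie in the prefix $\alpha_1\cdot\deleted_1$, so $\alpha_2,\dots,\alpha_n$ and $\deleted_2,\dots,\deleted_{n-1}$ carry no $\fetchkind$ event; as $\projectionOf{\beta}{\alpha_j}$ keeps only events occurring in $\alpha_j$, it carries no $\fetchkind$ event for $j\geq2$. Hence every $\fetchkind$ event of $\gamma$ lies in $\projectionOf{\beta}{\alpha_1}\cdot\deleted_1=\tau_1$.

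For \descref{NF-B} --- which constrains the relative order of two (distinct) instructions across parts --- the key point is that $\beta$ is an SC computation, so each instruction is executed atomically; consequently, in $\beta$, for any two distinct instructions all events of one precede all events of the other, which gives a total order $\prec$ on instructions, and this property is inherited by every subsequence $\projectionOf{\beta}{\alpha_j}$. Now let $(\tid_1,\anindex_1)$, $(\tid_2,\anindex_2)$ be distinct instructions and let $\event_1,\event_1'$ be events related to the first, $\event_2,\event_2'$ events related to the second, with $\event_1,\event_2\in\tau_s$ and $\event_1',\event_2'\in\tau_{s'}$. If neither instruction is the deleted one, then $\event_1,\event_2$ both lie in $\projectionOf{\beta}{\alpha_s}$ and $\event_1',\event_2'$ both lie in $\projectionOf{\beta}{\alpha_{s'}}$, whence $\event_1\before{\tau_s}\event_2$ holds iff $(\tid_1,\anindex_1)\prec(\tid_2,\anindex_2)$ iff $\event_1'\before{\tau_{s'}}\event_2'$. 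If one of them, say $(\tid_1,\anindex_1)$, is the deleted instruction, its unique event in $\tau_s$ is $\deleted_s$ (so $s\leq n-1$), placed after all of $\projectionOf{\beta}{\alpha_s}$; hence $\event_2\before{\tau_s}\event_1$ and likewise $\event_2'\before{\tau_{s'}}\event_1'$, so $\event_1\before{\tau_s}\event_2$ and $\event_1'\before{\tau_{s'}}\event_2'$ are both false. In either case the equivalence of \descref{NF-B} holds.

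I expect the delicate part to be \descref{NF-B}: the argument hinges on the observation that SC-atomicity of $\beta$ makes the order of two distinct instructions within each block $\projectionOf{\beta}{\alpha_j}$ independent of which of their events one examines, and on checking that the reinserted events $\deleted_j$ --- exactly one per block, always last --- respect that order. Everything else, including the degree bound, falls out of the construction of $\gamma$ together with the per-instruction event count recalled above.
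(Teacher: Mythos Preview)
Your proposal is correct and follows the same approach as the paper, which states only ``By definition of $\gamma$ and properties of projection'' and leaves the details implicit. You have spelled out precisely those details: the natural $n$-block partition coming from the construction of $\gamma$, the fetch-events-in-$\tau_1$ observation for \descref{NF-A}, and the SC-atomicity of $\beta$ together with the placement of each $\deleted_j$ at the end of its block for \descref{NF-B}.
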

\begin{proof}
By definition of $\gamma$ and properties of projection.
\qed
\end{proof}
Together with Lemma~\ref{Lemma:ShashaSnir} this proves Theorem~\ref{Theorem:NormalFormComputationsAreEnough}.

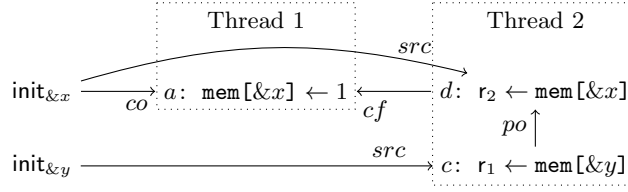
\begin{figure}[t]
\centering
\begin{tikzpicture}[nodes={rectangle,draw=none,fill=none}]
  \matrix[row sep=0.5cm,column sep=1cm,nodes={rectangle,draw=none,fill=none}] {
    & \node (thread1) {Thread 1}; & \node (thread2) {Thread 2}; \\
    \node (initialstorex) {$\initialstore_{\&x}$}; & \node (storex) {$a\colon$ \thestore{\&x}{1}}; & \node (loadx) {$d\colon$ \theload{\areg_2}{\&x}}; \\
    \node(initialstorey) {$\initialstore_{\&y}$}; &  & \node (loady) {$c\colon$ \theload{\areg_1}{\&y}}; \\
  };
  \draw[->] (loady) edge node[midway,left] {$po$} (loadx);
  \draw[->] (loadx) edge node[near end,below] {$cf$} (storex);
  \draw[->] (initialstorex) edge[bend left=15] node[very near end,above] {$src$} (loadx);
  \draw[->] (initialstorex) edge node[near end,below] {$co$} (storex);
  \draw[->] (initialstorey) edge node[very near end,above] {$src$} (loady);

  \draw[dotted] (thread1.north -| storey.west) rectangle (storex.south -| storey.east);
  \draw[dotted] (thread2.north -| loady.west) rectangle (loady.south -| loady.east);
\end{tikzpicture}

\caption{Trace of the computations $\alpha'$ and $\beta$ from Example~\ref{Example:NormalFormComputation}.}
\label{Figure:TraceMPShortened}
\end{figure}

\begin{example}\label{Example:NormalFormComputation}
Consider $\alpha:=\fetchkind(c)\cdot\fetchkind(d)\cdot\fetchkind(a)\cdot\cancel{\fetchkind(b)}\cdot\commitkind(a)\cdot\propagatekind(a,1)\cdot\cancel{\commitkind(b)}\cdot\cancel{\propagatekind(b,1)}\cdot\cancel{\propagatekind(b,2)}\cdot\loadkind(c)\cdot\loadkind(d)\cdot\commitkind(d)\cdot\commitkind(c)$, which is essentially $\sigma_{\mathit{MP}}$ with $\fetchkind$ events moved to the front.
We cancel the $\deleted_i$ events (crossed out) related to the store instruction $b$, as $b$ is the last instruction of thread 1 and no address depends on it (we could also cancel the events of $d$ instead).
Therefore,
$\alpha_1:=\fetchkind(c)\cdot\fetchkind(d)\cdot\fetchkind(a)$,
$\alpha_2:=\commitkind(a)\cdot\propagatekind(a,1)$,
$\alpha_3:=\alpha_4:=\emptysequence$,
$\alpha_5:=\loadkind(c)\cdot\loadkind(d)\cdot\commitkind(d)\cdot\commitkind(c)$,
and $\alpha':=\alpha_1\cdot\alpha_2\cdot\alpha_3\cdot\alpha_4\cdot\alpha_5$.
The trace of $\alpha'$ is shown in Figure~\ref{Figure:TraceMPShortened}.
The SC computation with the same trace is $\beta:=\fetchkind(c)\cdot\loadkind(c)\cdot\commitkind(c)\cdot\fetchkind(d)\cdot\loadkind(d)\cdot\commitkind(d)\cdot\fetchkind(a)\cdot\commitkind(a)\cdot\propagatekind(a,1)\cdot\propagatekind(a,2)$.
The normal-form computation is $\gamma:=\projectionOf{\beta}{\alpha_1}\cdot \deleted_1\cdots\projectionOf{\beta}{\alpha_5}=(\fetchkind(c)\cdot\fetchkind(d)\cdot\fetchkind(a))\cdot\fetchkind(b)\cdot(\commitkind(a)\cdot\propagatekind(a,1))\cdot\commitkind(b)\cdot\propagatekind(b,1)\cdot\propagatekind(b,2)\cdot(\loadkind(c)\cdot\commitkind(c)\cdot\loadkind(d)\cdot\commitkind(d))$.
It is feasible and has the same trace as $\alpha$ and $\sigma_{\mathit{MP}}$ (Figure~\ref{Figure:TraceMP}).
\end{example}

\section{From Normal-Form Computations to Emptiness}\label{Section:Reduction}

We now reduce robustness to language emptiness.
First, we define a multiheaded automaton capable of generating all normal-form computations of a program.
Next, we intersect it with regular languages that check cyclicity of the happens-before relation.
Altogether, the program is robust iff the intersection is empty.

\subsection{Generating Normal-Form Computations}\label{Section:Reduction:Generating}
To generate all normal-form computations, we use so-called multiheaded automata~\cite{calin2013}. 
Essentially, a multiheaded automaton generates a computation $\sigma_1\ldots \sigma_n$ by simultaneously generating its parts $\sigma_i$.
The automaton has a head for each part, and the transition relation defines the head producing an event.
Formally, an \emph{$n$-headed automaton over $\Sigma$} is an automaton operating on the extended alphabet $\intrange{1}{n}\times\alphabet$: $\automaton=(\states,\intrange{1}{n}\times\alphabet,\transitions,\initialstate,\finalstates)$.
The \emph{language of $\automaton$} is $\langOf{\automaton}:=\setcond{\secondOf{\projectionOf{\sigma}{(\set{1}\times\alphabet)}\cdots\projectionOf{\sigma}{(\set{n}\times\alphabet)}}}{\initialstate\transitionto{\sigma}\state\text{ for some }\state\in\finalstates}$, where $\secondOf{(a_1,b_1)\cdots(a_m,b_m)}:=b_1\cdots{}b_m$.
Multiheaded automata are closed under intersection with regular languages.
Moreover, for finite multiheaded automata language emptiness is $\nlogspace$-complete~\cite{calin2013}:

\begin{lemma}[\cite{calin2013}]\label{Lemma:MultiheadedIntersection}
Consider an $n$-headed automaton $U$ and a finite automaton $V$ over a common alphabet $\alphabet$.
There is an $n$-headed automaton $W$ with $\langOf{W}=\langOf{U}\cap\langOf{V}$ with the number of states $\cardinalityOf{\statesOf{W}}\leq\cardinalityOf{\statesOf{U}}\cdot\cardinalityOf{\statesOf{V}}^{2n}+1$.
\end{lemma}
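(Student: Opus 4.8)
The statement is a standard product construction for multiheaded automata, so the plan is to build $W$ by running $U$ and a modified copy of $V$ in parallel, where the subtlety is that $U$ does not emit its events in the order they appear in the generated word: it emits all of head 1's events, then all of head 2's events, and so on, whereas $V$ must check membership of the concatenation $\secondOf{\projectionOf{\sigma}{(\set 1\times\alphabet)}\cdots\projectionOf{\sigma}{(\set n\times\alphabet)}}$. So while $U$ is producing head-$i$ events, $W$ must track the state $V$ would be in \emph{after} reading everything head $i$ will eventually emit; this cannot be known in advance. The fix is the usual guess-and-verify: $W$ guesses, at the start, for each head $i$ a pair $(p_{i-1},p_i)$ of $V$-states with the intended meaning ``reading the head-$i$ block takes $V$ from $p_{i-1}$ to $p_i$'', with $p_0=\initialstateOf V$ and $p_n\in\finalstatesOf V$. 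This is $2n$ guessed $V$-states per configuration (hence the $\cardinalityOf{\statesOf V}^{2n}$ factor), and these guesses are frozen for the whole run; $W$ additionally carries, for each head $i$, a ``running'' $V$-state $q_i$ initialized to $p_{i-1}$.

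First I would define the state set of $W$ as $\statesOf U\times(\statesOf V)^{2n}\times(\statesOf V)^{n}$ — the $U$-state, the $2n$ frozen boundary guesses, and the $n$ running states — plus a single extra initial state $\initialstateOf W$ whose $\emptysequence$-transitions install all legal guesses (this is where the ``$+1$'' comes from). Counting, $\cardinalityOf{\statesOf W}\le \cardinalityOf{\statesOf U}\cdot\cardinalityOf{\statesOf V}^{2n}\cdot\cardinalityOf{\statesOf V}^{n}+1$; to match the stated bound $\cardinalityOf{\statesOf U}\cdot\cardinalityOf{\statesOf V}^{2n}+1$ I would instead store only the $2n$ frozen states \emph{and} overwrite the ``left endpoint'' $p_{i-1}$ in place with the running state $q_i$ as head $i$ advances, so that a configuration is exactly a $U$-state together with $2n$ $V$-states (the current running state of head $i$ followed by the target state of head $i$, for each $i$), giving $\cardinalityOf{\statesOf U}\cdot\cardinalityOf{\statesOf V}^{2n}+1$ as required. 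The transitions are: from $\initialstateOf W$, $\emptysequence$-move to $(\initialstateOf U,\,(\initialstateOf V,p_1,p_1,p_2,\dots,p_{n-1},p_n))$ for every choice of $p_1,\dots,p_{n-1}\in\statesOf V$ and every $p_n\in\finalstatesOf V$; and whenever $U$ has a transition $u\transitionto{(i,a)}u'$ with $a\in\alphabet\cup\set{\emptysequence}$, $W$ has the transition labelled $(i,a)$ that replaces the $U$-component by $u'$ and advances head $i$'s running state via $q_i\transitionto{a}q_i'$ in $V$ (for $a=\emptysequence$, $q_i$ is unchanged, matching that such steps contribute nothing to the output word). Final states of $W$ are those in which, for every head $i$, the running state equals the stored target state $p_i$.

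Then I would prove $\langOf W=\langOf U\cap\langOf V$ by the two inclusions. For $\subseteq$: an accepting run of $W$ induces an accepting run of $U$ on the same extended word (project away the $V$-components), so the output word $w=w_1\cdots w_n$ with $w_i=\secondOf{\projectionOf{\sigma}{(\set i\times\alphabet)}}$ lies in $\langOf U$; and the guessed/verified states give $p_0=\initialstateOf V\transitionto{w_1}p_1\transitionto{w_2}\cdots\transitionto{w_n}p_n\in\finalstatesOf V$, so $w\in\langOf V$ (here I use that the $V$-component of head $i$ evolves exactly by reading $w_i$, and the finality check forces it to land on $p_i$, while consecutive guesses are chained by construction). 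For $\supseteq$: given $w\in\langOf U\cap\langOf V$, take an accepting $U$-run witnessing $w\in\langOf U$ and an accepting $V$-run on $w=w_1\cdots w_n$; let $p_i$ be the $V$-state reached after $w_1\cdots w_i$. Feed these $p_i$ as $W$'s initial guess and simulate the $U$-run, letting each head's $V$-component follow the corresponding stretch of the $V$-run; this is a valid accepting run of $W$. Closure of multiheaded automata under intersection with regular languages (Lemma used only as the \emph{claim}, so nothing to cite) is exactly this construction, and correctness is routine bookkeeping.

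The main obstacle is purely presentational: being careful that $U$'s emission order (head 1's entire block, then head 2's, …) is \emph{not} the order of letters in the generated word, so the single running $V$-simulation has to be split into $n$ independent segments glued by the frozen boundary guesses; and squeezing the state count down to $\cardinalityOf{\statesOf U}\cdot\cardinalityOf{\statesOf V}^{2n}+1$ rather than the naive $\cardinalityOf{\statesOf U}\cdot\cardinalityOf{\statesOf V}^{3n}+1$ requires reusing the ``left boundary'' slot as the running state instead of keeping both. Everything else — that $\emptysequence$-labelled $U$-moves don't advance the $V$-simulation, that the initial state's $\emptysequence$-transitions account for exactly one extra state, and that $n$ is fixed so $(\statesOf V)^{2n}$ is a finite state set — is immediate.
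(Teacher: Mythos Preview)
The paper does not prove this lemma; it is quoted from \cite{calin2013} and used as a black box, so there is no in-paper argument to compare against. Your construction---guess the $V$-states at the $n{+}1$ block boundaries, then independently simulate $V$ on each head's subword, reusing the left-boundary slot as the running state to keep only $2n$ $V$-states per configuration---is the standard guess-and-verify product and is correct; the state-count bound and the two inclusions go through as you outline.

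One prose slip worth fixing: you twice describe the orders the wrong way round. You write that ``$U$'s emission order (head 1's entire block, then head 2's, \dots) is \emph{not} the order of letters in the generated word''. It is the other way: $U$'s run over $[1,n]\times\alphabet$ may interleave heads arbitrarily, and the \emph{generated word} is what you get after reordering by head index, i.e.\ head~1's block, then head~2's, and so on. Your construction already handles the interleaved emission correctly (each transition $(i,a)$ of $U$ advances only the $i$th $V$-segment), so this is purely a wording issue, not a mathematical one.
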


\begin{lemma}[\cite{calin2013}]\label{Lemma:MultiheadedEmptinessComplexity}
Emptiness for $n$-headed automata is $\nlogspace$-complete.
\end{lemma}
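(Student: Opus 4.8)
The plan is to observe that, as far as emptiness is concerned, the multiheaded structure is irrelevant and the question collapses to plain directed reachability in the transition graph. Concretely, first I would note that the word-extraction map $\secondOf{\projectionOf{\sigma}{(\set{1}\times\alphabet)}\cdots\projectionOf{\sigma}{(\set{n}\times\alphabet)}}$ is total: every accepting run $\initialstate\transitionto{\sigma}\state$ with $\state\in\finalstates$ produces \emph{some} word in $\langOf{\automaton}$ (possibly $\varepsilon$), and conversely every word in the language arises from such a run. Hence $\langOf{\automaton}=\emptyset$ if and only if no state of $\finalstates$ is reachable from $\initialstate$ in the directed graph on $\states$ whose edges are the pairs $(\sourcestateOf{\transition},\destinationstateOf{\transition})$ for $\transition\in\transitions$; the head labels on transitions, and the value of $n$, play no role beyond the size of the input encoding.

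For the upper bound I would argue membership of \emph{non}-emptiness in $\nlogspace$ by the textbook nondeterministic $st$-connectivity walk: guess a path through the transition graph one vertex at a time, keeping only the current state and a step counter bounded by $\cardinalityOf{\states}$, accepting when a state of $\finalstates$ is hit. This uses $O(\log\cardinalityOf{\states})$ space. Since $\nlogspace$ is closed under complement (Immerman--Szelepcs\'enyi), emptiness itself is then in $\nlogspace$.

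For the lower bound I would reduce directed $st$-reachability, the canonical $\nlogspace$-complete problem, to non-emptiness. Given $G=(V,E)$ with distinguished vertices $s,t$, build the $1$-headed automaton $\automaton_G:=(V,\set{1}\times\set{a},\setcond{(u,(1,a),v)}{(u,v)\in E},s,\set{t})$ over a single letter $a$; a run of $\automaton_G$ is exactly a directed path in $G$, so $\langOf{\automaton_G}\neq\emptyset$ iff $t$ is reachable from $s$. This is a logspace reduction, so non-emptiness is $\nlogspace$-hard, and complementing once more (closure of $\nlogspace$ under complement) emptiness is $\nlogspace$-hard; the construction is a valid instance whether $n$ is fixed at $1$ or supplied as part of the input. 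Together with the upper bound this gives $\nlogspace$-completeness. I do not anticipate a real obstacle here: the only point needing care is the first observation — that projections and the $\secondOf{\cdot}$ operation can never turn an accepting run into a non-word, so emptiness is insensitive to the head decoration and degenerates to graph reachability — after which everything is the standard characterisation of $\nlogspace$ via $st$-connectivity plus Immerman--Szelepcs\'enyi.
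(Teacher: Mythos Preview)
Your argument is correct. The key observation---that the projection-and-concatenation map is total, so $\langOf{\automaton}=\emptyset$ iff no final state is reachable from the initial state in the underlying transition graph---collapses the problem to $st$-connectivity, after which both bounds are textbook. The only cosmetic remark is that you could avoid Immerman--Szelepcs\'enyi entirely by phrasing the hardness reduction directly for emptiness (given $(G,s,t)$, ask whether $t$ is \emph{un}reachable by making $t$ a non-accepting sink and everything else accepting), but invoking closure under complement is perfectly fine.

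As for comparison with the paper: there is nothing to compare. The lemma is imported verbatim from~\cite{calin2013} and the present paper gives no proof of its own; your write-up is essentially the standard argument one would expect that reference to contain.
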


We will generate all normal-form computations of program $\program$ with the $n$-headed automaton $\mhautomaton(\program):=(\mhstates,\events,\mhtransitions,\initialmhstate,\finalmhstates)$, where $n:=\cardinalityOf{\program}+3$.
The automaton generates all events related to a single instruction in one shot, but, possibly, in different parts of the computation.
All $\fetchkind$ events are generated in the first part of the computation.
In order to generate them, the automaton keeps track of the destination state of the last fetched instruction in each thread (component $\mcontrolstate$ of the automaton state).

Each instruction can only read the last value written to a register.
Therefore, the automaton only needs to remember $\cardinalityOf{\regdomain}$ register values per thread (component $\mregvalue$).
However, an instruction cannot be executed until the values of all registers that it reads become known.
To obey this restriction, the automaton memorizes the part of the computation in which the register value gets computed ($\mregcomphead$).
For example, while handling an assignment $\theassign{\areg_1}{\areg_1+\areg_2}$, the automaton learns that the new value of $\areg_1$ is the sum of the current values of $\areg_1$ and $\areg_2$.
It also remembers that this value is available no earlier than the current values of $\areg_1$ and $\areg_2$ are computed.
Similarly, the automaton remembers the parts of the computation in which the addresses of load and store instructions become known ($\maddrcomphead$), and certain kinds of instructions get committed ($\mregcommhead$, $\massumecommhead$, $\maddrcommhead$).

The automaton has to keep a separate memory state for each thread and for each part of the computation.
The memory state of a thread in a part is updated when a store instruction gets propagated to this thread in this part.
When a load instruction is handled, the automaton chooses a part where the $\loadkind$ event takes place and uses the memory state of that part.
Besides the memory valuation ($\mmemvalue$), the memory state includes coherence keys ($\mlastkey$) to guarantee that the generated computation 
respects the coherence order.

When starting the computation, 
the automaton non-deterministically guesses the memory valuations and coherence keys for all parts of the computation (except the first one).
Upon termination, the automaton checks that the parts of the computation generated by each head fit together at the concatenation points.
This ensures the overall computation is valid for the program. 
The trick is to remember the guess of the initial memory valuations and coherence keys in immutable components of the automaton state ($\guessed{\mmemvalue}$, $\guessed{\mlastkey}$).
The final states require that the current memory state in part $\headindex$ of the computation coincides with the guessed initial state in part $\headindex+1$.

\subsubsection{State space}\label{Section:Reduction:StateSpace}

A state from $\mhstates$ (except the special initial state $\initialmhstate$) includes the following information:
\begin{itemize}
\item $\mcontrolstate(\tid)$ gives the current control state of thread $\tid$.

\item $\mregcomphead(\tid,\areg)$ gives the part in which last value assigned to register $\areg$ in thread $\tid$ gets computed.
\item $\mregvalue(\tid,\areg)$ gives this computed value.
\item $\mregcommhead(\tid,\areg)$ gives the part in which the last instruction assigning a value to register $\areg$ in thread $\tid$ gets committed.

\item $\massumecommhead(\tid)$ gives the part in which the latest fetched condition in thread $\tid$ is committed.

\item $\mmemvalue(\tid,\anaddr,\headindex)$ gives the value of the last write to $\anaddr$ propagated to thread $\tid$ in the part $\headindex$ or earlier.
\item $\mlastkey(\tid,\anaddr,\headindex)$ gives the coherence key of the last write to $\anaddr$ propagated to thread $\tid$ in the part $\headindex$ or earlier.

\item $\guessed{\mmemvalue}$, $\guessed{\mlastkey}$ are immutable copies of the guessed values of the previous two components (see \descref{MH-GUESS} below).

\item $\mearlymemvalue(\tid,\anaddr,\headindex)$ gives the value written by the last fetched store to $\anaddr$ which is still in-flight in the part $\headindex$ of computation,
$\bot$ if there is no such store,
$\top$ if the value of the store is unknown or there is a later in-flight store in this part with an unknown address.

\item $\maddrcomphead(\tid)$ gives the leftmost part of the computation, in which the addresses of all already fetched memory accesses are computed.
\item $\maddrcommhead(\tid,\anaddr)$ gives the rightmost part of the computation having a commit to address $\anaddr$ by thread $\tid$.

\item $\minstrcount(\tid)$ gives the number of instructions fetched in thread $\tid$.
\end{itemize}
The initial state $\initialmhstate$ does not contain any information.

\subsubsection{Transition relation}\label{Section:Reduction:TransitionRelation}

We define transitions by specifying the new (primed) values of the state components and the label $\lambda$ of the transition.
First, we define the transition guessing the initial memory state in each part of the computation:

\begin{description}
\descitem{MH-GUESS}
Assume the current state is $\initialmhstate$.
Then, there are transitions to the states satisfying
$\mcontrolstate':=\lambda\tid.\initialcontrolstate_{\tid}$,
$\mregcomphead':=\lambda\tid.\lambda\areg.1$,
$\mregvalue':=\lambda\tid.\lambda\areg.0$,
$\mregcommhead':=\lambda\tid.\lambda\areg.1$,
$\massumecommhead':=\lambda\tid.1$,
$\mearlymemvalue':=\lambda\tid.\lambda\anaddr.\lambda\headindex.\bot$,
$\mmemvalue'=\guessed{\mmemvalue}'$,
$\mlastkey'=\guessed{\mlastkey}'$,
$\maddrcomphead':=\lambda\tid.1$,
$\maddrcommhead':=\lambda\tid.\lambda\anaddr.1$,
$\minstrcount':=\lambda\tid.0$.
Also, $\mmemvalue'(\tid,\anaddr,1):=0$, $\mlastkey'(\tid,\anaddr,1):=0$ for all $\tid\in\tiddomain$, $\anaddr\in\addrdomain$.
Moreover, $\mlastkey'(\tid,\anaddr,\headindex)\leq\mlastkey'(\tid,\anaddr,\headindex+1)$ for $\headindex\in\intrange{1}{n-1}$, $\tid\in\tiddomain$, $\anaddr\in\addrdomain$ (we assume $\mlastkey'(\tid,\anaddr,n):=\infty$).
$\lambda:=\emptysequence$.
\end{description}

Fix a state $\mhstate$.
We overload $\evaluate(\tid,\anexpr)$ to mean the value of expression $\anexpr$ for the valuation of registers defined by $\lambda\areg.\mregvalue(\tid,\areg)$.

Let $\headdomain:=\intrange{1}{n}$.
Let $\tid\in\tiddomain$, $\mcontrolstate(\tid)=\controlstate_1$, $\instruction=\controlstate_1\transitionto{\command}\controlstate_2\in\instructions_{\tid}$.
Let $\headindex_1:=1$.
Let $\headindex_2\in\headdomain$, $\headindex_2\geq\headindex_1$, $\headindex_2\geq\mregcomphead(\tid,\areg)$ for each register $\areg$ read in $\command$.
Let $\headindex_3\in\headdomain$, $\headindex_3\geq\headindex_2$, $\headindex_3\geq\mregcommhead(\tid,\areg)$ for each register $\areg$ read in $\command$, $\headindex_3\geq\massumecommhead(\tid)$.
Let $\anindex:=\minstrcount(\tid)+1$ and $\minstrcount':=\minstrcount[\tid\hookleftarrow\anindex]$.
Depending on the type of $\command$, there are the following transitions from $\mhstate$ labeled by events $\lambda$:
\begin{description}
\descitem{MH-ASSIGN}
$\command=\theassign{\areg}{\anexpr_{\aval}}$.
Let $\aval:=\evaluate(\tid,\anexpr_{\aval})$.
Then $\mregvalue':=\mregvalue[(\tid,\areg)\hookleftarrow\aval]$, $\mregcomphead':=\mregcomphead[(\tid,\areg)\hookleftarrow\headindex_2]$, $\mregcommhead':=\mregcommhead[(\tid,\areg)\hookleftarrow\headindex_3]$.
$\lambda:=(\headindex_1,\fetchkind,\tid,\instruction)\cdot(\headindex_3,\commitkind,\tid,\anindex)$.

\descitem{MH-ASSUME}
$\command=\theassume{\anexpr_{\aval}}$.
Let $\evaluate(\tid,\anexpr_{\aval})\neq 0$.
Then $\massumecommhead':=\massumecommhead[\tid\hookleftarrow\headindex_3]$.
$\lambda:=(\headindex_1,\fetchkind,\tid,\instruction)\cdot(\headindex_3,\commitkind,\tid,\anindex)$.

\descitem{MH-LOAD}
$\command=\theload{\areg}{\anexpr_{\anaddr}}$.
Let $\anaddr:=\evaluate(\tid,\anexpr_{\anaddr})$.
Let $\headindex_3\geq\maddrcommhead(\tid,\anaddr)$.
If $\mearlymemvalue(\tid,\anaddr)=\bot$, let $\aval:=\mmemvalue(\tid,\anaddr,\headindex_2)$ (load from memory case).
Otherwise, let $\aval:=\mearlymemvalue(\tid,\anaddr,\headindex_2)$ and assume $\aval\neq\top$ (early read case).
Then $\mregvalue':=\mregvalue[(\tid,\areg)\hookleftarrow\aval]$, $\mregcomphead':=\mregcomphead[(\tid,\areg)\hookleftarrow\headindex_2]$, $\mregcommhead':=\mregcommhead[(\tid,\areg)\hookleftarrow\headindex_3]$, $\maddrcomphead':=\maddrcomphead[\tid\hookleftarrow\max\set{\maddrcomphead(\tid),\headindex_2}]$, $\maddrcommhead':=\maddrcommhead[(\tid,\anaddr)\hookleftarrow\headindex_3]$.
$\lambda:=(\headindex_1,\fetchkind,\tid,\instruction)\cdot(\headindex_2,\loadkind,\tid,\anindex,\anaddr)\cdot(\headindex_3,\commitkind,\tid,\anindex)$.

\descitem{MH-STORE}
$\command=\thestore{\anexpr_{\anaddr}}{\anexpr_{\aval}}$.
Let $\anaddr:=\evaluate(\tid,\anexpr_{\anaddr})$.
Assume $\headindex_3\geq\maddrcomphead(\tid)$, $\headindex_3\geq\maddrcommhead(\tid,\anaddr)$.
Let $\aval:=\evaluate(\tid,\anexpr_{\aval})$.
Let $\coherencekey\in\rationalnumbers$, $\coherencekey\neq\mlastkey(\tid,\anaddr,\headindex)$ for any $\tid\in\tiddomain$, $\anaddr\in\addrdomain$, $\headindex\in\headdomain$.
Then $\mearlymemvalue':=\mearlymemvalue[(\tid,\anaddr,\intrange{\headindex_1}{\headindex_2-1})\hookleftarrow\top),(\tid,\anaddr,\intrange{\headindex_2}{\headindex_3-1})\hookleftarrow\aval]$.
We also set $\mearlymemvalue':=\mearlymemvalue'[(\tid,\anaddr',\headindex)\hookleftarrow\top]$ for all $\anaddr'\in\addrdomain\setminus\set{\anaddr}$, $\headindex\in\intrange{\headindex_1}{\headindex_2-1}$ with $\mearlymemvalue(\tid,\anaddr',\headindex)\in\datadomain$.
We define $\maddrcomphead':=\maddrcomphead[\tid\hookleftarrow\max\set{\maddrcomphead(\tid),\headindex_2}]$, $\maddrcommhead':=\maddrcommhead[(\tid,\anaddr)\hookleftarrow\headindex_3]$.
Let $T\subseteq\tiddomain\setminus\set{\tid}$, initially $\mmemvalue':=\mmemvalue$, $\mlastkey':=\mlastkey$, and $\lambda:=(\headindex_1,\fetchkind,\tid,\instruction)\cdot(\headindex_3,\commitkind,\tid,\anindex,\coherencekey,\anaddr)$.
For $\tid'=\tid$ and for each $\tid'\in T$: let $\headindex\in\headdomain$, $\headindex\geq\headindex_3$ ($\headindex:=\headindex_3$ for $\tid'=\tid$), $\mlastkey(\tid',\anaddr,\headindex)<\coherencekey\leq\guessed{\mlastkey}(\tid',\anaddr,\headindex+1)$, then $\mmemvalue':=\mmemvalue'[(\tid',\anaddr,\headindex)\hookleftarrow\aval]$, $\mlastkey':=\mlastkey'[(\tid',\anaddr,\headindex)\hookleftarrow\coherencekey]$, $\lambda:=\lambda\cdot(\headindex,\propagatekind,\tid',\tid,\anindex,\anaddr)$.

\end{description}
For brevity we allowed a single transition to be labeled by several events.
An automaton with such transitions can be trivially translated to the canonical form by breaking one such transition into several consecutive ones.

\subsubsection{Final states}\label{Section:Reduction:FinalStates}
The set of final states $\finalmhstates$ is a subset of $\mhstates\setminus\set{\initialmhstate}$ consisting of all states with
$\mmemvalue(\tid,\anaddr,\headindex)=\guessed{\mmemvalue}(\tid,\anaddr,\headindex+1)$,
$\mlastkey(\tid,\anaddr,\headindex)=\guessed{\mlastkey}(\tid,\anaddr,\headindex+1)$
for all $\tid\in\tiddomain$, $\anaddr\in\addrdomain$, $\headindex\in\intrange{1}{n-1}$.

\subsubsection{Soundness and completeness}

\begin{lemma}\label{Lemma:GeneratesOnlyCorrectComputations}
$\langOf{\mhautomaton}\subseteq\computationsOf{\program}{\power}$.
\end{lemma}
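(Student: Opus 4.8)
The plan is to show that every word accepted by $\mhautomaton(\program)$ denotes a Power computation by replaying that computation on $\powerautomaton(\program)$ and checking the reached state is final. Fix an accepting run of $\mhautomaton(\program)$. It begins with a \descref{MH-GUESS} step fixing the immutable components $\guessed{\mmemvalue},\guessed{\mlastkey}$ together with monotone coherence keys $\mlastkey'(\tid,\anaddr,\headindex)\le\mlastkey'(\tid,\anaddr,\headindex+1)$, and then performs exactly one \descref{MH-ASSIGN}/\descref{MH-ASSUME}/\descref{MH-LOAD}/\descref{MH-STORE} step per fetched instruction, in some order $I_1,\dots,I_m$. The step for $I_k$ emits all the events of that instruction, scattered over parts with indices $\headindex_1=1\le\headindex_2\le\headindex_3$ (plus further indices $\ge\headindex_3$ for the $\propagatekind$ events). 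Writing $\sigma=\sigma_1\cdots\sigma_n$, $n=\cardinalityOf{\program}+3$, where $\sigma_\headindex$ is the subsequence of events tagged with head $\headindex$, one first records the structural facts that come for free: all $\fetchkind$ events lie in $\sigma_1$, so $\sigma$ is in normal form in the sense of \descref{NF-A}, the per-instruction events respect \descref{NF-B} by construction, and within each thread the $\fetchkind$ events appear in the order in which $\mcontrolstate$ was advanced, hence form a legal program-order prefix; thus \descref{POW-FETCH} will never be the obstruction.

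The core of the argument is an induction over the prefixes $\rho$ of $\sigma$, establishing simultaneously that $\initialpowerstate\transitionto{\rho}\powerstate_\rho$ for a (determined, by the one-transition-per-event fact underlying Lemma~\ref{Lemma:FinalStateIsDeterminedByComputation}) state $\powerstate_\rho=(\rtstatesmap,(\coherence,\propagated))$, and an invariant $\Phi(\rho)$ relating $\powerstate_\rho$ to the $\mhautomaton$ bookkeeping. The invariant should state: for each thread $\tid$, $\fetched$ and $\mcontrolstate(\tid)$ agree with $\minstrcount(\tid)$ for the instructions whose $\fetchkind$ event is in $\rho$; $\committed$ is exactly the set of instructions whose $\commitkind$ event is in $\rho$; $\loaded[\anindex]$ records, for each satisfied load, the store chosen by the matching \descref{MH-LOAD} step; the register valuation $\evaluate(\tid,\anindex,\cdot)$ agrees with $\mregvalue(\tid,\cdot)$ once the defining assignment's events have been passed; and, if $\rho$ consists of all of $\sigma_1\cdots\sigma_{\headindex-1}$ followed by a prefix of $\sigma_\headindex$, then $\propagated(\tid,\anaddr)$ together with its coherence key agrees with $\mmemvalue(\tid,\anaddr,\cdot),\mlastkey(\tid,\anaddr,\cdot)$ at the appropriate part index, and the set of still-uncommitted stores a part-$\headindex$ early read could forward is exactly the one encoded by $\mearlymemvalue(\tid,\anaddr,\headindex)$. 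In the inductive step I would take the next event $e$ of $\sigma$ after $\rho$, identify its instruction $I_k$ and the part it was placed in, and read off from the corresponding \descref{MH-*} rule that the Power guard for $e$ holds at $\powerstate_\rho$: the conditions $\headindex_2\ge\mregcomphead(\tid,\areg)$ give that the needed addresses and values are already computed (\descref{POW-LOAD}, \descref{POW-COMMIT}); the conditions $\headindex_3\ge\maddrcomphead(\tid)$, $\headindex_3\ge\maddrcommhead(\tid,\anaddr)$, $\headindex_3\ge\massumecommhead(\tid)$, $\headindex_3\ge\mregcommhead(\tid,\areg)$ give the "same-address loads and stores committed in fetch order" and "address/data/control dependencies committed first" preconditions of \descref{POW-COMMIT} and \descref{POW-STORE}; the \descref{MH-STORE} requirement $\mlastkey(\tid',\anaddr,\headindex)<\coherencekey\le\guessed{\mlastkey}(\tid',\anaddr,\headindex+1)$ yields the \descref{POW-PROP} guard $\coherence(\propagated(\tid',\anaddr))<\coherencekey$; and in the \descref{MH-LOAD} case the choice between $\mmemvalue(\tid,\anaddr,\headindex_2)$ and $\mearlymemvalue(\tid,\anaddr,\headindex_2)$ matches the two situations characterized by Lemmas~\ref{Lemma:EarlyReadLooksLikeThis} and~\ref{Lemma:LoadFromMemoryLooksLikeThis}, so the appropriate \descref{POW-LOAD} or \descref{POW-EARLY} transition is enabled. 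Applying the corresponding Power transition then re-establishes $\Phi(\rho\cdot e)$.

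Taking $\rho=\sigma$ finally gives $\powerstate_\sigma\in\finalpowerstates$: \descref{FIN-COMM} holds because each \descref{MH-*} step emits a $\commitkind$ event for its instruction, so $\committed=[1..\lengthOf{\fetched}]$ in every thread; \descref{FIN-LD} and \descref{FIN-LD-ST} follow because the $\commitkind$ part $\headindex_3$ of every load was forced to be $\ge\maddrcommhead(\tid,\anaddr)$ for its address, so earlier same-address loads and stores are committed no later, combined with the coherence bookkeeping carried in $\Phi$; and the $\finalmhstates$ equalities $\mmemvalue(\tid,\anaddr,\headindex)=\guessed{\mmemvalue}(\tid,\anaddr,\headindex+1)$, $\mlastkey(\tid,\anaddr,\headindex)=\guessed{\mlastkey}(\tid,\anaddr,\headindex+1)$ are exactly what makes the per-part memory views assembled by the $\mhautomaton$ splice into one globally consistent propagation history along $\sigma$, which is what $\Phi$ needs at the part boundaries. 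I expect the main obstacle to be formulating and propagating the portions of $\Phi$ that concern $\mearlymemvalue$ and the coherence keys across part boundaries: one must argue that the automaton's decision to split an instruction's events over several parts while carrying only a per-part memory summary can never produce a $\sigma$ whose global coherence order or propagation order is inconsistent. The monotonicity of $\mlastkey$ guessed in \descref{MH-GUESS} and the $\finalmhstates$ gluing equalities are precisely the ingredients that rule this out, and turning that into an airtight inductive argument — in particular the $\mearlymemvalue$ update pattern of \descref{MH-STORE} with its $\top$-markings for in-flight stores with unknown or later addresses — is the delicate part of the proof.
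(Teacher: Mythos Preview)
Your plan is sound and would go through, but the paper organises the induction differently, and the difference is worth noting.

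You induct over prefixes $\rho$ of the \emph{output} word $\sigma=\sigma_1\cdots\sigma_n$ and maintain a single Power state $\powerstate_\rho$, relating it via $\Phi$ to the multiheaded automaton's bookkeeping. The paper instead inducts over the \emph{steps of $\mhautomaton$}: after $s$ transitions of $\mhautomaton$ it keeps $n$ Power states $\powerstate_1^s,\ldots,\powerstate_n^s$, one per head, where $\powerstate_\headindex^s$ is reached by replaying $\tau_\headindex^s$ (the events emitted so far with head $\headindex$) from a retrospectively defined start state $\powerstate_\headindex^0$. All invariants (\descref{SND-A}--\descref{SND-M}) are stated pointwise for each head and each $\mhautomaton$-step, so the correspondence between the $\mhautomaton$-state at step $s$ and the $n$ Power states is one-to-one and the inductive step handles one \descref{MH-*} transition at a time, checking the Power guards for the few events it emits across different heads. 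Only at the very end does the paper discharge the ``gluing'' obligation $\powerstate_\headindex^{m}=\powerstate_{\headindex+1}^0$, and this is where the $\finalmhstates$ equalities $\mmemvalue(\tid,\anaddr,\headindex)=\guessed{\mmemvalue}(\tid,\anaddr,\headindex+1)$ and $\mlastkey(\tid,\anaddr,\headindex)=\guessed{\mlastkey}(\tid,\anaddr,\headindex+1)$ are used, together with \descref{SND-B}.

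What this buys the paper: because the induction variable is the $\mhautomaton$-step, the components $\mmemvalue(\tid,\anaddr,\headindex)$, $\mlastkey(\tid,\anaddr,\headindex)$, $\mearlymemvalue(\tid,\anaddr,\headindex)$ at step $s$ line up \emph{exactly} with the per-head Power states $\powerstate_\headindex^s$, and no cross-part reasoning is needed until the final gluing. In your decomposition, when you are inside part $\headindex$ processing the event emitted by instruction $I_k$, the relevant $\mhautomaton$-snapshot is the state after $k{-}1$ transitions, yet parts $1,\ldots,\headindex{-}1$ of $\rho$ already contain events from \emph{all} $m$ transitions; your $\Phi$ therefore has to assert that the cumulative effect of parts $1,\ldots,\headindex{-}1$ matches the initial guess $\guessed{\mmemvalue}(\cdot,\cdot,\headindex)$, $\guessed{\mlastkey}(\cdot,\cdot,\headindex)$, which is exactly the $\finalmhstates$ condition but now invoked repeatedly inside the induction rather than once at the end. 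This is not wrong, just heavier to formalise, and it is precisely the ``delicate part'' you already flagged. If you pursue your route, the cleanest fix is to split the induction into two layers: first prove, for each fixed head $\headindex$, that replaying $\sigma_\headindex$ from the state reached after $\sigma_1\cdots\sigma_{\headindex-1}$ succeeds and lands in a state satisfying the boundary equalities; this essentially reconstructs the paper's $n$-state picture.
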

\begin{proof}

Consider $\sigma=\lambda_1\cdots\lambda_m$, such that $\initialmhstate\transitionto{\lambda_1}\mhstate_1\transitionto{\lambda_2}\cdots\transitionto{\lambda_m}\mhstate_m\in\finalmhstates$.
For $\headindex\in\headdomain$, let $\tau_{\headindex}^s:=\secondOf{\projectionOf{(\lambda_1\cdots\lambda_s)}{(\set{\headindex}\times\events})}$, $s\in\intrange{0}{m}$.

Let $(\powerstate_1^0\ldots\powerstate_n^0)\in(\powerstates)^{n}$ be the states of $\powerautomaton$ defined so that \descref{SND-B} holds for $s=0$ (see below).
By induction on $s\in\intrange{1}{m}$ we show:
\begin{description}
\descitem{SND-A} $\powerstate_{\headindex}^0\transitionto{\tau_{\headindex}^s}\powerstate_{\headindex}^s$.
\descitem{SND-B} For all $\tid\in\tiddomain$, $\headindex\in\headdomain$, $\powerstate_{\headindex}^s=(\rtstatesmap,(\coherence,\propagated))$, $\rtstatesmap(\tid)=(\fetched,\committed,\loaded)$ holds:
  \begin{description}
  \descitem{SND-B1} $\fetched$ is the list of instructions fetched by $\projectionOf{\projectionOf{(\tau_1^m\cdots\tau_{\headindex-1}^m\cdot\tau_{\headindex}^s)}{\fetchkind}}{\tid}$.
  \descitem{SND-B2} $\committed$ consists of the indices of instructions committed by $\projectionOf{(\projectionOf{\tau_1^m\cdots\tau_{\headindex-1}^m\cdot\tau_{\headindex}^s)}{\commitkind}}{\tid}$.
  \descitem{SND-B3} $\loaded$ contains the information about the stores being read by loads in $(\tau_1^m\cdots\tau_{\headindex-1}^m\cdot\tau_{\headindex}^s)$ determined according to Lemmas~\ref{Lemma:EarlyReadLooksLikeThis} and~\ref{Lemma:LoadFromMemoryLooksLikeThis}.
  \descitem{SND-B4} $\coherence(\tid,\anindex)=\coherencekey$ if $(\commitkind,\tid,\anindex,\coherencekey,\anaddr)\in\tau_1^m\cdots\tau_{\headindex-1}^m\cdot\tau_{\headindex}^s$ for some $\anaddr\in\addrdomain$, otherwise, $\coherence(\tid,\anindex)=\bot$.
  \descitem{SND-B5} $\propagated(\tid,\anaddr)=(\tid',\anindex')$ if $(\propagatekind,\tid,\tid',\anindex',\anaddr)=\lastOf{\projectionOf{(\tau_1^m\cdots\tau_{\headindex-1}^m\cdot\tau_{\headindex}^s)}{(\propagatekind,\tid,*,*,\anaddr)}}$, otherwise, $\propagated(\tid,\anaddr)=\initialstore_{\anaddr}$.
  \end{description}
\descitem{SND-C} For each $\tid\in\tiddomain$: $\mcontrolstate(\tid)=\destinationstateOf{\lastOf{\powerstate_1^s.\rtstatesmap(\tid).\fetched}}$ (or $\initialcontrolstate_{\tid}$ if no instructions were fetched).
\descitem{SND-D} For each $\tid\in\tiddomain$, $\areg\in\regdomain$, for each $\headindex\in\intrange{\mregcomphead(\tid,\areg)}{n}$: $\mregvalue(\tid,\areg)=\evaluate(\tid,\minstrcount(\tid)+1,\areg)$ computed for the state $\powerstate_\headindex^s$.
\descitem{SND-E} For each $\tid\in\tiddomain$, $\areg\in\regdomain$, $\headindex\in\intrange{\mregcommhead(\tid,\areg)}{n}$: let $\anindex$ be the index of the latest instruction in $\powerstate_{\headindex}^s.\rtstatesmap(\tid).\fetched$ writing to $\areg$, then $\anindex\in\powerstate_{\headindex}^s.\rtstatesmap(\tid).\committed$.
\descitem{SND-F} For each $\tid\in\tiddomain$, $\headindex\in\intrange{\massumecommhead(\tid)}{n}$: $\powerstate_{\headindex}^s$ does not contain uncommitted conditional instructions in thread $\tid$ having indices $\leq\minstrcount(\tid)$.
\descitem{SND-G} For each $\tid\in\tiddomain$, $\anaddr\in\addrdomain$, $\headindex\in\headdomain$: let $w:=\powerstate_{\headindex}^s.\propagated(\tid,\anaddr)$.
If $w=\initialstore_{\anaddr}$, $\mmemvalue(\tid,\anaddr,\headindex)=0$.
If $w=(\tid',\anindex')$, $\mmemvalue(\tid,\anaddr,\headindex)=\getvalue(\tid',\anindex')$ computed in $\powerstate_{\headindex}^s$.
\descitem{SND-H} For each $\tid\in\tiddomain$, $\anaddr\in\addrdomain$, $\headindex\in\headdomain$: $\guessed{\mlastkey}(\tid,\anaddr,\headindex)\leq\powerstate_{\headindex}^s.\coherence(\powerstate_{\headindex}^s.\propagated(\tid,\anaddr))=\mlastkey(\tid,\anaddr,\headindex)\leq\guessed{\mlastkey}(\tid,\anaddr,\headindex+1)$.
\descitem{SND-K} For each $\tid\in\tiddomain$, $\anaddr\in\addrdomain$, $\headindex\in\headdomain$: let $\anindex\in\naturalnumbers$ be the maximal index, such that $\powerstate_{\headindex}^s.\rtstatesmap(\tid).\fetched[\anindex]$ is a store, $\getaddr(\tid,\anindex)=\anaddr$ in $\powerstate_n^s$.
Let $\anindex'$ be the maximal index, such that $\powerstate_{\headindex}^s.\rtstatesmap(\tid).\fetched[\anindex']$ is a store, $\getaddr(\tid,\anindex')\in\set{\bot,\anaddr}$ in $\powerstate_{\headindex}^s$.
Then $\mearlymemvalue(\tid,\anindex,\headindex)=\bot$ if such $\anindex$ does not exist or $\anindex\in\powerstate_{\headindex}^s.\rtstatesmap(\tid).\committed$.
Otherwise, $\mearlymemvalue(\tid,\anindex,\headindex)=\top$ if $\getaddr(\tid,\anindex')=\bot$ or $\getvalue(\tid,\anindex)=\bot$ in $\powerstate_{\headindex}^s$.
Otherwise, $\mearlymemvalue(\tid,\anindex,\headindex)=\getvalue(\tid,\anindex)$ computed in $\powerstate_{\headindex}^s$.
\descitem{SND-L} For each $\tid\in\tiddomain$, $\headindex\in\intrange{\maddrcomphead(\tid)}{n}$, $\anindex\in\intrange{1}{\lengthOf{\powerstate_{\headindex}^s.\rtstatesmap(\tid).\fetched}}$: $\getaddr(\tid,\anindex)\neq\bot$ in $\powerstate_{\headindex}^s$.
\descitem{SND-M} For each $\tid\in\tiddomain$, $\anaddr\in\addrdomain$, $\headindex\in\intrange{\maddrcommhead(\tid,\anaddr)}{n}$: if $\getaddr(\tid,\anindex)=\anaddr$ in $\powerstate_n^s$ for some $\anindex$, then $\anindex\in\powerstate_{\headindex}^s.\rtstatesmap(\tid).\committed$.
\end{description}

Finally we will show that $\powerstate_{\headindex}^{m}=\powerstate_{\headindex+1}^0$ for all $\headindex\in\intrange{1}{n-1}$ and $\powerstate_{n}^{m}\in\finalpowerstates$, thus proving the claim of the lemma.

Base case: $s=1$, we must show that there $\mhstate_1$ satisfies the inductive statement.
This is easy to check by definition of the destination state of \descref{MH-GUESS} transition.

Step case: assume the inductive statement holds for some $s\in\intrange{0}{m-1}$.
Consider $\lambda_s$ (for notational convenience and without loss of generality we assume below that $\headindex_j\neq\headindex_{j'}$ for $j\neq j'$):
\begin{description}

\item[Assignment]
$\lambda_s=(\headindex_1,\fetchkind,\tid,\instruction)\cdot(\headindex_3,\commitkind,\tid,\anindex)$, $\instruction=\controlstate_1\transitionto{\theassign{\areg}{\anexpr_{\aval}}}\controlstate_2$.
Let $\event_1:=(\fetchkind,\tid,\instruction)$, $\event_3:=(\commitkind,\tid,\anindex)$.

We need to show that $\powerstate_{\headindex_1}^{s-1}\transitionto{\event_1}\powerstate_{\headindex_1}^s$, i.e. that the assignment instruction can be fetched.
This follows from the choice of $\headindex_1:=1$ in \descref{MH-ASSIGN} and \descref{SND-B1}, \descref{SND-C}.

We also need to show that $\powerstate_{\headindex_3}^{s-1}\transitionto{\event_3}\powerstate_{\headindex_3}^s$, i.e. that the assignment instruction can be committed.
First, the $\event_3$ transition requires the instruction being committed to be fetched, which holds due to \descref{SND-B1} and $\headindex_3\geq\headindex_1$.
Second, this instruction must be not committed yet, which holds by \descref{SND-B2} and the fact that $\mhautomaton$ commits each instruction once and only once.
Third, all control dependencies must be committed.
This is by the choice of $\headindex_3$ in \descref{MH-ASSIGN} and \descref{SND-F}.
Fourth, all the preceding data dependencies must be committed.
This is by the choice of $\headindex_3$ in \descref{MH-ASSIGN} and \descref{SND-E}.
Finally, the argument of the function must be computed.
This is by choice of $\headindex_3\geq\headindex_2$ in \descref{MH-ASSIGN}, Lemma~\ref{Lemma:OnceComputedDoesNotChange}, and \descref{SND-D}.

In the end, we must show that the invariants hold in the new state.
The only non-trivial thing is \descref{SND-D}, which holds due to \descref{SND-D}, definition of $\aval$ in \descref{MH-ASSIGN}, definitions of $\evaluate$, and the fact that functions in $\functiondomain$ are deterministic.

\item[Assume]
$\lambda_s=(\headindex_1,\fetchkind,\tid,\controlstate_1\transitionto{\instruction}\controlstate_2)\cdot(\headindex_3,\commitkind,\tid,\anindex)$, $\instruction=\theassume{\anexpr_{\aval}}$.
The proof is similar to the previous case.
The $\commitkind$ transition additionally requires $\evaluate(\tid,\anindex,\anexpr_{\aval})\neq 0$, which holds due to the fact that a similar check in \descref{MH-ASSUME} holds, \descref{SND-D}, definitions of $\evaluate$, the fact that functions in $\functiondomain$ are deterministic.

\item[Load]
$\lambda_s=(\headindex_1,\fetchkind,\tid,\instruction)\cdot(\headindex_2,\loadkind,\tid,\anindex,\anaddr)\cdot(\headindex_3,\commitkind,\tid,\anindex)$, $\instruction=\theload{\areg}{\anexpr_{\anaddr}}$.
Let $\event_1:=(\fetchkind,\tid,\instruction)$, $\event_2:=(\loadkind,\tid,\anindex,\anaddr)$, $\event_3:=(\commitkind,\tid,\anindex)$.

$\powerstate_{\headindex_1}^{s-1}\transitionto{\event_1}\powerstate_{\headindex_1}^s$ holds for the same reasons as before.

Next, we show that $\powerstate_{\headindex_2}^{s-1}\transitionto{\event_2}\powerstate_{\headindex_2}^s$, where this transition is a \descref{POW-EARLY} transition in the early read case of \descref{MH-LOAD} and a \descref{POW-LOAD} transition in the load from memory case.
First, we must show that $\powerstate_{\headindex}^s.\rtstatesmap(\tid).\loaded[\anindex]=\bot$.
This holds by \descref{SND-B3} and the fact that $\mhautomaton$ generates a $\loadkind$ event once and only once for a single fetched load instruction.

Assume the early read case.
This means, $\mearlymemvalue(\tid,\anaddr,\headindex_2)\in\datadomain$.
By \descref{SND-K}, this means, the last fetched store with an unknown address or address of the load is not yet committed, has the address of the load and has the value known.
By \descref{POW-EARLY}, the load can take the value from this store, and \descref{SND-B3} holds in the new state.

Consider the load from memory case.
This means, $\mearlymemvalue(\tid,\anaddr,\headindex_2)=\bot$.
By \descref{SND-K}, this means, there is no earlier fetched store with the same address which is not yet committed.
By \descref{POW-LOAD}, the load can take the value from the last propagated store, and \descref{SND-B3} holds in the new state.

Argumentation for $\powerstate_{\headindex_3}^{s-1}\transitionto{\event_3}\powerstate_{\headindex_3}^s$ is similar to the previous cases.
Additionally, first we must show that $\powerstate_{\headindex}^s.\rtstatesmap(\tid).\loaded[\anindex]\neq\bot$.
This is by $\headindex_3\geq\headindex_2$ (\descref{MH-LOAD}), \descref{SND-B3}.
Second, we must ensure that all preceding instructions accessing the same address $\anaddr$ are committed, and there are no previously fetched instructions with unknown address.
This holds by choice of $\headindex_3$ in \descref{MH-LOAD}, \descref{SND-L}, and \descref{SND-M}.

In the new state, \descref{SND-D} holds by definition of $\aval$ in \descref{POW-LOAD}, definitions of $\evaluate$, \descref{SND-G}, and \descref{SND-K}.
Proofs for the other conditions are simpler.

\item[Store]
$\lambda_s=(\headindex_1,\fetchkind,\tid,\instruction)\cdot(\headindex_3,\commitkind,\tid,\anindex,\coherencekey,\anaddr)\cdot(\headindex_3,\propagatekind,\tid,\tid,\anindex,\anaddr)\cdot(\headindex_4,\propagatekind,\tid_1,\tid,\anindex,\anaddr)\cdots(\headindex_{u+3},\propagatekind,\tid_u,\tid,\anindex,\anaddr)$.
Let $\event_1:=(\fetchkind,\tid,\instruction)$, $\event_3:=(\commitkind,\tid,\anindex,\coherencekey,\anaddr)$, $\event_4:=(\propagatekind,\tid,\tid,\anindex,\anaddr)$, $\event_{j+3}:=(\propagatekind,\tid_j,\tid,\anindex,\anaddr)$ for $j\in\intrange{1}{u}$.

$\powerstate_{\headindex_1}^{s-1}\transitionto{\event_1}\powerstate_{\headindex_1}^s$ holds for the same reasons as before.

$\powerstate_{\headindex_3}^{s-1}\transitionto{\event_3}\powerstate_{\headindex_3}^s$ holds for the same reasons as in the case of a load.
The requirement that the coherence key is unique in \descref{POW-STORE} follows from a similar requirement in \descref{MH-STORE} and \descref{SND-H}.
By \descref{POW-STORE}, the only available transition from $\powerstate_{\headindex_2}^s$ is a propagation of the write to its thread, i.e. $\event_4$, which indeed follows $\event_3$ in $\tau$.
Next, we show that $\event_4$ and further propagate transitions are feasible.

First, \descref{POW-PROP} rule requires the write being propagated to have a coherence key (i.e. to be committed), which holds by choice of $\headindex_j$, $j\in\intrange{3}{u+3}$ in \descref{MH-STORE} and \descref{SND-B2}.
Second, it requires the coherence key of the latest propagated store to be less than the key of the store being propagated.
This is adhered due to the check $\mlastkey(\tid',\anaddr,\headindex)<\coherencekey$ and \descref{SND-H}.

It is easy to see that the inductive statements hold in the new state as well.

\end{description}

Now we prove $\powerstate_{\headindex}^{m}=\powerstate_{\headindex+1}^0$ for all $\headindex\in\intrange{1}{n-1}$.
The equality of $\rtstatesmap$ components immediately follows from \descref{SND-B} inductive statement.

Now we prove $\powerstate_{n}^{m}\in\finalpowerstates$.
\descref{FIN-COMM} holds, because $\powerautomaton$ always emits a commit event for each fetched instruction.

Let us turn to \descref{FIN-LD} property.
First, one should note that $\mhautomaton$ generates $\propagatekind$ events for stores to the same address in each part $\tau_j$ in the ascending order by $\coherencekey$.
This is by \descref{MH-STORE}.
Together with \descref{SND-H}, this means that these events are sorted in $\tau$ in the ascending order by $\coherencekey$.
The rest of the proof of \descref{FIN-LD} is a simple case consideration: whether the loads $\anindex$, $\anindex'$ were done from memory or from a local store early.

\descref{FIN-LD-ST} is proven by a similar case consideration.

\qed
\end{proof}

We call $\alpha$ a \emph{prefix} of $\sigma$ and write $\alpha\sqsubseteq\sigma$ if $\sigma=\alpha\cdot\beta$ for some $\beta$.

\begin{lemma}\label{Lemma:GeneratesAllNormalFormComputations}
$\setcond{\tau\in\computationsOf{\program}{\power}}{\tau\text{ is in normal form of degree $n$}}\subseteq\langOf{\mhautomaton}$.
\end{lemma}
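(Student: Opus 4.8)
The plan is to build, for a given normal-form computation $\tau\in\computationsOf{\program}{\power}$ with partitioning $\tau=\tau_1\cdots\tau_n$, an accepting run of $\mhautomaton(\program)$ that generates exactly $\tau$. Running $\tau$ in $\powerautomaton(\program)$ yields a chain $\initialpowerstate\transitionto{\tau_1}\powerstate^1\transitionto{\tau_2}\cdots\transitionto{\tau_n}\powerstate^n\in\finalpowerstates$. By \descref{NF-A} every instruction occurring in $\tau$ has its $\fetchkind$ event in $\tau_1$, so the order in which these $\fetchkind$ events appear in $\tau_1$ induces a total order $\prec$ on instructions; $\prec$ refines the program order of every thread (fetches within a thread are emitted in program order), and by \descref{NF-B} it is consistent with the event order inside every part: if $I\prec J$ and both have an event in $\tau_\headindex$, then $I$'s event comes first there.

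First I would fix the \descref{MH-GUESS} transition, choosing $\guessed{\mmemvalue}(\tid,\anaddr,\headindex+1)$ to be the value of the store $\powerstate^{\headindex}.\propagated(\tid,\anaddr)$ (or $0$ for the initial store) and $\guessed{\mlastkey}(\tid,\anaddr,\headindex+1)$ to be $\powerstate^{\headindex}.\coherence(\powerstate^{\headindex}.\propagated(\tid,\anaddr))$ for $\headindex\in\intrange{1}{n-1}$; the monotonicity requirement on $\mlastkey'$ holds because along a Power run the coherence keys of propagated stores never decrease. Then I would walk through the instructions in $\prec$-order and, for each instruction $I$, fire the matching rule among \descref{MH-ASSIGN}, \descref{MH-ASSUME}, \descref{MH-LOAD}, \descref{MH-STORE}, taking $\headindex_1:=1$, $\headindex_2$ to be the part of $I$'s $\loadkind$ event (for a load; for an assign, the reconstructed compute part of the assigned register, which is fixed by the $\loadkind$-event parts of the loads $I$ transitively reads), $\headindex_3$ the part of $I$'s $\commitkind$ event, and for a store the propagation targets $T$ and their parts $\headindex_{j+3}$ read off from $I$'s $\propagatekind$ events in $\tau$. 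Since the rule then emits $I$'s events with exactly those parts, since the part-$\headindex$ projection of $\tau$ orders the events of distinct instructions by $\prec$, and since the canonical order of the events \descref{MH-*} emits for one instruction ($\fetchkind$, then $\loadkind$, then $\commitkind$, then the $\propagatekind$ events) coincides with their order in $\tau$, the computation generated by this run is precisely $\tau$. What remains is enabledness of all transitions and reaching $\finalmhstates$.

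For that I would prove, by induction on $k$, an invariant mirroring \descref{SND-A}--\descref{SND-M} of Lemma~\ref{Lemma:GeneratesOnlyCorrectComputations}: after processing the $\prec$-smallest $k$ instructions, the part-$\headindex$ projection of the events emitted so far is a prefix of $\tau_\headindex$, and running it from $\powerstate^{\headindex-1}$ (with $\powerstate^{0}:=\initialpowerstate$) reaches a Power state whose components match the current $\mhautomaton$ state exactly as \descref{SND-B}--\descref{SND-M} describe. Granting the invariant, the guard of the rule for the next instruction reduces to enabledness of the corresponding \descref{POW-FETCH}/\descref{POW-LOAD}/\descref{POW-EARLY}/\descref{POW-COMMIT}/\descref{POW-STORE}/\descref{POW-PROP} transition along the run of $\tau$: for instance $\headindex_2\geq\mregcomphead(\tid,\cdot)$ and $\headindex_3\geq\mregcommhead(\tid,\cdot),\massumecommhead(\tid),\maddrcomphead(\tid),\maddrcommhead(\tid,\anaddr)$ hold because the part of $I$'s $\loadkind$/$\commitkind$ event cannot sit before the part in which a value $I$ reads is computed or a dependency of $I$ is committed (such an event precedes $I$'s in $\tau$, hence lies in an equal or earlier part); uniqueness of $\coherencekey$ and $\mlastkey(\tid,\anaddr,\headindex)<\coherencekey$ follow from the analogous facts about $\tau$ together with \descref{SND-H}; and which branch of \descref{MH-LOAD} applies is forced by Lemmas~\ref{Lemma:EarlyReadLooksLikeThis} and~\ref{Lemma:LoadFromMemoryLooksLikeThis}, mirrored by $\mearlymemvalue$ as in \descref{SND-K}. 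Reaching a final state is immediate since both $\mmemvalue(\tid,\anaddr,\headindex)$ and $\guessed{\mmemvalue}(\tid,\anaddr,\headindex+1)$ end up equal to the memory valuation of $\powerstate^{\headindex}$ (and likewise for $\mlastkey$). The hard part will be keeping the memory components accurate despite stores being processed out of $\tau$-order: one must show, for instance, that when a load reads via \descref{POW-LOAD} from a store whose propagation to the reading thread lies in an \emph{earlier} part, the guessed value already records that store, whereas if that propagation lies in the \emph{same} part as the $\loadkind$ event then \descref{NF-B} forces the store to be $\prec$-before the load and hence already processed; an analogous case analysis for in-flight stores drives the $\mearlymemvalue$ bookkeeping. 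These verifications, though routine, are the real content of the induction.
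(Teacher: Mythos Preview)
Your proposal is correct and follows essentially the same route as the paper: process instructions in the order their $\fetchkind$ events appear in $\tau_1$, fire the matching \descref{MH-*} rule with the head indices read off from $\tau$, and maintain a large invariant relating the $\mhautomaton$ state to the intermediate Power states $\powerstate^{\headindex-1}\transitionto{\tau_\headindex^{s}}\powerstate_\headindex^{s}$; the choice of the initial guess and the final-state check are identical to yours.

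One remark on the invariant you intend to carry: mirroring \descref{SND-A}--\descref{SND-M} is not quite enough. Those clauses only say, e.g., that $\mregvalue$ is correct for parts $\headindex\geq\mregcomphead(\tid,\areg)$, but to establish the guard $\headindex_2\geq\mregcomphead(\tid,\areg)$ you also need the converse direction, namely that for $\headindex<\mregcomphead(\tid,\areg)$ the value is \emph{not} yet computed in $\powerstate_\headindex^{s}$. The paper makes this explicit by pairing several \descref{SND-*}-style clauses with complementary ``primed'' clauses (\descref{CMPL-F'}, \descref{CMPL-G'}, \descref{CMPL-K'}, \descref{CMPL-P'}). You do argue this content informally (``the part of $I$'s $\loadkind$/$\commitkind$ event cannot sit before the part in which a value $I$ reads is computed''), so the gap is only in the statement of the invariant, not in the idea; just be sure to add the primed clauses when you write it out.
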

\begin{proof}
Let $\tau=\tau_1\cdots\tau_n\in\computationsOf{\program}{\memorymodel}$ be a normal-form computation, i.e. $\initialpowerstate\transitionto{\tau}\powerstate\in\finalpowerstates$.
We show that there is a sequence of transitions $\initialmhstate\transitionto{\lambda_1}\mhstate_1\transitionto{\lambda_2}\ldots\transitionto{\lambda_m}\mhstate_m\in\finalmhstates$, such that $\tau_\headindex=\secondOf{\projectionOf{(\lambda_1\cdots\lambda_n)}{(\set{\headindex}\times\events)}}$.

Let $\tau_{\headindex}^s:=\secondOf{\projectionOf{(\lambda_1\cdots\lambda_s)}{(\set{\headindex}\times\events)}}$, $\initialpowerstate\transitionto{\tau_1\cdots\tau_{\headindex-1}\cdot\tau_{\headindex}^s}\powerstate_{\headindex}^s\transitionto{}^*\powerstate$.
By induction on $s\in[1,\infty)$ we show the following inductive statements:
\begin{description}
\descitem{CMPL-A} There is a sequence of $s$ transitions: $\initialmhstate\transitionto{\lambda_1}\mhstate_1\transitionto{\lambda_2}\ldots\transitionto{\lambda_s}\mhstate_s$.
\descitem{CMPL-B} For all $\headindex\in\headdomain$: $\tau_{\headindex}=\tau_{\headindex}^s.\overline{\tau_{\headindex}^s}$ for some $\overline{\tau_{\headindex}^s}$.
\descitem{CMPL-C} If $\event_1,\event_2\in\tau$ are two events related to instruction $(\tid,\anindex)$, then $\event_1\in\tau_{\headindex}^s$ for some $\headindex$ iff $\event_2\in\tau_{\headindex'}^s$ for some $\headindex'$.
\descitem{CMPL-D} For each $\tid\in\tiddomain$: $\mcontrolstate(\tid)=\destinationstateOf{\lastOf{\powerstate_{1}^s.\rtstatesmap(\tid).\fetched}}$ (or $\mcontrolstate(\tid)=\initialcontrolstate_{\tid}$ if no instructions were fetched).
\descitem{CMPL-F} For each $\tid\in\tiddomain$, $\areg\in\regdomain$, $\headindex\in\intrange{\mregcomphead(\tid,\areg)}{n}$: $\mregvalue(\tid,\areg)=\evaluate(\tid,\minstrcount(\tid)+1,\areg)$ computed for the state $\powerstate_\headindex^s$.
\descitem{CMPL-F'} For each $\tid\in\tiddomain$, $\areg\in\regdomain$, $\headindex\in\intrange{1}{\mregcomphead(\tid,\areg)-1}$: $\evaluate(\tid,\minstrcount(\tid)+1,\areg)=\bot$.
\descitem{CMPL-G} For each $\tid\in\tiddomain$, $\areg\in\regdomain$, $\headindex\in\intrange{\mregcommhead(\tid)}{n}$: let $\anindex$ be the index of the last instruction in $\powerstate_{\headindex}^s.\rtstatesmap(\tid).\fetched$ writing to $\areg$, then $\anindex\in\powerstate_{\headindex}^s.\rtstatesmap(\tid).\committed$.
\descitem{CMPL-G'} For each $\tid\in\tiddomain$, $\areg\in\regdomain$, $\headindex\in\intrange{1}{\mregcommhead(\tid,\areg)-1}$: let $\anindex$ be the index of the last instruction in $\powerstate_{\headindex}^s.\rtstatesmap(\tid).\fetched$, then $\anindex\not\in\powerstate_{\headindex}^s.\rtstatesmap(\tid).\committed$.
\descitem{CMPL-K} For each $\tid\in\tiddomain$, $\headindex\in\intrange{\massumecommhead(\tid)}{n}$: let $\anindex$ be an index of an $\theassume{}$ instruction in $\powerstate_{\headindex}^s.\rtstatesmap(\tid).\fetched$, then $\anindex\in\powerstate_{\headindex}^s.\rtstatesmap(\tid).\committed$.
\descitem{CMPL-K'} For each $\tid\in\tiddomain$, $\headindex\in\intrange{1}{\massumecommhead(\tid)-1}$: let $\anindex$ be an index of the last $\theassume{}$ instruction in $\powerstate_{\headindex}^s.\rtstatesmap(\tid).\fetched$, then $\anindex\not\in\powerstate_{\headindex}^s.\rtstatesmap(\tid).\committed$.
\descitem{CMPL-L} For each $\tid\in\tiddomain$, $\anaddr\in\addrdomain$, $\headindex\in\headdomain$: let $w:=\powerstate_{\headindex}^s.\propagated(\tid,\anaddr)$.
If $w=\initialstore_{\anaddr}$, $\mmemvalue(\tid,\anaddr,\headindex)=0$.
If $w=(\tid',\anindex')$, $\mmemvalue(\tid,\anaddr,\headindex)=\getvalue(\tid',\anindex')$ computed in $\powerstate_{\headindex}^s$.
\descitem{CMPL-M} For each $\tid\in\tiddomain$, $\anaddr\in\addrdomain$, $\headindex\in\headdomain$: $\guessed{\mlastkey}(\tid,\anaddr,\headindex)<\powerstate_{\headindex}^s.\coherence(\powerstate_{\headindex}^s.\propagated(\tid,\anaddr))=\mlastkey(\tid,\anaddr,\headindex)\leq\guessed{\mlastkey}(\tid,\anaddr,\headindex+1)$.
\descitem{CMPL-N} For each $\tid\in\tiddomain$, $\anaddr\in\addrdomain$, $\headindex\in\headdomain$: let $\anindex\in\naturalnumbers$ be the maximal index, such that $\powerstate_{\headindex}^s.\rtstatesmap(\tid).\fetched[\anindex]$ is a store, $\getaddr(\tid,\anindex)=\anaddr$ in $\powerstate_n^s$.
Let $\anindex'$ be the maximal index, such that $\powerstate_{\headindex}^s.\rtstatesmap(\tid).\fetched[\anindex']$ is a store, $\getaddr(\tid,\anindex')\in\set{\bot,\anaddr}$ in $\powerstate_{\headindex}^s$.
Then $\mearlymemvalue(\tid,\anindex,\headindex)=\bot$ if such $\anindex$ does not exist or $\anindex\in\powerstate_{\headindex}^s.\rtstatesmap(\tid).\committed$.
Otherwise, $\mearlymemvalue(\tid,\anindex,\headindex)=\top$ if $\getaddr(\tid,\anindex')=\bot$ or $\getvalue(\tid,\anindex)=\bot$ in $\powerstate_{\headindex}^s$.
Otherwise, $\mearlymemvalue(\tid,\anindex,\headindex)=\getvalue(\tid,\anindex)$ computed in $\powerstate_{\headindex}^s$.
\descitem{CMPL-P} For each $\tid\in\tiddomain$, $\anaddr\in\addrdomain$, $\headindex\in\intrange{\maddrcommhead(\tid,\anaddr)}{n}$: if $\getaddr(\tid,\anindex)=\anaddr$ in $\powerstate_n^s$ for some $\anindex$, then $\anindex\in\powerstate_{\headindex}^s.\rtstatesmap(\tid).\committed$.
\descitem{CMPL-P'} For each $\tid\in\tiddomain$, $\anaddr\in\addrdomain$, $\headindex\in\intrange{1}{\maddrcommhead(\tid,\anaddr)-1}$: there is $\anindex$ with $\getaddr(\tid,\anindex)=\anaddr$ in $\powerstate_n^s$, such that $\anindex\in\powerstate_{\headindex}^s.\rtstatesmap(\tid).\committed$.
\descitem{CMPL-R} For each $\tid\in\tiddomain$: $\minstrcount(\tid)=\lengthOf{\powerstate_1^s.\rtstatesmap(\tid).\fetched}$.
\end{description}

Base case: $s=1$.
We choose the first (\descref{MH-GUESS}) transition $\initialmhstate\transitionto{\lambda_1}\mhstate_1$, so that the inductive statements hold:
\begin{description}
\item[Guess]
We define $\mmemvalue$ and $\mlastkey$ components of $\mhstate_1$ according to \descref{CMPL-L} and \descref{CMPL-M} requirements.
The other inductive statements trivially hold.
\end{description}

Assume the inductive statements hold for $s$ and $\overline{\tau_{\headindex}^s}\neq\emptysequence$ for some $\headindex\in\headdomain$.
We show they hold for $s':=s+1$.
The proof is done by pointing out an appropriate transition $\mhstate_s\transitionto{\lambda_{s+1}}\mhstate_{s+1}$.
We choose the first possible option out of the following:
\begin{description}

\item[Assignment]
Assume $\event_1\sqsubseteq\overline{\tau_{\headindex_1}^s}$, $\event_3\sqsubseteq\overline{\tau_{\headindex_3}^s}$, where
$\headindex_1<\headindex_3$ ($\headindex_1=\headindex_3$ is possible, but here and further we write strict inequalities for notational convenience),
$\event_1:=(\fetchkind,\tid,\controlstate_1\transitionto{\command}\controlstate_2)$,
$\event_3:=(\commitkind,\tid,\anindex)$,
$\headindex_1=1$,
$\anindex=\minstrcount(\tid)$,
$\command=\theassign{\areg}{\anexpr_{\aval}}$.
Then, as we show next, a \descref{MH-ASSIGN} transition is feasible.

First, $\powerstate_{\headindex_1}^s\transitionto{\event_1}$, therefore, the state of the last fetched instruction in thread $\tid$ in $\powerstate_{\headindex_1}^s$ is $\controlstate_1$.
By \descref{CMPL-D}, $\mcontrolstate(\tid)=\controlstate_1$ too.

Second, we choose $\headindex_2:=\max\setcond{\mregcommhead(\tid,\areg)}{\areg\text{ is read in }\command}$.
It satisfies the requirements from \descref{MH-ASSIGN}.
Note that $\headindex_2\leq\headindex_3$ by \descref{CMPL-F'} and \descref{POW-COMMIT}: an instruction cannot be committed, until its arguments are computed.

Third, we must show that for each register $\areg$ read by the instruction holds $\headindex_3\geq\mregcommhead(\tid,\areg)$ and $\headindex_3$.
This holds by \descref{CMPL-G'}, \descref{CMPL-K'}, and \descref{POW-COMMIT}: an instruction cannot be committed until its data and control dependencies are committed.

In the destination state, \descref{CMPL-F} holds by \descref{CMPL-F} in the source state, definition of $\mregvalue'$ in \descref{MH-ASSIGN} and definitions of $\evaluate$.
The other inductive statements trivially hold.

\item[Assume]
Assume $\event_1\sqsubseteq\overline{\tau_{\headindex_1}^s}$, $\event_3\sqsubseteq\overline{\tau_{\headindex_3}^s}$, where
$\headindex_1<\headindex_3$,
$\event_1=(\fetchkind,\tid,\controlstate_1\transitionto{\command}\controlstate_2)$,
$\event_3=(\commitkind,\tid,\anindex)$, where $\anindex=\minstrcount(\tid)$,
$\headindex_1=1$,
$\anindex=\minstrcount(\tid)$,
$\command=\theassume{\anexpr_{\aval}}$.
Then, a \descref{MH-ASSUME} transition is feasible.

The proof is similar to the proof for the case of assignment.
The \descref{MH-ASSUME} transition additionally requires $\evaluate(\tid,\anexpr_{\aval})\neq 0$.
This holds by \descref{CMPL-F}, definition of $\mregvalue'$ in \descref{MH-ASSIGN} and definitions of $\evaluate$.

The inductive statements trivially hold in the destination state.

\item[Load]
Assume $\event_1\sqsubseteq\overline{\tau_{\headindex_1}^s}$, $\event_2\sqsubseteq\overline{\tau_{\headindex_2}^s}$, $\event_3\sqsubseteq\overline{\tau_{\headindex_3}^s}$, where
$\headindex_1<\headindex_2<\headindex_3$,
$\event_1=(\fetchkind,\tid,\controlstate_1\transitionto{\command}\controlstate_2)$,
$\event_2=(\loadkind,\tid,\anindex,\anaddr)$,
$\event_3=(\commitkind,\tid,\anindex)$,
$\anindex=\minstrcount(\tid)$,
$\command=\theload{\areg}{\anexpr_{\aval}}$.
We show that a \descref{MH-LOAD} transition is feasible.
We point out only differences with respect to the proof for the assignment case.

Assume $\event_2$ was produced by a \descref{POW-EARLY} transition.
This means, the last store writing to $\anaddr$ has its address known and is not committed yet in $\powerstate_{\headindex_2}^s$.
Then, by \descref{CMPL-N}, $\mearlymemvalue(\tid,\anaddr,\headindex_2)\in\datadomain$, and we have $\aval:=\mearlymemvalue(\tid,\anaddr,\headindex_2)$.
Assume $\event_2$ was produced by a \descref{POW-LOAD} transition.
Then, \descref{POW-EARLY} transition was not possible (Lemma~\ref{Lemma:EarlyReadLooksLikeThis}, Lemma~\ref{Lemma:LoadFromMemoryLooksLikeThis}).
This means, there was no in-flight stores to $\anaddr$ in $\powerstate_{\headindex_2}^s$.
Then, by \descref{CMPL-N}, $\mearlymemvalue(\tid,\anaddr,\headindex_2)=\bot$, and we have $\aval:=\mmemvalue(\tid,\anaddr,\headindex_2)$.
In both cases, by \descref{CMPL-N}, \descref{CMPL-L} we have $\mregvalue'$ and $\mregcomphead'$ satisfying \descref{CMPL-F} and \descref{CMPL-F'}.

Additionally, we must show that $\headindex_3\geq\maddrcommhead(\tid,\anaddr)$.
This holds by \descref{CMPL-P'} and \descref{CMPL-N}.

\item[Store]
Assume $u\in\naturalnumbers$, $\event_j\sqsubseteq\overline{\tau_{\headindex_j}^s}$ for $j\in\intrange{1}{u+3}$, where
$\headindex_2=\headindex_3$,
$\event_1=(\fetchkind,\tid,\controlstate_1\transitionto{\command}\controlstate_2)$,
$\event_2=(\commitkind,\tid,\anindex,\coherencekey,\anaddr)$,
$\event_3=(\propagatekind,\tid,\tid,\anindex,\anaddr)$,
$\event_j=(\propagatekind,\tid_j,\tid,\anindex,\anaddr)$ for $j\in\intrange{4}{u+3}$,
$\anindex=\minstrcount(\tid)$,
$\command=\thestore{\anexpr_{\anaddr}}{\anexpr_{\aval}}$.
Assume that there are no other $\propagatekind$ events for $(\tid,\anindex)$ in $\tau$, except for $\event_3\ldots\event_{u+3}$.
We show that a \descref{MH-STORE} transition is feasible.

The requirements to be checked are similar to those in the load case.
The requirement that $\coherencekey$ is not already used holds by \descref{CMPL-M} and the fact that the same requirement in \descref{POW-STORE} is met.

Consider the requirements in \descref{MH-STORE} for generating $\propagatekind$ events.
The requirement that propagation event to thread $\tid$ is generated in the same part as commit is met by assumption $\headindex_3=\headindex_2$.
The requirement $\mlastkey(\tid',\anaddr,\headindex)<\coherencekey\leq\guessed{\mlastkey}(\tid',\anaddr,\headindex+1)$ is met by \descref{CMPL-L}, choice of $\guessed{\mlastkey}$ in the initial transition, and \descref{POW-PROP}.

This means, inductive invariant \descref{CMPL-A} holds for $s+1$.
Also, \descref{CMPL-B} holds by choice of $\event_1\ldots\event_{u+3}$, \descref{CMPL-D} holds trivially.
\descref{CMPL-C} holds by assumption that there are no other $\propagatekind$ events in $\tau$, except for $\event_3\ldots\event_{u+3}$.
\descref{CMPL-F}, \descref{CMPL-F'}, \descref{CMPL-G}, \descref{CMPL-G'} hold as store instruction does not affect register values.
\descref{CMPL-K}, \descref{CMPL-K'} hold as a store instruction is not $\theassume$.
\descref{CMPL-L} holds by definition of $\mmemvalue'$ in \descref{MH-STORE}.
\descref{CMPL-M} holds by definition of $\mlastkey'$ in \descref{MH-STORE}.
\descref{CMPL-N} holds by definition of $\mearlymemvalue'$ in \descref{MH-STORE}.
\descref{CMPL-P}, \descref{CMPL-P'} hold by definition of $\maddrcommhead'$ in \descref{MH-STORE}.
\descref{CMPL-R} hold by definition of $\minstrcount'$ in \descref{MH-STORE}.

\end{description}

Now we must show that one of the cases above always takes place.
Consider the event $\event=\firstOf{\overline{\tau_{1}^s}}$.
By \descref{CMPL-C} and the fact that $\tau\in\computationsOf{\program}{\power}$, it is a $\fetchkind$ event $(\fetchkind,\tid,\anindex,\instruction)$.
Choose the case based on the kind of $\instruction$.
By \descref{NF-A} and \descref{NF-B}, all events related to the instruction $(\tid,\anindex)$ constitute prefixes of $\overline{\tau_{\headindex}^s}$, $\headindex\in\headdomain$.
The requirement $\anindex=\minstrcount(\tid)$ holds by \descref{CMPL-R}.
The requirements like $\headindex_1\leq\headindex_2\leq\headindex_3$ in the load case naturally follow from the fact that $\tau\in\computationsOf{\program}{\power}$.

Assume $\tau_{\headindex}^s=\tau_{\headindex}$ for all $\headindex\in\headdomain$.
Then $\tau_{\headindex}^s\in\finalmhstates$ by choice of $\guessed{\mmemvalue}$ and $\guessed{\mlastkey}$ in $\mhstate_1$ and \descref{CMPL-L}, \descref{CMPL-M}.

\qed
\end{proof}

\begin{lemma}\label{Lemma:MultiheadedAutomatonDoesWhatWeWant}
$\setcond{\tau\in\computationsOf{\program}{\power}}{\tau\text{ is in normal form of degree $n$}}\subseteq\langOf{\mhautomaton(\program)}\subseteq\computationsOf{\program}{\power}$.
\end{lemma}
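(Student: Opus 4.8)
The statement is a chained inclusion, and each of the two inclusions has already been established as a separate lemma, so the plan is simply to invoke them. For the left inclusion, $\setcond{\tau\in\computationsOf{\program}{\power}}{\tau\text{ is in normal form of degree }n}\subseteq\langOf{\mhautomaton(\program)}$, I would cite Lemma~\ref{Lemma:GeneratesAllNormalFormComputations} directly: it says exactly that every normal-form Power computation of degree $n$ is accepted by $\mhautomaton(\program)$. For the right inclusion, $\langOf{\mhautomaton(\program)}\subseteq\computationsOf{\program}{\power}$, I would cite Lemma~\ref{Lemma:GeneratesOnlyCorrectComputations}, which says every word accepted by $\mhautomaton(\program)$ is a feasible Power computation of $\program$. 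Composing the two gives the claimed two-sided inclusion, and this is all that is needed.

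Since the real content lives in the two invoked lemmas, I would add at most one sentence recalling what each contributes: Lemma~\ref{Lemma:GeneratesAllNormalFormComputations} is the completeness direction, proved by an induction maintaining the invariants \descref{CMPL-A}--\descref{CMPL-R} that track how a prefix of the target normal-form computation is simulated by a run of $\mhautomaton$; Lemma~\ref{Lemma:GeneratesOnlyCorrectComputations} is the soundness direction, proved by a dual induction maintaining \descref{SND-A}--\descref{SND-M}, which decompose an accepting run of $\mhautomaton$ into $n$ consistent partial Power runs that glue at the part boundaries and end in a Power-final state. Nothing beyond quoting these two results is required here.

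There is essentially no obstacle in this particular step — it is a one-line corollary. The only thing to double-check is that the degree parameter $n=\cardinalityOf{\program}+3$ used when instantiating $\mhautomaton(\program)$ matches the degree $n$ appearing in the set on the left-hand side, which it does by the definition of $\mhautomaton(\program)$ in Section~\ref{Section:Reduction:Generating}. With this lemma in hand, the next step of the development is to intersect $\mhautomaton(\program)$ with regular languages encoding the existence of a happens-before cycle (via Lemma~\ref{Lemma:ShashaSnir} and Theorem~\ref{Theorem:NormalFormComputationsAreEnough}) and appeal to Lemmas~\ref{Lemma:MultiheadedIntersection} and~\ref{Lemma:MultiheadedEmptinessComplexity} to obtain the $\pspace$ bound; but that is outside the scope of this statement.
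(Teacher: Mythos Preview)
Your proposal is correct and matches the paper's own proof exactly: the paper simply states that the lemma is a corollary of Lemmas~\ref{Lemma:GeneratesOnlyCorrectComputations} and~\ref{Lemma:GeneratesAllNormalFormComputations}. Your additional commentary on the roles of these two lemmas and the check on the degree parameter is accurate but goes beyond what the paper records for this step.
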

\begin{proof}
Corollary of Lemmas~\ref{Lemma:GeneratesOnlyCorrectComputations} and~\ref{Lemma:GeneratesAllNormalFormComputations}.
\qed
\end{proof}

\subsection{Checking Cyclicity of the Happens-Before Relation}

We call a happens-before cycle \emph{beautiful}, if it has the following form:
\begin{multline*}
(\tid_1,\anindex_1,\instruction_1)\programorder^*(\tid_1,\anindex_1',\instruction_1')\hoporder\ldots\\\hoporder(\tid_n,\anindex_n,\instruction_n)\programorder^*(\tid_n,\anindex_n',\instruction_n')\hoporder(\tid_1,\anindex_1,\instruction_1).
\end{multline*}
Here, $\hoporder:=(\coherenceorder\cup\sourceorder\cup\conflictorder)$ and $\tid_k\neq\tid_l$ for $k\neq l$.
We call $\theta:=\tid_1\ldots\tid_n$ the \emph{profile} of the cycle.

\begin{example}
The happens-before cycle shown in Figure~\ref{Figure:TraceMP} is beautiful.
\end{example}

\begin{lemma}[\cite{calin2013}]\label{Lemma:BeautifulCycleIsEnough}
A computation $\tau\in\computationsOf{\program}{\power}$ has a happens-before cycle iff it has a beautiful happens-before cycle.
\end{lemma}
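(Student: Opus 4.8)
The direction from right to left is immediate: a beautiful cycle is a cyclic subgraph of $\happensbefore=\programorder\cup\hoporder$, hence it witnesses a happens-before cycle. For the converse I plan an extremal argument. Regard a happens-before cycle of $\tau$ as a closed walk of positive length in the graph $(\nodes,\happensbefore)$, and among all such walks choose one, $C$, using the fewest $\hoporder$-edges and, among those, the fewest edges overall; the plan is to show that $C$ already has the beautiful shape.

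Two preliminary facts. First, $\programorder$ is acyclic --- a $\programorder$-edge fixes the thread and strictly increases the node's serial index, and $\programorder$ never links nodes of different threads --- so every happens-before cycle, and in particular $C$, contains at least one $\hoporder$-edge. Second, $C$ is simple: a repeated node would split $C$ into two strictly shorter closed walks partitioning the $\hoporder$-edges of $C$ between them; whichever of the two inherits an $\hoporder$-edge is again a happens-before cycle, shorter than $C$ and using no more $\hoporder$-edges, contradicting the choice of $C$.

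Next I cut $C$ at its $\hoporder$-edges, writing it cyclically as $f_1\pi_1 f_2\pi_2\cdots f_p\pi_p$ with $p\geq1$, each $f_i$ an $\hoporder$-edge and each $\pi_i$ a maximal (possibly empty) run of $\programorder$-edges; since such a run stays within one thread, $\pi_i$ is a $\programorder^*$-path inside some thread $\sigma_i$, running from a node $b_i$ to a node $a_{i+1}$ of non-decreasing serial index (indices read cyclically). The workhorse is \emph{short-circuiting}: whenever two nodes of $C$ lie in the same thread, the one of smaller serial index reaches the other by a $\programorder^*$-path, and substituting that path for the arc of $C$ joining them produces another closed walk. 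I will use this twice. \emph{(a) Every $f_i$ joins two distinct threads.} Otherwise $f_i$ links two nodes $u\neq v$ ($\hoporder$ has no self-loop) of one thread; if $u$ comes before $v$, replacing $f_i$ by $u\programorder^* v$ strictly lowers the $\hoporder$-count, impossible; if $v$ comes before $u$, then $f_i$ followed by $v\programorder^* u$ is a happens-before cycle with a single $\hoporder$-edge, so by minimality $C$ coincides with it, and that cycle is already beautiful with $n=1$. \emph{(b) The threads $\sigma_1,\ldots,\sigma_p$ are pairwise distinct.} If $\sigma_i=\sigma_j$ with $i<j$, then $b_i$ and $a_{j+1}$ lie in a common thread, so short-circuiting one of the two arcs of $C$ between them yields a closed walk still carrying between $1$ and $p-1$ of the original $\hoporder$-edges, contradicting minimality of $p$. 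Given (a) and (b), rotating $C$ to start at the beginning of some $\pi_i$ displays it directly in the beautiful form, with profile a cyclic rotation of $\sigma_1\cdots\sigma_p$.

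The step I expect to require the most care is making short-circuiting airtight: one has to check that the canonical $\programorder^*$-path between two same-thread nodes $x,y$ of $C$ consists of genuine trace nodes --- all intermediate serial indices lie between those of $x$ and $y$, hence inside that thread's list of fetched instructions --- and of genuine $\happensbefore$-edges, and that after the substitution one is left with a closed walk that still contains at least one $\hoporder$-edge, so that the minimality of $C$ is truly violated. The other point to keep in mind is that the beautiful-cycle definition admits the degenerate case $n=1$, namely a $\programorder^*$-path closed off by a single ``backward'' within-thread $\hoporder$-edge; this case is to be recognized as already beautiful rather than driven to a contradiction, exactly as in step (a).
\qed
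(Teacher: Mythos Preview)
The paper does not prove this lemma; it is imported wholesale from \cite{calin2013}, so there is no in-paper argument to compare against. Your extremal/short-circuiting proof is the standard one and is correct. Two minor remarks: in step~(a), the phrase ``by minimality $C$ coincides with it'' is justified only because $\programorder$ is a functional relation (each node has a unique $\programorder$-successor), so the $\programorder^*$-path from $v$ to $u$ is unique and hence equals $\pi_1$; you may want to say this explicitly. In step~(b), your simplicity argument actually shows that \emph{both} sub-walks carry at least one $\hoporder$-edge (each is a closed walk of positive length in an acyclic-$\programorder$ graph), which immediately gives a strict drop in the $\hoporder$-count for either choice; this is slightly cleaner than ``no more $\hoporder$-edges and shorter.'' Finally, the initial-store nodes $\initialstore_{\anaddr}$ have no incoming $\happensbefore$-edges, so they never lie on a cycle and your argument is unaffected by them.
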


Given a cycle profile $\theta$, we define the automaton $\mhautomaton'(\program,\theta)$ as a modification of $\mhautomaton(\program)$ that marks one event in each thread $\tid_j\in\theta$ by $\enterguess$ (identifying $(\tid_j,\anindex_j, *)$) and a later (or the same) event by $\leaveguess$ (identifying $(\tid_j,\anindex_j', *)$, $\anindex_j\leq\anindex_j'$).
Note that $\mhautomaton(\program)$ generates the events in program order, which ensures $(\tid_j,\anindex_j, *)\programorder^*(\tid_j,\anindex_j', *)$.
Technically, $\mhautomaton'(\program,\theta)$ introduces the following changes:
\begin{itemize}
\item The alphabet is $\events':=\events\times\powerset{\set{\enterguess,\leaveguess}}$.
\item The automaton generates only $\loadkind$ and $\propagatekind$ events, as only they are relevant for cycle detection.
\item The $\propagatekind$ events include $\coherencekey$ component of the corresponding $\commitkind$ event.
\end{itemize}

To check $(\tid_j,\anindex_j',*)\hoporder(\tid_{j+1},\anindex_{j+1},*)$, we use an intersection with a regular language $\hopautomaton^{\tid_j,\tid_{j+1}}$.
The language $\hopautomaton^{\tid_1,\tid_2}$ includes a computation $\tau$ iff one or more of the following conditions hold:
\begin{description}
\descitem{H-ST} $(\event_1,\marking_1),(\event_2,\marking_2)\in\tau$, $\leaveguess\in\marking_1$, $\enterguess\in\marking_2$, $\event_1=(\propagatekind,\tid_1,\tid_1,\coherencekey_1,\anaddr)$, $\event_2=(\propagatekind,\tid_2,\tid_2,\coherencekey_2,\anaddr)$, and $\coherencekey_1<\coherencekey_2$.
\descitem{H-SRC} $\tau=\tau_1\cdot(\event_1,\marking_1)\cdot\tau_2\cdot(\event_2,\marking_2)\cdot\tau_3$, $\leaveguess\in\marking_1$, $\enterguess\in\marking_2$, $\event_1=(\propagatekind,\tid_2,\tid_1,\anaddr)$, $\event_2=(\loadkind,\tid_2,\anaddr)$, $\tau_2$ does not contain events $(\propagatekind,\tid_2,*,\anaddr)$.
\descitem{H-CF1} $\tau=\tau_1\cdot(\event_3,\marking_3)\cdot\tau_2\cdot(\event_2,\marking_2)\cdot\tau_3$, $\leaveguess\in\marking_2$, $\event_3=(\propagatekind,\tid_1,\tid_3,\coherencekey_3,\anaddr)$, $\event_2=(\loadkind,\tid_1,\anaddr)$, $\tau_2$ does not contain events $(\propagatekind,\tid_1,*,*,\anaddr)$, $(\event_3,\marking_3)\in\tau_1\cdot\tau_2\cdot\tau_3$, $\marking_3\in\enterguess$, $\event_3=(\propagatekind,\tid_2,\tid_2,\coherencekey_2)$, $\coherencekey_3<\coherencekey_2$.
\descitem{H-CF2} $(\event_1,\marking_1),(\event_2,\marking_2)\in\tau$, $\enterguess\in\marking_1$, $\leaveguess\in\marking_2$, $\event_1=(\loadkind,\tid_1,\anaddr)$, $\event_2=(\propagatekind,\tid_2,\tid_2,\coherencekey_2,\anaddr)$ and there is no $(\event_3,\marking_3)\in\tau$ with $\event_3=(\propagatekind,\tid_3,\tid_3,\coherencekey_3,\anaddr)$ with $\coherencekey_3<\coherencekey_2$.
\end{description}

\begin{lemma}\label{Lemma:ReductionToLanguageEmptiness}
Program $\program$ has a beautiful cycle with profile $\theta=\tid_1\ldots\tid_n$ iff
\begin{equation*}
\mhautomaton'(\program,\theta)\cap\hopautomaton^{\tid_1,\tid_2}\cap\ldots\cap\hopautomaton^{\tid_n,\tid_1}\neq\emptyset.
\end{equation*}
\end{lemma}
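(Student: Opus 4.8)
The plan is to establish the two directions separately; in both, Lemma~\ref{Lemma:MultiheadedAutomatonDoesWhatWeWant} lets us move between normal-form Power computations of $\program$ and accepting runs of $\mhautomaton(\program)$, hence of its decoration $\mhautomaton'(\program,\theta)$, and the heart of the argument is a four-way case analysis matching $\hoporder$-edges of a trace with membership in the regular languages $\hopautomaton^{\tid_k,\tid_{k+1}}$. Here ``$\program$ has a beautiful cycle with profile $\theta$'' means that some $\tau\in\computationsOf{\program}{\power}$ has a beautiful happens-before cycle of profile $\theta$ in $\traceOf{\tau}$; by the trace-preserving reduction to normal form (Theorem~\ref{Theorem:NormalFormComputationsAreEnough} and Lemma~\ref{Lemma:ReshuffledComputationIsFeasibleAndHasTheSameTrace}) we may take $\tau$ to be in normal form of degree $\cardinalityOf{\program}+3$, the degree of $\mhautomaton(\program)$. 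Note $\theta$ has distinct entries and length $n\le\cardinalityOf{\program}$, so each profile thread owns a private $\enterguess$/$\leaveguess$ slot in $\mhautomaton'(\program,\theta)$, and every node of a beautiful cycle (having a $\hoporder$-incidence that is $\coherenceorder$, $\sourceorder$, or $\conflictorder$) is a load or a store.

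From right to left: pick a word $w$ in the intersection. It comes from an accepting run of $\mhautomaton'(\program,\theta)$; discarding the $\enterguess$/$\leaveguess$ markings and restoring the $\fetchkind$/$\commitkind$ events that $\mhautomaton'$ suppresses turns this into an accepting run of $\mhautomaton(\program)$, so by Lemma~\ref{Lemma:MultiheadedAutomatonDoesWhatWeWant} there is a normal-form $\tau\in\computationsOf{\program}{\power}$. Since $\mhautomaton(\program)$ emits each thread's events in program order, the marked events identify instructions $(\tid_k,\anindex_k)$ and $(\tid_k,\anindex_k')$ with $\anindex_k\le\anindex_k'$, giving $(\tid_k,\anindex_k)\programorder^*(\tid_k,\anindex_k')$ in $\traceOf{\tau}$. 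Because $w\in\hopautomaton^{\tid_k,\tid_{k+1}}$, one of \descref{H-ST}, \descref{H-SRC}, \descref{H-CF1}, \descref{H-CF2} holds; comparing each clause with the definitions of $\coherenceorder$ (ordered coherence keys), $\sourceorder$ (read off via Lemmas~\ref{Lemma:EarlyReadLooksLikeThis} and~\ref{Lemma:LoadFromMemoryLooksLikeThis}), and $\conflictorder$ (a witnessing store with a smaller key) yields $(\tid_k,\anindex_k')\hoporder(\tid_{k+1},\anindex_{k+1})$. Chaining these segments for $k=1,\dots,n$ cyclically exhibits a beautiful cycle of profile $\theta$ in $\traceOf{\tau}$.

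From left to right: let $\tau\in\computationsOf{\program}{\power}$ be in normal form of degree $\cardinalityOf{\program}+3$ with a beautiful cycle $(\tid_1,\anindex_1)\programorder^*(\tid_1,\anindex_1')\hoporder\cdots\hoporder(\tid_n,\anindex_n')\hoporder(\tid_1,\anindex_1)$ in $\traceOf{\tau}$. By Lemma~\ref{Lemma:GeneratesAllNormalFormComputations} some accepting run of $\mhautomaton(\program)$ produces $\tau$; decorate it to a run of $\mhautomaton'(\program,\theta)$ by attaching $\enterguess$ to a suitable generated event of $(\tid_k,\anindex_k)$ and $\leaveguess$ to a suitable generated event of $(\tid_k,\anindex_k')$ --- the choice governed by the kind of the incident $\hoporder$-edge (a $\loadkind$ event when the endpoint is a load, a $\propagatekind$ event carrying the coherence key when it is a store), which is legal since $\anindex_k\le\anindex_k'$, the $\tid_k$ are distinct, and the endpoints are loads or stores. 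The resulting word $w$ lies in $\langOf{\mhautomaton'(\program,\theta)}$. For each $k$ the edge $(\tid_k,\anindex_k')\hoporder(\tid_{k+1},\anindex_{k+1})$ is $\coherenceorder$, $\sourceorder$, or $\conflictorder$; unfolding its definition against $\tau$ --- coherence keys for $\coherenceorder$, Lemmas~\ref{Lemma:EarlyReadLooksLikeThis} and~\ref{Lemma:LoadFromMemoryLooksLikeThis} for $\sourceorder$, the witnessing store together with the key ordering that $\mhautomaton(\program)$ maintains for $\conflictorder$ --- matches exactly one of \descref{H-ST}, \descref{H-SRC}, \descref{H-CF1}, \descref{H-CF2}, so $w\in\hopautomaton^{\tid_k,\tid_{k+1}}$. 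Hence $w$ lies in the full intersection, which is nonempty.

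The main obstacle is this case analysis, and within it the conflict-order clauses \descref{H-CF1} and \descref{H-CF2}: since $\conflictorder$ is defined through an intermediate store $c$ with $c\sourceorder a$ and $c\coherenceorder b$, one must reason simultaneously about which store the load $a$ reads (again via Lemmas~\ref{Lemma:EarlyReadLooksLikeThis} and~\ref{Lemma:LoadFromMemoryLooksLikeThis}), about the position of the relevant $\propagatekind$ events, and about the ordering of the coherence keys of the stores involved, including the boundary case $c=\initialstore_{\anaddr}$, which is coherence-minimal and not witnessed by any $\propagatekind$-with-key event. A routine side point is that restricting $\mhautomaton'(\program,\theta)$ to $\loadkind$/$\propagatekind$ events loses nothing: every condition in the $\hopautomaton^{\tid_k,\tid_{k+1}}$ refers only to load and propagate events, and propagates have been enriched with coherence keys exactly for this use.
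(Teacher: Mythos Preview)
The paper states this lemma without proof, so there is nothing to compare against; your outline is the natural argument, and the right-to-left direction together with the four-way case analysis matching $\hoporder$-edges against \descref{H-ST}--\descref{H-CF2} is fine.

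There is, however, a genuine gap in the left-to-right direction. You write that ``by the trace-preserving reduction to normal form (Theorem~\ref{Theorem:NormalFormComputationsAreEnough} and Lemma~\ref{Lemma:ReshuffledComputationIsFeasibleAndHasTheSameTrace}) we may take $\tau$ to be in normal form''. But Lemma~\ref{Lemma:ReshuffledComputationIsFeasibleAndHasTheSameTrace} only preserves the trace of the \emph{shortest} computation $\alpha$ with cyclic $\happensbefore$: minimality of $\alpha$ is exactly what guarantees that $\alpha'=\alpha\setminus(\tid_{\deleted},\anindex_{\deleted})$ has acyclic $\happensbefore$, so that an SC witness $\beta$ with $\traceOf{\beta}=\traceOf{\alpha'}$ exists and the reshuffling that produces the normal-form $\gamma$ can be carried out. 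If the $\tau$ carrying your $\theta$-profile cycle is not this minimal computation, iterating Lemma~\ref{Lemma:Cancellation} down to the shortest one may delete instructions lying on the $\theta$-cycle; the normal-form computation one ends up with then has $\traceOf{\gamma}=\traceOf{\alpha}\neq\traceOf{\tau}$ and need not contain a beautiful cycle of profile $\theta$ at all---only of \emph{some} profile. Neither cited result supplies a trace-preserving normal-form reduction for an arbitrary $\tau$, so the step ``we may take $\tau$ in normal form'' is unjustified as written.

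The fix is easy, because Theorem~\ref{Theorem:PSpaceComplete} only needs the existentially quantified version ``$\program$ is non-robust iff the intersection is nonempty for some $\theta$''. For that, first invoke Theorem~\ref{Theorem:NormalFormComputationsAreEnough} to obtain a normal-form $\gamma$ with cyclic $\happensbefore$, then Lemma~\ref{Lemma:BeautifulCycleIsEnough} to extract a beautiful cycle in $\gamma$ with some profile $\theta'$, and finally run your left-to-right argument on $\gamma$, which already lies in $\langOf{\mhautomaton(\program)}$ by Lemma~\ref{Lemma:MultiheadedAutomatonDoesWhatWeWant}. Equivalently, start your forward direction from a normal-form witness rather than from an arbitrary $\tau$, or read ``$\program$ has a beautiful cycle with profile $\theta$'' as quantifying over normal-form computations.
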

Note that $\mhautomaton'(\program,\theta)$ is infinite-state.
To ensure $\mhautomaton'(\program,\theta)$ has finitely many states, we note that the instruction indices are irrelevant for the detection of happens-before cycles ($\minstrcount$ can be dropped), and that the number of different coherence keys that must be stored in the state at any moment is polynomial in the size of $\program$.
Indeed, the $\mlastkey$ and $\guessed{\mlastkey}$ components of the state each store at most $\cardinalityOf{\addrdomain}\cdot\cardinalityOf{\program}\cdot n$ different coherence keys.
Each modification of the $\mlastkey$ component of the state can be extended by a normalization step that would turn coherence keys to consecutive natural numbers starting from zero.
The normalization step must preserve the less-than relation on the keys.
In order for the detection of happens-before cycles to work correctly, the automaton has to remember the coherence keys of marked store events: they must be preserved during normalization.
Altogether, this results into $O(\cardinalityOf{\addrdomain}\cdot\cardinalityOf{\program}^2\cdot n)$ different keys, which is polynomial in the size of $\program$.

\begin{theorem}\label{Theorem:PSpaceComplete}
Robustness against Power is $\pspace$-complete.
\end{theorem}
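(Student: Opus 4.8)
The plan is to prove the two directions separately. For the \emph{upper bound}, set $n:=\cardinalityOf{\program}+3$. By Theorem~\ref{Theorem:NormalFormComputationsAreEnough}, $\program$ is not robust iff some normal-form computation of degree $n$ has a cyclic $\happensbefore$. By Lemma~\ref{Lemma:MultiheadedAutomatonDoesWhatWeWant} the normal-form computations of degree $n$ lie between $\langOf{\mhautomaton(\program)}$ and $\computationsOf{\program}{\power}$, so — invoking Lemma~\ref{Lemma:ShashaSnir} for the remaining direction — $\program$ is not robust iff some $\tau\in\langOf{\mhautomaton(\program)}$ has a cyclic $\happensbefore$. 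By Lemma~\ref{Lemma:BeautifulCycleIsEnough} this holds iff some such $\tau$ has a \emph{beautiful} cycle, and every beautiful cycle carries a profile $\theta=\tid_1\ldots\tid_k$ of at most $\cardinalityOf{\program}$ pairwise distinct thread ids. Hence, by Lemma~\ref{Lemma:ReductionToLanguageEmptiness}, $\program$ is not robust iff there is a profile $\theta$ with $\mhautomaton'(\program,\theta)\cap\hopautomaton^{\tid_1,\tid_2}\cap\cdots\cap\hopautomaton^{\tid_k,\tid_1}\neq\emptyset$. The algorithm therefore guesses such a $\theta$ (a string of at most $\cardinalityOf{\program}$ thread ids) and tests the intersection for non-emptiness.

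It remains to bound the cost of that test. As discussed just before the theorem, $\mhautomaton'(\program,\theta)$ can be turned into a finite automaton whose states have descriptions of size polynomial in $\cardinalityOf{\program}$: the counter $\minstrcount$ is discarded, and coherence keys are normalized to consecutive naturals, so at most $O(\cardinalityOf{\addrdomain}\cdot\cardinalityOf{\program}^2\cdot n)$ of them are ever stored, and its transition relation is checkable in polynomial space. Each $\hopautomaton^{\tid_j,\tid_{j+1}}$ is a finite monitor with polynomially many states. Iterating Lemma~\ref{Lemma:MultiheadedIntersection} at most $\cardinalityOf{\program}$ times yields an $n$-headed automaton $W$ whose number of states is at most exponential in $\cardinalityOf{\program}$ — so $\log\cardinalityOf{\statesOf{W}}$ is polynomial, i.e.\ its states again have polynomial-size descriptions — and whose transitions remain checkable in polynomial space. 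By Lemma~\ref{Lemma:MultiheadedEmptinessComplexity}, emptiness of $W$ is decidable in space logarithmic in $\cardinalityOf{\statesOf{W}}$; run on the fly, storing only the current state and never materializing $W$, this is a nondeterministic procedure using space polynomial in $\cardinalityOf{\program}$. Adding the outer guess of $\theta$, non-robustness is in nondeterministic polynomial space, hence in $\pspace$ by Savitch's theorem; as $\pspace$ is closed under complement, robustness is in $\pspace$.

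For the \emph{lower bound}, recall that SC control-state reachability for concurrent programs over a finite data domain is $\pspace$-hard (a standard fact, subsuming e.g.\ non-emptiness of the intersection of finite automata). Given such an instance, we construct, following~\cite{BMM11}, a program $\program'$ that under SC faithfully simulates it and, once the designated thread reaches the target control state, runs a small witness gadget together with an auxiliary thread. The gadget is arranged so that a $\happensbefore$ cycle can be closed in a Power computation of $\program'$ precisely when the target is reached, whereas no SC computation of $\program'$ ever has a cyclic $\happensbefore$; thus $\program'$ is not robust iff the target is SC-reachable. Since the construction is polynomial, robustness is $\pspace$-hard, and together with the upper bound $\pspace$-complete.

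The main obstacle is the complexity bookkeeping of the upper bound: every automaton in the chain must be kept with polynomially bounded state \emph{descriptions} despite the $\cardinalityOf{\statesOf{V}}^{2n}$ blow-up of Lemma~\ref{Lemma:MultiheadedIntersection} and the a priori unbounded rational coherence keys — which is exactly what the key-normalization remark preceding the theorem secures — and the emptiness test must be executed implicitly rather than on an explicit $W$. On the hardness side, the care lies in designing the gadget so that it creates a $\happensbefore$ cycle exactly when — and not before — the target control state is reached.
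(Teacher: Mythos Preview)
Your proposal is correct and follows essentially the same approach as the paper: the upper bound is obtained via the chain Theorem~\ref{Theorem:NormalFormComputationsAreEnough} $\to$ Lemma~\ref{Lemma:BeautifulCycleIsEnough} $\to$ Lemma~\ref{Lemma:ReductionToLanguageEmptiness}, followed by the intersection construction (Lemma~\ref{Lemma:MultiheadedIntersection}) and the emptiness test (Lemma~\ref{Lemma:MultiheadedEmptinessComplexity}), and the lower bound is the reduction from SC reachability with an inserted cycle gadget. Your write-up is in fact more explicit than the paper's on several points the paper leaves implicit --- the role of Lemma~\ref{Lemma:MultiheadedAutomatonDoesWhatWeWant} and Lemma~\ref{Lemma:ShashaSnir} in bridging normal-form computations to $\langOf{\mhautomaton(\program)}$, the on-the-fly execution of the emptiness test, and the appeals to Savitch and complement-closure --- but the underlying argument is the same.
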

\begin{proof}
By Theorem~\ref{Theorem:NormalFormComputationsAreEnough}, Lemma~\ref{Lemma:BeautifulCycleIsEnough}, and Lemma~\ref{Lemma:ReductionToLanguageEmptiness}, a program is non-robust iff the equation from Lemma~\ref{Lemma:ReductionToLanguageEmptiness} holds for some $\theta$.
In order to check robustness, we enumerate all profiles $\theta$ and check the equation from Lemma~\ref{Lemma:ReductionToLanguageEmptiness}.
The enumeration can be done in $\pspace$.
By construction and Lemma~\ref{Lemma:MultiheadedIntersection}, the size of the intersection automaton is exponential in the size of the program.
By Lemma~\ref{Lemma:MultiheadedEmptinessComplexity}, language emptiness for it can be checked in $\pspace$ in the size of the program, which gives us the upper bound.

The $\pspace$ lower bound follows from $\pspace$-hardness of SC state reachability.
One can reduce reachability to robustness by inserting an artificial happens-before cycle in the target state.
\qed
\end{proof}

\subsubsection{Acknowledgements.}
The authors thank Parosh Aziz Abdulla, Jade Alglave, Mohamed Faouzi Atig, Ahmed Bouajjani, and Carl Leonardsson for helpful discussions on the Power memory model and the anonymous reviewers for suggestions.
The first author was granted by the Competence Center High Performance Computing and Visualization (CC-HPC) of the Fraunhofer Institute for Industrial Mathematics (ITWM).
The work was partially supported by the DFG project R2M2: Robustness against Relaxed Memory Models.

\bibliographystyle{plain}
\bibliography{cited}

\end{document}